\theoremstyle{plain}
\newtheorem{thm}{Theorem}[section]
\newtheorem{lem}[thm]{Lemma}
\newtheorem{cor}[thm]{Corollary}
\newtheorem{pro}[thm]{Proposition}
\theoremstyle{definition}
\newtheorem{eg}[thm]{Example}
\newtheorem{defn}[thm]{Definition}
\newtheorem{remark}[thm]{Remark}
\newtheorem{notation}[thm]{Notation}
\newcommand{\Rmnum}[1]{\expandafter\@slowromancap\romannumeral #1@}
\newcommand{\cat}[1]{\boldsymbol{\mathscr{#1}}}
\newcommand{\NAE}{\mathsf{NAE}}
\newcommand{\diag}{\operatorname{diag}}
\DeclareSymbolFont{bbold}{U}{bbold}{m}{n}
\DeclareSymbolFontAlphabet{\mathbbold}{bbold}
\newcommand{\PWb}[1]{{\boldsymbol{\raise2pt\hbox{\Large$\wp$}}(#1)}}
\newcommand{\PW}[1]{{{\raise2pt\hbox{\Large$\wp$}}(#1)}}
\DeclareMathOperator{\NTriv}{NonTriv}
\DeclareMathOperator{\Con}{Con}
\DeclareMathOperator{\id}{id}
\DeclareMathOperator{\CSP}{CSP}
\DeclareMathOperator{\Cols}{Cols}
\DeclareMathOperator{\Inv}{Inv}
\DeclareMathOperator{\Pol}{Pol}
\DeclareMathOperator{\pPol}{pPol}
\DeclareMathOperator{\dom}{dom}
\DeclareMathOperator{\SEP}{SEP}
\DeclareMathOperator{\Robust}{Robust}
\DeclareMathOperator{\Prob}{Prob}
\DeclareMathOperator{\SAT}{SAT}
\DeclareMathOperator{\II}{II}
\DeclareMathOperator{\I}{I}
\DeclareMathOperator{\IV}{IV}
\DeclareMathOperator{\IE}{IE}
\DeclareMathOperator{\IL}{IL}
\DeclareMathOperator{\IN}{IN}
\DeclareMathOperator{\GAP}{GAP}
\DeclareMathOperator{\Y}{Y}
\DeclareMathOperator{\N}{N}
\DeclareMathOperator{\EQUIV}{EQUIV}
\DeclareMathOperator{\IMPL}{IMPL}
\newcommand{\chisub}[1]{{\raise .15em \hbox{$\chi$}}_{#1}}
\newcommand{\Rob}{\textup{Rob}}
\newcommand{\up}[1]{\textup{#1}}
\newcommand{\rest}[1]{{\upharpoonright}_{#1}}
\newcommand{\onto}{\twoheadrightarrow}
\newcommand{\dflat}{\flat\kern-1.4pt\flat}
\tikzset{%
%% ordered sets
element/.style={draw, shape=circle, fill=white, inner sep=1.4pt},
clone/.style={draw, shape=circle, fill=white, inner sep=0pt, minimum size=18pt},
order/.style={line width=0.5pt},
inclusion/.style={line width=0.5pt},
%% arrows
arrow/.style={->, line width=0.5pt, >=to},
map/.style={->, shorten >=5pt, shorten <=5pt, >=stealth'},
unary/.style={->, shorten >=2pt, shorten <=2pt, >=stealth'},
loopy/.style={->, shorten >=2pt, shorten <=2pt, >=stealth, min distance=150pt},
%% arrow labelling
auto}
\begin{document}

\title[Gap Theorems for Robust Satisfiability]{Gap Theorems for Robust Satisfiability: Boolean CSPs and Beyond$^*$
}\thanks{{}$^*$This article is a heavily expanded version of an extended abstract submitted to the 2016 ISAAC proceedings, and differs by the inclusion of full proofs and most results of Section~\ref{sec:algebra}.}

\author{Lucy Ham}
\address{Department of Mathematics and Statistics\\ La Trobe University\\ Victoria  3086\\
Australia} \email{leham@students.latrobe.edu.au}
%\thanks{The author would like to thank her PhD supervisors Brian Davey and Tomasz Kowalski, as well as Marcel Jackson for many useful discussions}

%\subjclass[2000]{Primary: ?????; Secondary: ?????}
\keywords{Constraint Satisfaction Problem, Robust satisfiability, Clone theory, Dichotomy, Trichotomy, Boolean}

\begin{abstract}
A computational problem exhibits a ``gap property" when there is no tractable boundary between two disjoint sets of instances.  We establish a Gap Trichotomy Theorem for a family of constraint problem variants, completely classifying the complexity of possible ${\bf NP}$-hard gaps in the case of Boolean domains.  As a consequence, we obtain a number of dichotomies for the complexity of specific variants of the constraint satisfaction problem: all are either polynomial-time tractable or $\mathbf{NP}$-complete. Schaefer's original dichotomy for $\SAT$ variants is a notable particular case.  

Universal algebraic methods have been central to recent efforts in classifying the complexity of constraint satisfaction problems.  A second contribution of the article is to develop aspects of the algebraic approach in the context of a number of variants of the constraint satisfaction problem.  In particular, this allows us to lift our results on Boolean domains to many templates on non-Boolean domains.
\end{abstract}

\maketitle

%%%%%%%%%%%%%%%%%%%%%%%%%%%%%%%%%%%%%%%%%%%%%%%%%%%%%%%%%%%%%%%%%%
%%%%%%%%%%%%%%%%%%%%%%%%%%%%%%%%%%%%%%%%%%%%%%%%%%%%%%%%%%%%%%%%%%
\section{Introduction}
Constraint Satisfaction Problems (CSPs) occur widely in practice, both as natural problems, and as an underlying framework for constraint programming; see a text such as Tsang~\cite{tsa}.
When the template is restricted to some fixed finite domain, these problems still cover many important practical problems as well as providing an important framework for theoretical considerations in computational complexity.
In the case of Boolean (2-element) domains, constraint problems coincide with the $\SAT$ variants examined by Schaefer~\cite{Schaef}. In his paper, Schaefer proved a famous dichotomy: he showed that the complexity of CSPs over a fixed Boolean constraint language is either decidable in polynomial time or is NP-complete. Since Schaefer's seminal contribution, there have been enormous advances toward a more general dichotomy for constraint satisfaction problems on non-Boolean domains. In~\cite{fedvar},  Feder and Vardi argue that fixed template CSPs emerge as the broadest natural class for which a dichotomy might hold and propose the well-known \emph{Dichotomy Conjecture}.  Numerous extensions of Schaefer's result are now known.  The broadest of these include the case of three-element domains (Bulatov~\cite{Bulatov:2006}), List Homomorphism Problems (Bulatov~\cite{Bulatov:2011}), and the case of directed graphs without sources and sinks (Barto, Kozik and Niven~\cite{BKN}).

In addition to direct extensions of Schaefer's results, many variants of constraint satisfaction problems have been shown to experience similar dichotomies, such as counting CSPs (Bulatov~\cite{Bulatov13}), balanced CSPs (Schnoor and Schnoor~\cite{Schnoor2}) and equivalence problems (B{\"{o}}hler, Hemaspaandra, Reith and Vollmer~\cite{Bohl}).  In the present work, we explore computational complexity for variants of the CSP where instead of asking for the existence of a single solution, one asks for enough solutions to witness a range of conditions. We focus in particular on \emph{separability} and \emph{robust satisfiability}. The separation problem $\SEP$ asks if it is true that, for every pair of distinct variables $u$ and $v$, there is a solution giving $u$ a different value to $v$.  The $(k, {\mathcal F})$-robust satisfiability problem asks if every compatible partial assignment on $k$ variables extends to a full solution.  As explained in Jackson~\cite{JACK}, the $\SEP$ condition arises naturally in universal algebraic considerations, and is also closely related to problems without a backbone: problems (typically SAT variants) where no variable is forced to take some fixed value. Implicit constraints such as these are widely associated with computational difficulty; see Monasson et al.~\cite{MZKST}, Beacham and Culberson \cite{BeachamCulberson05} and Culberson and Gent~\cite{CulbersonGent01}. In the language of~\cite{BeachamCulberson05} for example, the $\SEP$ condition corresponds to the ``unfrozenness of equality''.  Robust constraint problems have been studied in a number of different contexts, including Beacham~\cite{beac} and Gottlob~\cite{gott}, with the general concept formulated in Abramsky, Gottlob and Kolaitis~\cite{AGK}. The $(k,{\mathcal F})$-robustness condition is an extension of the robustness condition of \cite{AGK}. 

A remarkable result of Gottlob~\cite{gott} states that for any $k$, the decision problem ($3k+3$)$\SAT$ experiences a striking ``gap property": subject to $\mathbf{P}\not= {\bf NP}$, there is no polynomial-time decidable class of instances containing those for which every possible partial assignment on $k$ variables extends to a satisfying solution and disjoint from the set of NO instances of the corresponding $\CSP$.  In fact, every such class is $\mathbf{NP}$-hard.  Abramsky, Gottlob and Kolaitis~\cite{AGK} and then Jackson~\cite{JACK} showed that gaps of this style are also to be found for some other well-known $\mathbf{NP}$-complete problems, including $3 \SAT$, $\mathsf{G}3\mathsf{C}$, $\NAE3\SAT$, and positive $1$-in-$3$ $\SAT$.  We investigate computational gap theorems in the Boolean case, establishing a Gap Trichotomy Theorem (Theorem \ref{thm:mainGAP}) that provides dichotomies for an entire family of computational problems, including $\SEP$ and $(2,\mathcal{F})$-robust satisfiability, with all intractable cases exhibiting a gap property. In particular, we completely characterise the tractability of $\SEP$ and $(2,\mathcal{F})$-robust satisfiability in the case of Boolean domains, as well recovering Schaefer's Dichotomy Theorem in case of core relational structures. 

%, including many of the $\CSP$ variants just listed.   In particular

A pivotal development in the classification of fixed template $\CSP$ complexity, including the aforementioned extensions of Schaefer's dichotomy for Boolean CSPs, was the introduction of universal algebraic methods; starting with the work of Jeavons \cite{Jeav98}, Jeavons, Cohen, Gyssens \cite{JCG97}, with the full framework presented in Bulatov, Jeavons, Krokhin \cite{BJK}.  The algebraic method concerns the analysis of ``polymorphisms'' (see Definition \ref{defn:poly} below) of the template.  For some computational problems, polymorphism analysis appears too coarse; this is true for $\SEP$ and robust satisfiability, as well as other variants of the $\CSP$ considered in \cite{Bohl} and \cite{Schnoor2}.  In these cases, it is necessary to move to partial polymorphisms.  We make steps toward a more extensive algebraic approach for constraint-related problems amenable to partial polymorphism analysis, by showing that basic universal algebraic methods developed for $\CSP$s can also be established in this setting.  In particular, we give theorems in the context of a number of variants of the $\CSP$, including $\SEP$ and $(k,\mathcal{F})$-robust satisfiability, which serve to lift hardness results from Boolean domains to non-Boolean domains.

%relating variety-like properties of the partial polymorphism algebra to the complexity of the constraint-related problem at hand. 

\subsection{Organisation of article}
The present article is comprised of eight sections: Section~\ref{sec:SepRob} introduces four computational problems that will be of primary interest. Section~\ref{sec:Back} covers the minimum required background for the full statement of our first main result.  In Section~\ref{sec:mainresults}, we give the necessary notion of a gap property and present the Gap Trichotomy Theorem, as well as numerous dichotomy theorems that are immediate consequences. Section~\ref{sec:weakbase} covers preliminary concepts required for the main arguments; these relate to weak co-clones and strong partial clones.  Sections~\ref{sec:Gaps} and \ref{sec:tractable} are dedicated to establishing the Gap Trichotomy Theorem; more specifically, Section~\ref{sec:Gaps}  deals with the intractable cases, and Section \ref{sec:tractable} with the tractable cases.   In Section~\ref{sec:algebra}, we give further development of the algebraic approach to the study of constraint-related problems amenable to partial polymorphism analysis and obtain results for the non-Boolean case. 

%we establish theorems in the context of a number of variants of the $\CSP$ that are analogous to some of the main contributions in Bulatov et al.~\cite{BJK}.  The first of our results links the complexity of a finite partial algebra with the complexity of its subalgebras and homomorphic images and the second allows to further restrict to idempotent partial polymorphisms., by relating variety-like properties of the partial polymorphism algebra to the complexity of the constraint-related problem at hand. we investigate how algebraic methods for studying $\CSP$ complexity can be obtained computational problems amenable to partial polymorphism analysis. 

\section{Separation and robust satisfiability}\label{sec:SepRob}
We begin by introducing four computational problems that will be of primary focus in the present article.  
\begin{defn}
Let $\Gamma$ be a set of relation symbols, each with an associated finite arity.  A \emph{template} is a pair ${\mathbb A}=\langle A;\Gamma^{\mathbb A} \rangle$ consisting of a \emph{finite} set $A$ together with an interpretation of each $n$-ary relation symbol $r\in \Gamma$ as a subset $r^{\mathbb A}$ of ${A}^n$. The set $\Gamma^{\mathbb A}=\{r^{\mathbb A}\ |\ r\in \Gamma\}$ is often referred to as a \emph{constraint language over domain} $A$.  
\end{defn}
When the context is clear, we will blur the distinction between a constraint language and the corresponding set of relation symbols by omitting the superscript~${\mathbb A}$.

\begin{defn}
Let  ${\mathbb A}=\langle A;\Gamma^{\mathbb A} \rangle$  be a template.  We define a \emph{constraint instance for $\Gamma^{\mathbb A}$} or a \emph{$\Gamma^{\mathbb A}$-instance} to be a triple $I=(V;A;{\mathcal C})$ consisting of 
\begin{itemize}
\item a set of variables $V$, 
\item the domain set $A$, and 
\item a set of constraints $\mathcal C$.  
\end{itemize}
Each constraint $c \in {\mathcal C}$ is a pair $\langle s, r^{\mathbb A} \rangle$, where $r^{\mathbb A}$ is a $k$-ary relation in ${\Gamma}^{\mathbb A}$ and $s=(v_1, \dots, v_k)$ is a $k$-tuple involving variables from $V$. We define a \emph{solution} of $I$ to be any assignment $\phi \colon V \to A$ satisfying $(\phi(v_1), \dots, \phi(v_k))\in r^{\mathbb A}$, for each $c=\langle (v_1, \dots, v_k), r^{\mathbb A}\rangle$ in ${\mathcal C}$.  
\\[1ex]
\noindent\fbox{
\parbox{0.95\textwidth}{\emph{Constraint satisfaction problem $\CSP({\mathbb A})$ over template ${\mathbb A}$.}\\
Instance: a $\Gamma^{\mathbb A}$-instance $I$.\\
Question: is there a solution of $I$?
}

}\\[1ex] 
\noindent\fbox{
\parbox{0.95\textwidth}{\emph{Nontrivial satisfaction problem $\CSP_{\NTriv}({\mathbb A})$ over template ${\mathbb A}$.}\\
Instance: a $\Gamma^{\mathbb A}$-instance $I$.\\
Question: is there a nonconstant solution of $I$?
}
}\\[1ex] 
\noindent\fbox{
\parbox{0.95\textwidth}{\emph{Separation problem $\SEP({\mathbb A})$ over template ${\mathbb A}$}.\\
Instance: a $\Gamma^{\mathbb A}$-instance $I$.\\
Question: for every pair $\{v_1, v_2\}$ of distinct variables in $V$, is there is a solution $\phi\colon V \to A$ of $I$ such that $\phi(v_1)\not= \phi(v_2)$?
}
}

\end{defn}
Note that every YES instance of a $\SEP$ problem is a YES instance of the corresponding $\CSP$ problem. 

From an algebraic perspective, a constraint satisfaction problem over template $\mathbb A=\langle A;\Gamma^{\mathbb A} \rangle$ coincides with the homomorphism problem for the relational structure~${\mathbb A}$: an instance $I=(V;A;{\mathcal C})$ becomes a $\Gamma$-structure ${\mathbb I}=\langle V; \Gamma^{\mathbb I}\rangle$, by letting $r^{\mathbb I}=\{(v_1, \dots, v_k)\ |\ \langle (v_1, \dots, v_k), r^{\mathbb A}\rangle\in{\mathcal C}\}$, for each $k$-ary relation $r\in\Gamma$. If we let $\CSP(\mathbb A)$ be the class of all finite $\Gamma$-structures admitting a homomorphism into ${\mathbb A}$, then the constraint satisfaction problem over ${\mathbb A}$ is equivalent to deciding membership in this class. In a similar way, we can view a separation problem $\SEP({\mathbb A})$ as deciding membership in the class $\mathsf{RP}({\mathbb A})$, which consists of all finite $\Gamma$-structures for which there is an injective homomorphism into a finite direct power of ${\mathbb A}$.  See Jackson~\cite[\S 1]{JACK} for a more detailed discussion.

Before we can introduce the fourth computational problem of interest, we require some definitions. 
\begin{defn}
Let $R$ be a set of finitary relation symbols and let $X=\{x_i\ |\ i\in I\}$ be a set of pairwise distinct variables. 
A formula in the language of  $R$ is called an \emph{atomic formula} if it takes one of the following two forms: 
\begin{itemize}
\item $x_i = x_j$ for some $x_i, x_j\in V$;
\item $(x_1, \dots, x_k)\in r$, for some $k$-ary relation symbol $r\in R$ and some variables $x_1, \dots, x_k\in V$. 
\end{itemize}
We refer to formulas of the second form \emph{$R$-constraints} or \emph{$R$-tuples}. 
A formula in the language of $R$ is called a \emph{primitive-positive formula} (abbreviated to pp-formula) if, for some $l\in {\mathbb N}_0$ and $m,n\in {\mathbb N}$, it is of the form:
\[
(\exists w_1, \dots , w_\ell) \bigwedge_{i=1}^m \alpha_i( x_1, \dots, x_k, w_1, \dots , w_\ell), \tag{$\dagger$}\label{eqn:pp}
\]
where $w_1, \dots , w_\ell, x_1, \dots, x_k$ are distinct variables, and $\alpha_i( x_1, \dots, x_k, w_1, \dots , w_\ell)$ is an atomic formula with variables amongst $\{ x_1, \dots, x_k, w_1, \dots , w_\ell\}$, for each $i\in \{1, \dots m\}$.  A quantifier-free pp-formula (corresponding to $\ell=0$) is called a \emph{conjunct-atomic formula}, and called a \emph{equality-free conjunct-atomic formula} if it is a conjunct-atomic formula involving only $R$-constraints. 
\end{defn}

\begin{defn}\label{defn:rho}
Let ${\mathbb A}=\langle A;\Gamma^{\mathbb A} \rangle$ be a template.  Let $k \in {\mathbb N}$, let $x_1, \dots, x_{k}$ be pairwise distinct variables and let $\rho(x_1, \dots, x_{k})$ be a pp-formula in the language $\Gamma$. 
The formula $\rho$ defines a $k$-ary relation $r_{\rho}^{\mathbb A}$ on $A$:
\[
(a_1, \dots, a_{k})\in r_{\rho}^{\mathbb A} \iff \rho(a_1, \dots, a_{k})\ \text{is true in ${\mathbb A}$}.  
\]
Definition~\ref{defn:rho} extends to $\Gamma^{\mathbb A}$-instances $(V;A;\mathcal{C})$ in the following way: include the constraint $\langle (v_1,\dots,v_k),r_\rho^{\mathbb A}\rangle$ if $\rho(v_1, \dots, v_{k})$ is true in $(V;A;\mathcal{C})$, that is (borrowing the notation of \eqref{eqn:pp}), if there exist variables $w_1,\dots,w_{\ell}$ in $V$ such that $\langle(v_1, \dots, v_k, w_1, \dots , w_\ell),\alpha_i^{\mathbb A}\rangle\in\mathcal{C}$ for each $i\in\{1,\dots,m\}$.  
We refer to $\langle(v_1,\dots,v_k),r_\rho^{\mathbb A}\rangle$ as the \emph{constraint defined by $\rho$ from ${\mathcal C}$}. 

Let ${\mathcal F}$ be a finite set of pp-formul{\ae} in the language of ${\Gamma}$.  We define the following constraint language over $A$: 
\[ 
\Gamma^{\mathbb A}_{\mathcal F}:=\{r_{\rho}^{\mathbb A}\ |\ \rho\in {\mathcal F}\}.
\]
We let ${\mathbb A}_{\mathcal F}=\langle A;\Gamma^{\mathbb A}_{\mathcal F} \rangle$ and for any instance $(V;A;\mathcal{C})$ of $\Gamma^{\mathbb A}$ we let $\mathcal{C}_\mathcal{F}$ be the constraints defined by $\mathcal{F}$ from $\mathcal{C}$. We sometimes refer to the constraints in $\mathcal{C}_\mathcal{F}$ as \emph{$\mathcal{F}$-constraints}. 
\end{defn}

\begin{defn}
Let $\Gamma^{\mathbb A}$ be a constraint language over a finite set $A$, let $\mathcal{F}$ be a finite set of pp-formul{\ae} in the language of $\Gamma$ and let $(V;A;{\mathcal C})$ be a $\Gamma^{\mathbb A}$-instance. For a subset $S\subseteq V$, we say that an assignment $f\colon S\to A$ is \emph{${\mathcal F}$-compatible} if it preserves all constraints in $\mathcal{C}_\mathcal{F}$ on tuples from $S$, equivalently, if $f$ is a solution of $(S;A;\mathcal{C}_{\mathcal F})$ in $\CSP({\mathbb A}_{\mathcal F})$. 
\end{defn}
In other words, if for some $\rho(x_1,\dots,x_k)\in \mathcal{F}$ and some tuple $(s_1,\dots,s_k)\in S^k$ the formula $\rho(s_1,\dots,s_k)$ is true in $(V;A;{\mathcal C})$, then $\rho(f(s_1),\dots,f(s_k))$ must be true in $\mathbb{A}$.
In the following, we let $k$ be a nonnegative integer and  ${\mathcal F}$ be a finite set of pp-formul{\ae} in the language of $\Gamma$. 
\\[1ex]
\noindent\fbox{
\parbox{0.95\textwidth}{The \emph{$(k,{\mathcal F})$-robust satisfiability problem $(k,{\mathcal F})$-$\Robust({\mathbb A})$ over template ${\mathbb A}$.}\\
Instance: a $\Gamma^{\mathbb A}$-instance $I$.\\
Question: does every ${\mathcal F}$-compatible assignment on $k$ variables extend to a solution of~$I$?
}
}

\rule{0cm}{0.4cm}

In the case where ${\mathcal F}$ consists of the set of pp-formul{\ae} defining all projections of relations in $\Gamma^{\mathbb A}$, the notion of ${\mathcal F}$-compatibility has been called ``local compatibility'' and $(k,{\mathcal F})$-robust satisfiability called ``$k$-robust satisfiability'', see Abramsky, Gottlob and Kolaitis~\cite[\S 2]{AGK}, or~\cite[\S 3]{JACK} for example.  When ${\mathcal F}$ consists of these formul{\ae}, we will use the abbreviated notation $k$-$\Robust({\mathbb A})$ to refer to the $(k,{\mathcal F})$-robust satisfiability problem over ${\mathbb A}$.  In \cite[Lemma~$3.1$]{JACK}, Jackson proposes that ${\mathcal F}$-compatibility is a natural local compatibility condition.

We are primarily concerned with computational problems over finite constraint languages, however it will be useful to have a notion of tractability for constraint problems over infinite languages (over finite domains).  The following definition is in line with Bulatov, Jeavons and Krokhin~\cite[Definition~2.7]{BJK}, though we use ``locally tractable'' in preference to ``tractable''.  
\begin{defn}
Let $A$ be a finite non-empty set and let ${\Gamma}$ be a set of relations on~$A$, and let $\Prob$ be any of the computational problems $\CSP$, $\CSP_{\NTriv}$, $\SEP$, or $(k, {\mathcal F})$-$\Robust$. 

\begin{itemize}
\item If ${\Gamma}$ is an infinite set, then we say that $\Prob(\langle A;{\Gamma}\rangle)$ is \emph{locally tractable} if, for every finite subset $R$ of $\Gamma$, the problem $\Prob(\langle A;R\rangle)$ is tractable. 
\item The problem $\Prob(\langle A; {\Gamma}\rangle)$ is $\bf{NP}$-complete if there is a finite subset $S\subseteq {\Gamma}$ for which $\Prob(\langle A; S\rangle)$ is $\bf{NP}$-complete. 
\end{itemize}
\end{defn}
We will sometimes abbreviate $\Prob(\langle A; {\Gamma}\rangle)$ to $\Prob({\Gamma})$.

\section{Background: Post's lattice}\label{sec:Back}
The precise boundaries given in the Gap Trichotomy Theorem are more easily expressed using the language clone of theory.  In this section, we introduce background material required to state the main results in full detail.  The reader familiar with clone theory may skip this section and proceed to the main results in Section~\ref{sec:mainresults}. 

\begin{defn}
Let $A$ be a non-empty set.  For $n\in{\mathbb N}$, we define an \emph{operation} of arity $n$ on $A$ to be a map $f\colon A^n\to A$.   We define 
\begin{itemize}
\item[${\mathcal R}_A$] to be the set of all non-empty, non-nullary finitary relations on $A$\textup, and
\item[${\mathcal O}_A$] to be the set of all non-empty, non-nullary finitary operations on~$A$. 
\end{itemize}
\end{defn}

\begin{defn}\label{defn:poly}
Let $A$ be a non-empty set, let $n, k\in {\mathbb N}$, let $f\colon A^n \to A$ be a $n$-ary operation and let $r$ be a $k$-ary relation on the set $A$.  We say that $f$ \emph{preserves} $r$ or $r$ is \emph{invariant under} $f$ or $f$ is a \emph{polymorphism} of $r$, if whenever $a_1=(a_{11}, \dots, a_{1n}), a_2=(a_{21}, \dots, a_{2n}), \dots, a_k=(a_{k1}, \dots, a_{kn})$ are tuples in $A^n$, then
\[ 
\left(\forall i\in \{1, \dots n\}\ (a_{1i}, a_{2i}, \dots a_{ki})\in r\right) \implies (f(a_1), f(a_2), \dots, f(a_k))\in r.
\]
If $F$ is set of operations then we say that $r$ is invariant under $F$ if $r$ is invariant under every operation in $F$. 
\end{defn}

\begin{defn}
Let $A$ be a non-empty set and let ${\mathcal C}\subseteq {\mathcal O}_A$.  Then ${\mathcal C}$ is a \emph{clone} on the set $A$ if the following two conditions hold:
\begin{enumerate}[\quad \rm(1)]
\item $\mathcal C$ contains all projection operations: that is, for all $n\in {\mathbb N}$, the $i$th projection $\pi_i\colon A^n \to A$ given by $\pi_i(x_1, \dots, x_n)=x_i$ belongs to ${\mathcal C}$\textup;
\item $\mathcal C$ is closed under compositions. 
\end{enumerate}
For a set $F$ of total operations, we can define $[F]$ to be the smallest clone containing $F$ and we refer to $[F]$ as the \emph{clone generated by} $F$.  The set $F$ is sometimes called a \emph{base} for the clone $[F]$. 
\end{defn}

\begin{defn}
Let $A$ be a non-empty set.  A subset $R$ of $ {\mathcal R}_A$ is called a \emph{co-clone} or \emph{relational clone} if it is closed under the formation of pp-definable relations.   We can define $\langle R \rangle$ to be the smallest co-clone containing $R$ and we refer to $\langle R \rangle$ as the \emph{co-clone generated by} $R$.  The set $R$ is sometimes called a \emph{base} for $\langle R \rangle$. 
\end{defn}
Recall that for any non-empty set $A$, the set of all clones (co-clones) on $A$ ordered by set inclusion forms an algebraic intersection structure. 

The sets $\PW{\mathcal O_A}$ and $\PW{\mathcal R_A}$ are complete lattices, where $\PW{-}$ is the powerset operator.  The following well-known result was first observed by Geiger \cite{Geiger}. 
\begin{thm}[\cite{Geiger}]\label{thm:Geiger}
Let $A$ be a non-empty set.  The following pair of maps $\Inv\colon \PW{\mathcal O_A} \to \PW{\mathcal R_A}$ and $\Pol\colon\PW{\mathcal R_A} \to  \PW{\mathcal P_A}$ form a Galois connection  between $\PW{\mathcal O_A}$ and $\PW{\mathcal R_A}$. 
\begin{align*}
& \Inv(F):=\{r\in \mathcal R_A \ |\ \text{r is invariant under each}\ f\in F\}\ \text{and}\\
&\Pol(R):=\{f \in \mathcal O_A \ |\ \text{f preserves each}\ r\in R\},
\end{align*}
for each $F\subseteq \mathcal O_A$ and each $R\subseteq \mathcal R_A$. 
 \end{thm}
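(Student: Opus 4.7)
The plan is to verify the standard biconditional characterisation of an antitone Galois connection: for every $F\subseteq \mathcal O_A$ and every $R\subseteq \mathcal R_A$,
\[
F\subseteq \Pol(R) \iff R\subseteq \Inv(F).
\]
Once this is established, the usual consequences follow formally (both maps are order-reversing, $F\subseteq \Pol(\Inv(F))$, $R\subseteq \Inv(\Pol(R))$, the compositions $\Pol\circ\Inv$ and $\Inv\circ\Pol$ are closure operators, and restricted to their closed sets the two maps are mutually inverse antitone bijections).

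The key observation is that the definition of ``$f$ preserves $r$'' in Definition~\ref{defn:poly} is a binary relation $P \subseteq \mathcal O_A\times \mathcal R_A$, and both $\Inv$ and $\Pol$ are simply the two derived operators associated with this relation, namely
\[
\Inv(F)=\{\,r\in\mathcal R_A : (\forall f\in F)\ (f,r)\in P\,\},\qquad
\Pol(R)=\{\,f\in\mathcal O_A : (\forall r\in R)\ (f,r)\in P\,\}.
\]
It is an elementary and entirely general fact that the pair of such operators associated to \emph{any} binary relation between two sets forms an antitone Galois connection between the corresponding powerset lattices; this is the only content being invoked.

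To make the argument concrete: assume $F\subseteq \Pol(R)$ and let $r\in R$. Then each $f\in F$ preserves $r$ (since $f\in \Pol(R)$ preserves every element of $R$), which is precisely the statement that $r$ is invariant under every member of $F$, i.e.\ $r\in \Inv(F)$. Hence $R\subseteq \Inv(F)$. The reverse implication is obtained by exchanging the roles of $f$ and $r$ in the same sentence: both directions unwind to the manifestly symmetric condition
\[
(\forall f\in F)(\forall r\in R)\quad f \text{ preserves } r.
\]
From this biconditional, order-reversal and the inclusions $F\subseteq \Pol(\Inv(F))$ and $R\subseteq \Inv(\Pol(R))$ are immediate by taking $R=\Inv(F)$ and $F=\Pol(R)$ respectively.

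There is no genuine obstacle in the argument as stated, since the Galois-connection claim is purely formal once one recognises $\Inv$ and $\Pol$ as the two polarities of the preservation relation. The substantive part of Geiger's original theorem, which identifies the Galois-closed sets on each side (finitely generated co-clones on the relational side, and clones on the operational side, when $A$ is finite), is \emph{not} asserted here and so requires no proof at this point.
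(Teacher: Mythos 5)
Your proof is correct. The paper itself gives no proof of this statement --- it is quoted from Geiger's paper as a known result --- so there is nothing internal to compare against; but your argument is exactly the right one: $\Inv$ and $\Pol$ are the two polarities of the single preservation relation $P\subseteq\mathcal O_A\times\mathcal R_A$, and the biconditional $F\subseteq\Pol(R)\iff R\subseteq\Inv(F)$ holds because both sides unwind to the symmetric condition that every $f\in F$ preserves every $r\in R$. You are also right to flag that the genuinely nontrivial content of Geiger's theorem --- the identification of the Galois-closed sets as clones and relational clones (which the paper lists separately as ``consequences'' of the theorem, e.g.\ $\Pol(\Inv(F))=[F]$ and $\Inv(\Pol(\Gamma))=\langle\Gamma\rangle$) --- is not part of the statement being proved and does require a real argument elsewhere. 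One cosmetic remark: the codomain of $\Pol$ in the statement is written as $\PW{\mathcal P_A}$, which is evidently a typo for $\PW{\mathcal O_A}$; your proof implicitly and correctly treats $\Pol(R)$ as a set of total operations.
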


We list some useful consequences of Theorem~\ref{thm:Geiger}.
\begin{itemize}
\item  $\Pol(\Inv(F))=[F], $for all $F\subseteq \mathcal{O}_A$.
\item $\Inv(\Pol(\Gamma))=\langle \Gamma \rangle$, for all $\Gamma\subseteq \mathcal{ R}_A$.
\item The lattice of all clones on $A$ is dually isomorphic to the lattice of all co-clones on $A$.  
\item Every clone ${\mathcal C}$ corresponds to a unique co-clone $\Inv({\mathcal C})$. 
\end{itemize}
Note that large clones correspond to small co-clones. 

A characterisation of all clones on the two-element set $\{0,1\}$ was given by Post~\cite{post} and the lattice of these ``Boolean clones'' is usually called \emph{Post's Lattice}. In fact, there is a countable infinity of clones (co-clones) in the Boolean case.   An up-set of the Boolean co-clone lattice is given in Figure~\ref{clo:dualpost}; the table included gives definitions of the shaded vertices in terms of relations invariant under basic operations.  The operations $c_0$ and $c_1$ are the constant unary functions to $0$ and~$1$, respectively, and $\neg$ is the usual negation operation on $\{0,1\}$.  The shaded co-clones play a key role in formulating the main results; see Section~\ref{sec:mainresults} for a full discussion.  We refer the reader to Schnoor~\cite[Table~$3.1$]{ISch} for a full list of bases and definitions of all Boolean co-clones.

\begin{figure}
\hspace*{-6.5cm}
\rotatebox{180}{\begin{tikzpicture}[scale=0.9] 
\footnotesize
\begin{scope}[xshift=2.1cm]

\node[draw,align=left,fill=cyan] at (2,0.7) {\rotatebox{180}{$\GAP(\Y_{{\SEP}\cap(2, {\mathcal F})}, \N_{\CSP})$}};
\node[draw,align=left,fill=green] at (2,0) {\rotatebox{180}{$\GAP(\Y_{{\SEP}\cap(2, {\mathcal F})}, \N_{\NTriv})$}};
\end{scope}

\begin{scope}
\node[clone, fill=cyan] (S) at (0,0) {\rotatebox{180}{$\II_2$}};
\node[clone, fill=green] (S0) at ($(S)+(35:1)$) {\rotatebox{180}{$\II_0$}};
\node[clone, fill=green] (S1) at ($(S)+(145:1)$) {\rotatebox{180}{$\II_1$}};
\node[clone, fill=green] (S2) at ($(S1)+(35:1)$) {\rotatebox{180}{$\II$}};
\draw[inclusion] (S) to (S0);
\draw[inclusion] (S) to (S1);
\draw[inclusion] (S0) to (S2);
\draw[inclusion] (S1) to (S2);
\end{scope}

\begin{scope}[yshift=2.4cm]
\node[clone, fill=cyan] (Sp) at (0,0.15) {\rotatebox{180}{$\IN_2$}};
\node[clone, fill=green] (Sp0) at ($(Sp)+(0,0.75)$) {\rotatebox{180}{$\IN$}};
\draw[inclusion] (Sp) to (Sp0);

\draw[inclusion, bend angle=32] (S) to [bend left] (Sp);
\draw[inclusion, bend angle=30] (S2) to [bend left] (Sp0);
\end{scope}

\begin{scope}[yshift=4cm, xshift=-2.5cm]
\node[clone] (J) at (0,0){\rotatebox{180} {$\IV_2$}};
\node[clone] (J0) at ($(J)+(40:1)$){\rotatebox{180} {$\IV_0$}};
\node[clone] (J1) at ($(J)+(140:1)$) {\rotatebox{180}{$\IV_1$}};
\node[clone] (J2) at ($(J1)+(40:1)$){\rotatebox{180} {$\IV$}};
\draw[inclusion] (J) to (J0);
\draw[inclusion] (J) to (J1);
\draw[inclusion] (J0) to (J2);
\draw[inclusion] (J1) to (J2);

\draw[inclusion] (S) to (J);
\draw[inclusion] (S0) to (J0);
\draw[inclusion] (S1) to (J1);
\draw[inclusion] (S2) to (J2);
\end{scope}

\begin{scope}[yshift=4cm, xshift=2.5cm]
\node[clone] (M) at (0,0){\rotatebox{180} {$\IE_2$}};
\node[clone] (M0) at ($(M)+(40:1)$) {\rotatebox{180}{$\IE_0$}};
\node[clone] (M1) at ($(M)+(140:1)$) {\rotatebox{180}{$\IE_1$}};
\node[clone] (M2) at ($(M1)+(40:1)$) {\rotatebox{180}{$\IE$}};
\draw[inclusion] (M) to (M0);
\draw[inclusion] (M) to (M1);
\draw[inclusion] (M0) to (M2);
\draw[inclusion] (M1) to (M2);

\draw[inclusion] (S) to (M);
\draw[inclusion] (S0) to (M0);
\draw[inclusion] (S1) to (M1);
\draw[inclusion] (S2) to (M2);
\end{scope}

\begin{scope}[yshift=4cm]
\node[clone] (G3) at (0,0){\rotatebox{180}{$\IL_2$}};
\node[clone] (Gd) at ($(G3)+(140:1)$){\rotatebox{180}{$\IL_1$}};
\node[clone] (G) at ($(G3)+(40:1)$) {\rotatebox{180}{$\IL_0$}};
\node[clone] (G1) at ($(Gd)+(40:1)$){\rotatebox{180} {$\IL$}};
\node[clone] (G3p) at ($0.5*(G3)+0.5*(G1)$){\rotatebox{180}{$\IL_3$}};
\draw[inclusion] (G3) to (Gd);
\draw[inclusion] (G3) to (G);
\draw[inclusion] (G3) to (G3p);
\draw[inclusion] (Gd) to (G1);
\draw[inclusion] (G3p) to (G1);
\draw[inclusion] (G) to (G1);

\draw[inclusion, bend angle=30] (S) to [bend right] (G3);
\draw[inclusion, bend angle=20] (S0) to [bend right] (G);
\draw[inclusion, bend angle=20] (S1) to [bend left] (Gd);
\draw[inclusion, bend angle=35] (Sp0) to [bend left] (G1);
\draw[inclusion, bend angle=35] (Sp) to [bend right] (G3p);
\end{scope}

\rotatebox{180}{\begin{scope}\node at (6,-1) 
{
\setlength{\tabcolsep}{4.5pt}
    \begin{tabular}{ | l | c c |}
  \hline
 Co-clone & Definition & \\ \hline
    $\II_2$\rule{0cm}{0.3cm}  & all Boolean relations &   \\ \hline
        $\IN_2$\rule{0cm}{0.3cm} & $\Inv(\{\neg\})$ &\\ \hline
        $\II_0$\rule{0cm}{0.3cm}  & $\Inv(\{c_0\})$  &    \\ \hline
        $\II_1$\rule{0cm}{0.3cm}  & $\Inv(\{c_1\})$  &     \\ \hline
         $\II$\rule{0cm}{0.3cm}  &   $\Inv(\{c_0, c_1\})$  & \\ \hline
        $\IN$\rule{0cm}{0.3cm} & $\Inv(\{\neg, c_0\})$ &\\ \hline
        \end{tabular}	
};
\end{scope}}
\end{tikzpicture}}
\caption{An upset in the Boolean co-clone lattice, with a table of polymorphism definitions for the shaded co-clones; $\I{\mathcal C}$ abbreviates $\Inv({\mathcal C})$, for each Boolean clone~${\mathcal C}$.}\label{clo:dualpost}
\end{figure}
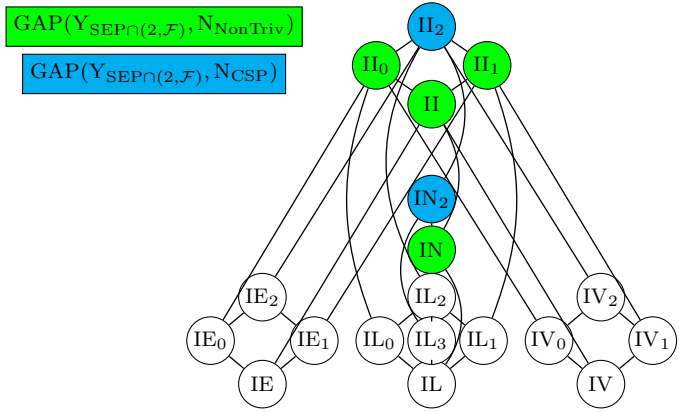

\section{Main results}\label{sec:mainresults}
The first main result presents a trichotomy of computational gap theorems for Boolean constraint languages.  As consequences, we obtain dichotomy theorems for each of the four computational problems described in Section~\ref{sec:SepRob}.  We first introduce the notion of a ``gap property'' and fix some useful notation. 

In the following, let $\Gamma$ be a finite set of relations on $\{0,1\}$ and let $P$ and $Q$ be disjoint sets of $\Gamma$-instances. 
\medskip

\noindent\fbox{
\parbox{0.95\textwidth}{
\emph{The gap property $\GAP(Q,P)$ for $\Gamma$ with respect to $Q$ and $P$}. \\
$\GAP(Q,P)$: any set of $\Gamma$-instances containing $P$ and disjoint from $Q$ has $\mathbf{NP}$-hard membership with respect to polynomial-time Karp reductions. 

}
}
\enlargethispage{\baselineskip}
\begin{notation}
Let $\Gamma$ be a finite set of relations on $\{0,1\}$, let $k\in {\mathbb N}$ and let ${\mathcal F}$ be a finite set of pp-formul{\ae} in the language of $\Gamma$. We use
\begin{itemize}
\item $\N_{\CSP}(\Gamma)$ to denote the set of NO instances for $\CSP(\Gamma)$, 
\item $\N_{\NTriv}(\Gamma)$ to denote the set of NO instances for $\CSP_{\NTriv}(\Gamma)$,
\item $\Y_{(k,{\mathcal F})}(\Gamma)$ to denote the set of YES instances for ($k,{\mathcal F}$)-$\Robust(\Gamma)$, 
\item $\Y_{k\text{-}\Rob}(\Gamma)$ to denote the set of YES instances for $k$-$\Robust(\Gamma)$,
\item $\Y_{\SEP}(\Gamma)$ to denote the set of YES instances for $\SEP(\Gamma)$,
\item $\Y_{{\SEP} \cap (k,{\mathcal F})}(\Gamma)$ to denote the set of instances in $\Y_{\SEP}(\Gamma)\cap \Y_{(k,{\mathcal F})}(\Gamma)$.
\end{itemize}
When the context refers to a specific constraint language $\Gamma$, we omit $\Gamma$ from this notation.  

\end{notation}

\begin{thm}[Gap Trichotomy Theorem]\label{thm:mainGAP}
Let $\Gamma$ be a constraint language on $\{0,1\}$.
\begin{enumerate}[\quad \rm(1)]
\item If $\langle \Gamma \rangle=\II_2$ or $\langle \Gamma \rangle=\IN_2$, then $\Gamma$ satisfies $\GAP(\N_{\CSP}, \Y_{{\SEP} \cap{(2,{\mathcal F})}})$, for some finite set ${\mathcal F}$ of pp-formul{\ae} in ${\Gamma}$. 
\item If $\langle \Gamma \rangle\in \{\II_1,\II_0 \II, \IN\}$, then $\Gamma$ satisfies $\GAP(\N_{\NTriv}, \Y_{\SEP \cap {(2,{\mathcal F})}})$, for some finite set ${\mathcal F}$ of pp-formul{\ae} in ${\Gamma}$. 
\item In all other cases, the satisfiability problem $(2, {\mathcal F})$-$\Robust(\Gamma)$ and the separation problem $\SEP(\Gamma)$ are solvable in polynomial-time, for any finite set ${\mathcal F}$ of pp-formul{\ae} in ${\Gamma}$. 
\end{enumerate}
\end{thm}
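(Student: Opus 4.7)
The statement is a three-way split, so the plan is to address cases (1), (2), and (3) separately, using the weak co-clone and strong partial clone machinery of Section~\ref{sec:weakbase} for the hardness cases and standard polymorphism-based algorithms for the tractability case.

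For cases (1) and (2), I would fix a concrete ``hard'' relation $\sigma$ whose associated $\CSP$ variant is $\mathbf{NP}$-complete by Schaefer's theorem: in case~(1), a relation of the 1-in-3 SAT type so that $\CSP(\{\sigma\})$ is $\mathbf{NP}$-complete; in case~(2), a relation for which $\CSP_{\NTriv}(\{\sigma\})$ is $\mathbf{NP}$-complete. Since $\langle\Gamma\rangle$ is one of the six shaded co-clones, $\sigma$ is pp-definable from $\Gamma$. The challenge is to build a reduction from $\CSP(\{\sigma\})$ (respectively $\CSP_{\NTriv}(\{\sigma\})$) that not only sends YES instances to YES instances of $\CSP(\Gamma)$ (respectively $\CSP_{\NTriv}(\Gamma)$), but places them inside $\Y_{\SEP\cap(2,\mathcal{F})}(\Gamma)$, while mapping NO instances to $\N_{\CSP}(\Gamma)$ (case~1) or $\N_{\NTriv}(\Gamma)$ (case~2). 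The gadget attached to each instance must introduce fresh variables certifying that every pair of variables can be separated and that every $\mathcal{F}$-compatible partial assignment on two variables extends to a full solution; the set $\mathcal{F}$ is chosen as the pp-formul\ae{} needed for this certification. The split between cases (1) and (2) is structural: when $\langle\Gamma\rangle \in \{\II_0, \II_1, \II, \IN\}$, every relation in $\Gamma$ contains an all-zero or all-one tuple, so every $\Gamma$-instance has a constant solution and $\CSP(\Gamma)$ is trivially YES; hence we must target $\CSP_{\NTriv}$.

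For case~(3), the tractability case, $\langle\Gamma\rangle$ sits strictly below each of the six shaded co-clones in Post's lattice, so $\Pol(\Gamma)$ contains a nontrivial Schaefer-type polymorphism (Maltsev, majority, $\min$, or $\max$). For each such polymorphism type I would use the standard polynomial-time $\CSP(\Gamma)$ algorithm as a black box. To solve $\SEP(\Gamma)$, for each pair of distinct variables $(u,v)$ I would augment the instance with a pp-definable ``$u \ne v$'' constraint and invoke the $\CSP$ algorithm; separation holds iff every such call returns YES. To solve $(2,\mathcal{F})$-$\Robust(\Gamma)$, I would enumerate all pairs of variables and all $\mathcal{F}$-compatible value assignments on the pair, add the corresponding unary pinning constraints, and call the $\CSP$ algorithm on each. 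Both procedures make $O(|V|^2)$ calls to a polynomial-time oracle.

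The principal obstacle is the hardness reductions. An ordinary pp-interpretation preserves $\CSP$ complexity but not $\SEP$ or $(2,\mathcal{F})$-robust structure: existentially quantified variables introduced during a pp-definition can break separation (by collapsing pairs) or robustness (by introducing implicit constraints that block partial extensions). The reductions must therefore be carried out at the finer resolution of weak co-clones and strong partial clones introduced in Section~\ref{sec:weakbase}, and the pp-unpacking of $\sigma$ from $\Gamma$ must be done in a way that neither collapses distinct variables nor destroys partial-assignment extension. Producing a single finite $\mathcal{F}$ that suffices to certify robustness uniformly across the whole co-clone is also delicate and will absorb most of Section~\ref{sec:Gaps}. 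Once the gadgets are in place, the Schaefer tractability boundaries and the trichotomy boundaries coincide, and the proof closes by case-checking the shaded versus unshaded co-clones.
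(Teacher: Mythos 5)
Your case~(3) argument is essentially the paper's (reduce $\SEP$ and $(2,\mathcal{F})$-$\Robust$ to the CSP with constants adjoined, then invoke Schaefer's tractable cases; the paper packages this as Theorem~\ref{thm:tract}), and your identification of the weak co-clone/strong partial clone machinery as the right level of resolution for the hardness reductions, together with the structural reason for the $\N_{\CSP}$ versus $\N_{\NTriv}$ split, matches the paper. The hardness half of your proposal, however, has a genuine gap.

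You propose to start from an $\mathbf{NP}$-complete problem $\CSP(\{\sigma\})$ and build a reduction sending \emph{every} YES instance into $\Y_{\SEP\cap(2,\mathcal{F})}(\Gamma)$ and every NO instance into $\N_{\CSP}(\Gamma)$. No gadget construction of the kind you describe can do this: all of the reductions available at the weak co-clone level (and all of the explicit constructions in Section~\ref{sec:Gaps}) are solution-preserving, so a satisfiable $1$-in-$3$ $\SAT$ instance with a unique solution, or one whose every solution identifies two variables, is carried to an instance that is neither separable nor $2$-robustly satisfiable. The missing ingredient is the base gap theorem for positive $1$-in-$3$ $\SAT$ (Theorem~\ref{thm:1in3GAP}, due to Jackson building on Gottlob and Abramsky--Gottlob--Kolaitis): $\mathbf{NP}$-hardness already holds for distinguishing the NO instances from a special class $\mathcal{K}$ of instances that are \emph{a priori} $2$-robustly satisfiable. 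With that in hand the task changes from ``create robustness'' to ``preserve an existing gap'' in the sense of Lemma~\ref{lem:Gap}: one only needs reductions carrying $\mathcal{K}$ into $\Y_{\SEP\cap(2,\mathcal{F})}$ and NO instances to NO instances, and the explicit verifications in Propositions~\ref{pro:WeirdoGAP} and~\ref{pro:BigWGAP} lean entirely on the robustness properties $(\heartsuit)$ and $(\Diamond)$ of the source instance. Mere $\mathbf{NP}$-completeness of $\CSP(\{\sigma\})$ (or of $\CSP_{\NTriv}(\{\sigma\})$, which in any case is Creignou--H\'ebrard rather than Schaefer) is not a sufficient starting point. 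A secondary, fixable point: in case~(3) the relation $u\neq v$ need not be pp-definable from $\Gamma$ with constants, so the $\SEP$ oracle calls should enumerate pinnings $u\mapsto a$, $v\mapsto b$ over distinct value pairs rather than assert a disequality constraint.
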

Shaded vertices in Figure~\ref{clo:dualpost} give the precise information for the Gap Trichotomy Theorem: constraint languages falling under the first item are coloured blue/dark grey in Figure \ref{clo:dualpost}.  Those falling under the second item are green/light grey.    

We now list four \emph{dichotomy} theorems that are immediate consequences of the Gap Trichotomy Theorem. The Dichotomy Theorem for $\CSP_{\NTriv}$ is a known result of Creignou and H{\'{e}}brard~\cite[Proposition $3.5, 4.7$]{Nadia}.

\begin{thm}[Dichotomy Theorem for $(2, {\mathcal F})$-$\Robust(\Gamma)$]
Let $\Gamma$ be a constraint language on $\{0,1\}$.
\begin{itemize}
\item If $\IN \subseteq \langle \Gamma \rangle$ then $(2, {\mathcal F})$-$\Robust(\Gamma)$ is $\mathbf{NP}$-complete, for some finite set ${\mathcal F}$ of pp-formul{\ae} in ${\Gamma}$.  
\item Otherwise, $(2, {\mathcal F})$-$\Robust(\Gamma)$ is solvable in polynomial-time, for all finite sets ${\mathcal F}$ of pp-formul{\ae} in ${\Gamma}$. 
\end{itemize}
\end{thm}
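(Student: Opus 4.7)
The plan is to read the dichotomy off the Gap Trichotomy Theorem, using that the six shaded co-clones in Figure~\ref{clo:dualpost} are precisely those containing $\IN$, i.e., those $\langle\Gamma\rangle$ with $\IN\subseteq\langle\Gamma\rangle$. The tractable direction is then immediate: if $\IN\not\subseteq\langle\Gamma\rangle$, we fall into case~(3) of the trichotomy, which states outright that $(2,\mathcal{F})$-$\Robust(\Gamma)$ is polynomial-time solvable for every finite $\mathcal{F}$.

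For the $\mathbf{NP}$-complete direction, suppose $\IN\subseteq\langle\Gamma\rangle$, so that case~(1) or case~(2) of the trichotomy supplies a finite $\mathcal{F}$ with $\GAP(Q,\Y_{\SEP\cap(2,\mathcal{F})})$ for some $Q\in\{\N_{\CSP},\N_{\NTriv}\}$. I would not apply the gap property directly to $\Y_{(2,\mathcal{F})}$, because this set need not be disjoint from $Q$: an instance with no compatible partial 2-assignment is vacuously YES for $(2,\mathcal{F})$-$\Robust$ yet may lie in $\N_{\CSP}$ (or $\N_{\NTriv}$). Instead, the plan is to introduce an auxiliary, polynomial-time testable ``non-vacuousness'' condition $H$: for case~(1), the existence of at least one compatible partial 2-assignment; for case~(2), the existence of a \emph{nonconstant} compatible partial 2-assignment. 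Either version of $H$ can be checked in polynomial time by enumerating the $O(|V|^2)$ candidate pairs of variables and values and evaluating the finitely many formul{\ae} of $\mathcal{F}$ on each.

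The core step is then to verify that $T:=\Y_{(2,\mathcal{F})}\cap H$ lies between $\Y_{\SEP\cap(2,\mathcal{F})}$ and the complement of $Q$. The containment $\Y_{\SEP\cap(2,\mathcal{F})}\subseteq T$ uses $\SEP$ to produce, for any pair of variables, a separating solution whose restriction to that pair is a nonconstant compatible partial 2-assignment. The disjointness $T\cap Q=\emptyset$ follows because the robust condition forces the compatible (and, in case~(2), nonconstant) partial 2-assignment guaranteed by $H$ to extend to a (nonconstant) solution, so no instance of $T$ lies in $\N_{\CSP}$ or in $\N_{\NTriv}$. The gap property then delivers $\mathbf{NP}$-hardness of $T$, and composing a hypothetical polynomial-time algorithm for $(2,\mathcal{F})$-$\Robust$ with the polynomial-time test for $H$ would yield a polynomial-time algorithm for $T$, forcing $\mathbf{P}=\mathbf{NP}$.

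For $\mathbf{NP}$-membership, a YES certificate may be taken to be a list of at most $O(|V|^2)$ full solutions, one extending each compatible partial 2-assignment; this is of polynomial total size and polynomial-time verifiable by checking each listed assignment against the constraints and against its corresponding partial 2-assignment. The main obstacle I anticipate is precisely the handling of vacuous instances via the auxiliary set $H$, and ensuring that the containment and disjointness work uniformly across cases~(1) and~(2); in particular, case~(2) requires care to confirm that a nonconstant compatible partial 2-assignment arises from every $\SEP$-instance with at least two variables, with degenerate small instances handled separately.
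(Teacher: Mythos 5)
Your proposal is correct and follows the same route as the paper: the paper presents this dichotomy as an immediate consequence of the Gap Trichotomy Theorem, reading the tractable case off item (3) and the hard cases off the assertion in Theorems~\ref{thm:GAPII2}--\ref{thm:GAPIN} that the gap property ``consequently'' yields $\mathbf{NP}$-completeness of $(2,{\mathcal G})$-$\Robust(\Gamma)$. Where you differ is in making explicit a step the paper leaves implicit, namely why the gap property applies to the set $\Y_{(2,{\mathcal F})}$ of YES instances: as you observe, $\Y_{(2,{\mathcal F})}$ need not be disjoint from $\N_{\CSP}$ (or $\N_{\NTriv}$) because of vacuously robust instances, and your auxiliary polynomial-time condition $H$ is a sound repair, since $T=\Y_{(2,{\mathcal F})}\cap H$ is sandwiched between $\Y_{\SEP\cap(2,{\mathcal F})}$ and the complement of the relevant NO-set exactly as you claim. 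Two small refinements are worth making. First, since the gap property is stated for polynomial-time Karp reductions, phrase the transfer of hardness from $T$ to $\Y_{(2,{\mathcal F})}$ as a many-one reduction (map any instance failing $H$ to a fixed NO instance of $(2,{\mathcal F})$-$\Robust(\Gamma)$, which exists in the hard cases) rather than as a composition of algorithms; the latter only yields ``not in $\mathbf{P}$ unless $\mathbf{P}=\mathbf{NP}$'' rather than $\mathbf{NP}$-hardness. Second, for instances with at most one variable the containment $\Y_{\SEP\cap(2,{\mathcal F})}\subseteq T$ can fail; these are decidable in constant time and should be set aside explicitly, as you anticipate. The $\mathbf{NP}$-membership certificate you describe (one witnessing solution per compatible partial $2$-assignment, of which there are $O(|V|^2)$) is fine and is likewise left implicit in the paper.
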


\begin{thm}[Dichotomy Theorem for $\SEP(\mathbb{A})$]  
Let $\Gamma$ be a constraint language on $\{0,1\}$. 
\begin{itemize}
\item If $\IN \subseteq \langle \Gamma \rangle$ then $\SEP(\Gamma)$ is $\mathbf{NP}$-complete.    
\item Otherwise, $\SEP(\Gamma)$ is solvable in polynomial-time. 
\end{itemize}
\end{thm}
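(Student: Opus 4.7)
The plan is to derive this dichotomy as a direct corollary of the Gap Trichotomy Theorem (Theorem~\ref{thm:mainGAP}). The first step is to observe that ``$\IN \subseteq \langle \Gamma \rangle$'' is equivalent to $\langle \Gamma \rangle$ being one of the shaded co-clones $\II_2, \IN_2, \II_0, \II_1, \II, \IN$ of Figure~\ref{clo:dualpost}; that is, $\langle \Gamma \rangle$ lies in the up-set generated by $\IN$ in Post's lattice, and these are precisely the co-clones appearing in items~(1) and~(2) of the trichotomy. Conversely, the co-clones falling under item~(3) do not contain $\IN$.

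For the tractable direction, if $\IN \not\subseteq \langle \Gamma \rangle$, then $\langle \Gamma \rangle$ falls under item~(3) of Theorem~\ref{thm:mainGAP}, which states directly that $\SEP(\Gamma)$ is solvable in polynomial time.

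For the intractable direction, suppose $\IN \subseteq \langle \Gamma \rangle$. Then by item~(1) or item~(2) of Theorem~\ref{thm:mainGAP}, there is a finite set ${\mathcal F}$ of pp-formul{\ae} such that the gap property $\GAP(\N_P, \Y_{\SEP \cap (2,{\mathcal F})})$ holds, where $P$ is either $\CSP$ or $\CSP_{\NTriv}$. The key step is to observe that $\Y_{\SEP}(\Gamma)$ itself is a class to which the gap property applies: it contains $\Y_{\SEP \cap (2,{\mathcal F})}(\Gamma)$ by definition, and it is disjoint from both $\N_{\CSP}(\Gamma)$ (any separating solution is in particular a solution) and from $\N_{\NTriv}(\Gamma)$ on any instance with at least two variables (a solution assigning two distinct variables different values is necessarily nonconstant). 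Hence membership in $\Y_{\SEP}(\Gamma)$, i.e.\ $\SEP(\Gamma)$, is $\mathbf{NP}$-hard.

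To conclude NP-completeness, I would verify that $\SEP(\Gamma)$ lies in $\mathbf{NP}$: a YES instance with variable set $V$ admits a polynomial-size witness consisting of one separating solution for each of the $\binom{|V|}{2}$ pairs of distinct variables, each verifiable in polynomial time against the constraints. The main obstacle is bookkeeping rather than mathematical --- one needs to handle the edge case of instances with fewer than two variables separately (where $\SEP$ is vacuously true, yet the corresponding $\CSP_{\NTriv}$ instance may be a NO-instance) so that the disjointness between $\Y_{\SEP}(\Gamma)$ and $\N_{\NTriv}(\Gamma)$ used above is preserved; these instances are trivially solvable in linear time and can be filtered out in a preprocessing step before invoking the gap property.
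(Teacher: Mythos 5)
Your proof is correct and follows essentially the same route as the paper, which presents this dichotomy as an immediate consequence of the Gap Trichotomy Theorem: the six co-clones containing $\IN$ are exactly those covered by items (1) and (2), the gap property applied to the superset $\Y_{\SEP} \supseteq \Y_{{\SEP}\cap(2,{\mathcal F})}$ gives $\mathbf{NP}$-hardness, and item (3) gives tractability. Your explicit verification of $\mathbf{NP}$-membership and your handling of the degenerate instances with fewer than two variables (where $\SEP$ is vacuously YES yet the instance may be a NO instance of $\CSP$ or $\CSP_{\NTriv}$) are more careful than the paper, which leaves both points implicit.
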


\begin{thm}\cite[Dichotomy Theorem for $\CSP_{\NTriv}(\mathbb{A})$]{Nadia}
Let $\Gamma$ be a constraint language on $\{0,1\}$. 
\begin{itemize}
\item If $\IN \subseteq \langle \Gamma \rangle$ then $\CSP_{\NTriv}(\Gamma)$ is $\mathbf{NP}$-complete.  
\item Otherwise, $\CSP_{\NTriv}(\Gamma)$ is solvable in polynomial-time. 
\end{itemize}
\end{thm}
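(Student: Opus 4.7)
The theorem splits along the condition $\IN \subseteq \langle \Gamma \rangle$. My plan is to derive the hard direction as an immediate corollary of the Gap Trichotomy Theorem, and to handle the tractable direction via a pair-enumeration reduction to a Schaefer-tractable constraint satisfaction problem.

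For the hard direction, I would assume $\IN \subseteq \langle \Gamma \rangle$; then $\langle \Gamma \rangle$ is one of the six shaded co-clones of Figure~\ref{clo:dualpost}, so either clause (1) or clause (2) of Theorem~\ref{thm:mainGAP} applies and yields a gap property of the form $\GAP(\N_{\CSP}, \Y_{\SEP\cap(2,\mathcal F)})$ or $\GAP(\N_{\NTriv}, \Y_{\SEP\cap(2,\mathcal F)})$ for some finite $\mathcal F$. The YES-set $\Y_{\NTriv}$ for $\CSP_{\NTriv}(\Gamma)$ sits sandwiched between the endpoints of either gap: on instances with at least two variables, every separating solution is necessarily nonconstant, so $\Y_{\SEP\cap(2,\mathcal F)} \subseteq \Y_{\NTriv}$; and $\Y_{\NTriv}$ is disjoint from $\N_{\NTriv}$ by definition, and from $\N_{\CSP}$ because a wholly unsatisfiable instance admits no nonconstant solution either. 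The gap property therefore forces membership in $\Y_{\NTriv}$ to be $\mathbf{NP}$-hard, and since $\CSP_{\NTriv}(\Gamma) \in \mathbf{NP}$ we obtain $\mathbf{NP}$-completeness.

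For the tractable direction I would assume $\IN \not\subseteq \langle \Gamma \rangle$, which by the Galois correspondence of Theorem~\ref{thm:Geiger} is equivalent to $\Pol(\Gamma) \not\subseteq [\{\neg, c_0\}]$. An inspection of Post's lattice shows that such a clone is always above at least one of the standard Schaefer-tractable clones generated by $\wedge$, $\vee$, $x \oplus y \oplus z$, or the majority operation, each of which preserves the singleton unary relations $\{0\}$ and $\{1\}$. Consequently $\CSP(\Gamma \cup \{\{0\},\{1\}\})$ remains Schaefer-tractable. To solve $\CSP_{\NTriv}(\Gamma)$ on an instance $I=(V;\{0,1\};\mathcal C)$, I would iterate over all pairs $(u,v)$ of distinct variables of $V$ and both orderings $(a,b) \in \{(0,1),(1,0)\}$, each time calling the tractable solver on $I$ augmented by the singleton constraints $u=a$ and $v=b$; the instance $I$ admits a nonconstant solution if and only if at least one of these $O(|V|^2)$ calls succeeds. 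The main obstacle is the Post's-lattice step: rigorously confirming that every Boolean clone containing an essentially non-unary operation is above one of the four tractable operations listed. This is classical but non-trivial, and a reader-friendly alternative is to appeal directly to the proof of Creignou and H\'ebrard in~\cite{Nadia}.
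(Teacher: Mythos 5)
Your argument is correct, and for the hardness direction it is precisely the derivation the paper intends: the paper states this dichotomy as an ``immediate consequence'' of Theorem~\ref{thm:mainGAP} (otherwise deferring to Creignou and H\'ebrard~\cite{Nadia}) without writing out the sandwich, and your observation that $\Y_{\NTriv}(\Gamma)$ contains $\Y_{{\SEP}\cap(2,{\mathcal F})}$ while being disjoint from both $\N_{\CSP}$ and $\N_{\NTriv}$ is exactly what makes either form of the gap property bite. One pedantic point: an instance with at most one variable lies vacuously in $\Y_{{\SEP}\cap(2,{\mathcal F})}$ yet admits only constant solutions, so the containment $\Y_{{\SEP}\cap(2,{\mathcal F})}\subseteq \Y_{\NTriv}$ fails on these degenerate instances; this is harmless because Lemma~\ref{lem:Gap} actually yields the gap on the image of the reduction, whose instances all have many variables, but it merits a sentence. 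For the tractable direction the paper supplies no argument at all (it simply cites \cite{Nadia}), whereas you give a self-contained one: the $O(|V|^2)$ enumeration of ordered pairs of variables forced to $(0,1)$ or $(1,0)$, each resolved by a call to the tractable $\CSP(\Gamma\cup\{(0),(1)\})$, is sound on a Boolean domain because a solution is nonconstant exactly when it separates some pair of variables, and it mirrors the singleton-constraint device the paper uses in Section~\ref{sec:tractable} for $\SEP$ and $(2,{\mathcal F})$-$\Robust$. The Post's-lattice step you flag --- that every Boolean clone not contained in $[\{\neg,c_0\}]$ lies above one of $[\wedge]$, $[\vee]$, $[x\oplus y\oplus z]$, or the majority clone --- is indeed the classical content of Schaefer's Lemma~4.1, and citing it, as the paper itself does in the proof of Theorem~\ref{thm:tractable}, is entirely appropriate.
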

In the following it useful to recall that a finite relational structure (template) is a core if all of its unary polymorphisms are bijective. 
\subsection{Recovering Schaefer's Theorem}
We observe that the intractability cases of Schaefer's Dichotomy Theorem are recovered from Theorem~\ref{thm:mainGAP} in the case of core templates (Boolean CSPs over non-core templates are trivial).  Indeed, when $\langle \{0,1\}; \Gamma\rangle$ is a core template and $\IN \subseteq \langle \Gamma \rangle$, we have $\langle \Gamma \rangle=\IN_2$ or $\langle \Gamma \rangle=\II_2$. By Theorem~\ref{thm:mainGAP}, it follows that $\Gamma$ has gap property $\GAP(\N_{\CSP}, \Y_{{\SEP}\cap(2,{\mathcal F})})$, for some finite set ${\mathcal F}$ of pp-formal{\ae} in the language of $\Gamma$. As $\Y_{\SEP(2,{\mathcal F})}\subseteq \Y_{\CSP}$ and is disjoint from $\N_{\CSP}$, it follows that $\CSP(\Gamma)$ is $\bf{NP}$-complete. 

\begin{thm}[Schaefer's Dichotomy Theorem~\cite{Schaef}]
Let $\Gamma$ be a constraint language on $\{0,1\}$ such that $\langle \{0,1\}; \Gamma\rangle$ is a core template. 
\begin{itemize}
\item If $\IN \subseteq \langle \Gamma \rangle$ then $\CSP(\Gamma)$ is $\mathbf{NP}$-complete.  
\item Otherwise, $\CSP(\Gamma)$ is solvable in polynomial-time. 
\end{itemize}
\end{thm}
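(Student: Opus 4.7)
My plan is to prove the two bullet points separately, leveraging the Gap Trichotomy Theorem for the hardness direction and deferring to classical Post-lattice analysis for the tractability direction.

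For the $\mathbf{NP}$-complete direction, I first narrow down the possible co-clones. The hypothesis $\IN\subseteq\langle\Gamma\rangle$ combined with Figure~\ref{clo:dualpost} restricts $\langle\Gamma\rangle$ to the six shaded co-clones $\{\II_2,\IN_2,\II_0,\II_1,\II,\IN\}$. I then use the core hypothesis to eliminate four of these: each of $\II_0,\II_1,\II,\IN$ is defined as $\Inv(F)$ with $F$ containing at least one of the non-bijective unary operations $c_0,c_1$, and hence admits a non-bijective unary polymorphism, contradicting the core property. This leaves only $\langle\Gamma\rangle\in\{\II_2,\IN_2\}$, and case (1) of Theorem~\ref{thm:mainGAP} supplies a finite set $\mathcal{F}$ of pp-formul{\ae} for which $\Gamma$ satisfies $\GAP(\N_{\CSP},\Y_{\SEP\cap(2,\mathcal{F})})$. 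Since any instance in $\Y_{\SEP\cap(2,\mathcal{F})}$ possesses separating (hence at least one) solution, the inclusion $\Y_{\SEP\cap(2,\mathcal{F})}\subseteq\Y_{\CSP}$ holds, and of course $\Y_{\CSP}$ is disjoint from $\N_{\CSP}$. The gap property applied to the set $\Y_{\CSP}$ yields $\mathbf{NP}$-hardness of $\CSP(\Gamma)$; containment in $\mathbf{NP}$ is the standard certificate argument.

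For the polynomial-time direction, my plan is to invoke Post's lattice directly. The condition $\IN\not\subseteq\langle\Gamma\rangle$ translates via the Galois connection to $\Pol(\Gamma)\not\subseteq\Pol(\IN)$, so $\Pol(\Gamma)$ contains a polymorphism lying outside $\Pol(\IN)$. A case analysis of the maximal clones in Post's lattice that avoid being below $\Pol(\IN)$ shows that $\Pol(\Gamma)$ must contain at least one of: a constant operation, the binary meet or join, the ternary majority, or the affine Mal'cev operation $x+y+z\pmod 2$. Each of these polymorphisms triggers a classical tractable algorithm for $\CSP(\Gamma)$ (respectively: trivial constant assignment; unit propagation for Horn and dual-Horn formulas; the usual $2$-SAT arc-consistency procedure for bijunctive formulas; Gaussian elimination for affine formulas).

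The main obstacle is the tractability side: the Gap Trichotomy Theorem only guarantees polynomial-time solvability of $\SEP(\Gamma)$ and $(2,\mathcal{F})\text{-}\Robust(\Gamma)$ in case (3), and these do not directly yield tractability of $\CSP(\Gamma)$, so one cannot bypass Schaefer's original algorithmic analysis of the six tractable clones. Consequently, while the hardness half is genuinely recovered as a corollary of Theorem~\ref{thm:mainGAP}, the tractability half must be imported from \cite{Schaef}, and the proof proposal is a combination of a short derivation from the Gap Trichotomy together with citation of the classical algorithmic content.
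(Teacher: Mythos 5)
Your proposal is correct and follows essentially the same route as the paper: the hardness half is exactly the paper's argument (core plus $\IN\subseteq\langle\Gamma\rangle$ forces $\langle\Gamma\rangle\in\{\II_2,\IN_2\}$, then Theorem~\ref{thm:mainGAP}(1) gives the gap property, and $\Y_{{\SEP}\cap(2,{\mathcal F})}\subseteq\Y_{\CSP}$ with $\Y_{\CSP}$ disjoint from $\N_{\CSP}$ yields $\mathbf{NP}$-hardness), while the tractability half is, as you correctly observe, imported from Schaefer's classical clone-by-clone algorithmic analysis, which is precisely what the paper does via the citation. Your explicit justification that the four intermediate co-clones $\II_0,\II_1,\II,\IN$ are excluded because their associated clones contain a constant (hence non-bijective) unary polymorphism is a useful elaboration of a step the paper merely asserts.
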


\section{Weak co-clones and strong partial clones}\label{sec:weakbase}
We now give more technical definitions that are required for the main arguments.
\begin{defn}
For $n\in{\mathbb N}$, we define a \emph{partial operation} of arity $n$ on $A$ to be a map $f\colon \dom(f)\to A$,  where $\dom(f)\subseteq A^{n}$ and define ${\mathcal P}_A$ to be the set of all non-empty, non-nullary finitary partial operations on~$A$. 
\end{defn}

\begin{defn}
Let $n, k\in {\mathbb N}$, let $f\colon \dom(f)\to A$ be a $n$-ary partial operation and let $r$ be a $k$-ary relation on the set $A$.  We say that $f$ \emph{preserves} $r$ or $r$ is \emph{invariant under} $f$ or $f$ is a \emph{partial polymorphism} of $r$, if whenever $a_1=(a_{11}, \dots, a_{1n}), a_2=(a_{21}, \dots, a_{2n}), \dots, a_k=(a_{k1}, \dots, a_{kn})$ are tuples in $\dom(f)$, then the implication of Definition~\ref{defn:poly} holds.  If $F$ is set of partial operations then we say that $r$ is invariant under $F$ if $r$ is invariant under every operation in $F$. 
\end{defn}

\begin{defn} 
Let $m, n \in {\mathbb N}$, let $f\in {\mathcal P}_A$ be $m$-ary and let $g_1,\dots, g_m \in {\mathcal P}_A$ be $n$-ary. The \emph{composition} $f(g_1,\dots, g_m)$ is an $n$-ary partial operation defined by 
\[
f(g_1,\dots, g_m)(x_1, \dots, x_n):=f(g_1(x_1,\dots, x_n), \dots, g_m(x_1, \dots, x_n)),
\]
where $\dom(f(g_1, \dots, g_m))$ is the set
\[
\biggl\{(x_1, \dots, x_n)\in \bigcap_{i=1}^m \dom(g_i)\ |\ (g_1(x_1,\dots, x_n), \dots, g_m(x_1, \dots, x_n))\in \dom(f)\biggr\}.
\]
\end{defn}

\begin{defn}
Let $f, g \in {\mathcal P}_A$.  We say that $f$ is a \emph{restriction} of $g$ or that $f$ \emph{extends} to~$g$ if $\dom(f)\subseteq\dom(g)$ and $f$ agrees with $g$ on $\dom(f)$. 
\end{defn}

\begin{defn}
Let $A$ be a non-empty set and let ${\mathcal C}\subseteq {\mathcal P}_A$.  Then ${\mathcal C}$ is a \emph{strong partial clone} if the following three conditions hold:
\begin{enumerate}[\quad \rm(1)]
\item $\mathcal C$ contains all total projection operations\textup;
\item $\mathcal C$ is closed under composition with non-empty domain\textup; 
\item $\mathcal C$ is closed under arbitrary restriction of partial operations. 
\end{enumerate}
For a set $F$ of partial operations, we can define $[F]_p$ to be the smallest strong partial clone containing $F$ and we call $[F]_p$ the \emph{strong partial clone generated} by $F$. The set $F$ is sometimes called a \emph{base} for the strong partial clone $[F]_p$. 
\end{defn}

\begin{defn}
Let $A$ be a non-empty set. A subset $R$ of ${\mathcal R}_A$ is called a \emph{weak co-clone} or \emph{weak system} if it is closed under the formation of conjunct-atomic definable relations.   We can define $\langle R \rangle_{\not\exists}$ to be the smallest weak co-clone containing $R$ and we refer to $\langle R \rangle_{\not\exists}$ as the \emph{weak co-clone generated by} $R$.  The set $R$ is sometimes called a \emph{base} for the weak system $\langle R \rangle_{\not\exists}$. 

A subset $R$ of ${\mathcal R}_A$ is called an \emph{equality-free weak co-clone} or an \emph{equality-free weak system} if it is closed under the formation of equality-free conjunct-atomic definable relations.  We define $\langle R \rangle_{\not\exists,\not=}$ to be the smallest equality-free weak co-clone containing $R$ and we refer to $\langle R \rangle_{\not\exists,\not=}$ as the \emph{equality-free weak co-clone generated by} $R$.  The set $R$ is sometimes called a \emph{base} for the equality-free weak system $\langle R \rangle_{\not\exists, \not=}$. 
\end{defn}

If we weaken the operators $\Inv$ and $\Pol$, introduced in Section~\ref{sec:Back}, to allow partial operations to be included in the definition, we obtain a refined Galois connection between the complete lattices $\PW{\mathcal P_A}$ and $\PW{\mathcal R_A}$.  The Galois connection is finer in the sense that sets of relations are further separated on the basis of their expressibility power. 
The next theorem was first observed by Romov~\cite{Romov}. It can also be obtained as a special case of a Galois connection given in Davey, Pitkethly, and Willard~\cite[Section~$1.2$]{AltEgo}.
\begin{thm}[\cite{Geiger}, \cite{Romov}]\label{thm:strgweak}
Let $A$ be a non-empty set. The following pair of maps $\Inv\colon \PW{\mathcal P_A} \to  \PW{\mathcal R_A}$ and $\Pol\colon \PW{\mathcal R_A} \to  \PW{\mathcal P_A}$ form a Galois connection  between the complete lattices $\PW{\mathcal P_A}$ and $\PW{\mathcal R_A}$.
\begin{align*}
& \Inv(F):=\{r\in \mathcal R_A \ |\ \text{r is invariant under each}\ f\in F\}\ \text{and}\\
&\pPol(R):=\{f \in \mathcal P_A \ |\ \text{f preserves each}\ r\in R\},
\end{align*}
for each $F\subseteq \mathcal P_A$ and $R\subseteq \mathcal R_A$. 
\end{thm}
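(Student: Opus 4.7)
\medskip

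My plan is to verify the theorem by the standard recipe that a Galois connection between powerset lattices is induced by any binary relation. Here the ambient relation is the ``preserves'' relation $\pi \subseteq \mathcal{P}_A \times \mathcal{R}_A$, where $f \mathrel{\pi} r$ iff $f$ preserves $r$ in the sense of the partial-polymorphism definition given just above the theorem. Once this is set up, $\Inv$ and $\pPol$ are exactly the two polarities
\[
\Inv(F) = \{ r \in \mathcal{R}_A \mid \forall f \in F,\; f \mathrel{\pi} r \}, \qquad \pPol(R) = \{ f \in \mathcal{P}_A \mid \forall r \in R,\; f \mathrel{\pi} r \},
\]
so the abstract construction guarantees a Galois connection. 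My write-up would still make the three verifications explicit to match the conventions of the paper.

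First, I would check antitonicity: if $F_1 \subseteq F_2$ and $r \in \Inv(F_2)$, then $r$ is preserved by every $f \in F_2$, in particular by every $f \in F_1$, so $r \in \Inv(F_1)$; the dual inclusion for $\pPol$ is analogous. Second, I would verify the two inflation laws $F \subseteq \pPol(\Inv(F))$ and $R \subseteq \Inv(\pPol(R))$. For the first: any $f \in F$ trivially preserves every $r \in \Inv(F)$, because $r$ is by definition preserved by \emph{all} operations in $F$, and $f$ is one of them; hence $f \in \pPol(\Inv(F))$. The second inflation law is dual. Together with antitonicity, these inflation inequalities are equivalent (by a standard textbook lemma) to the adjunction identity
\[
R \subseteq \Inv(F) \iff F \subseteq \pPol(R),
\]
which is the clean way I would actually state and prove it in one line: both sides unfold to the single condition ``$f \mathrel{\pi} r$ for every $f \in F$ and every $r \in R$.''

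The only point requiring a moment of care — and really the only place where the \emph{partial} nature of operations enters — is in confirming that the definition of ``$f$ preserves $r$'' for partial $f$ still yields a well-behaved relation on $\mathcal{P}_A \times \mathcal{R}_A$. Concretely, one should note that the quantification ``whenever $a_1, \dots, a_k \in \dom(f)$'' makes the preservation condition vacuously satisfied when the hypothesis fails, so every relation is preserved by every partial operation on the empty domain issue; this is why the definition is stated as an implication rather than as a requirement about the range of $f$. I expect this to be the only conceptual subtlety; mechanically, the proof is nothing more than unpacking the symmetric definition of $\Inv$ and $\pPol$, exactly as in Geiger's original argument for total operations (Theorem~\ref{thm:Geiger}). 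Consequently I would present the proof as a short remark, noting that the theorem is an instance of the Galois connection induced by an arbitrary binary relation, and citing Davey--Pitkethly--Willard~\cite{AltEgo} and Romov~\cite{Romov} for the specific case at hand.
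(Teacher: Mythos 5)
Your argument is correct: the paper offers no proof of this theorem, merely citing Geiger and Romov and noting it is a special case of the Galois connection in Davey--Pitkethly--Willard, and your polarity-of-a-binary-relation argument (with the adjunction $R \subseteq \Inv(F) \iff F \subseteq \pPol(R)$ unfolding to a single symmetric condition) is exactly the standard verification those sources rely on. No gaps.
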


%Let $\Pi$ denote the set of all total projections on $A$.  For $F\subseteq {\mathcal P}_A$, we let $\Clo_p(F)$ denote the \emph{partial clone} on $A$ generated by $F$: it is the smallest subset of ${\mathcal P}_A$ containing $F\cup \Pi$ that is closed under composition. 
%\begin{defn}
%The \emph{strong partial clone} on $A$ generated by $F$ is given by 
%\[\Str\Clo_p(F):=\{f\in {\mathcal P}_A\ |\ (\exists g\in \Clo_p(F))\ f\ \text{is a restriction of}\ g\}
%\]

In a similar way that the Galois connection between operations and relations on a non-empty set $A$ is connected to clones and co-clones,  the Galois connection between partial operations and relations is closely related to strong partial clones and weak co-clones.  In particular, sets of the form $\Inv(F)$ are precisely the weak co-clones and sets of the form $\pPol(R)$ are precisely the strong partial clones, for $F\subseteq {\mathcal P}_A$ and $R\subseteq {\mathcal R}_A$. There are analogous consequences to those listed in Section~\ref{sec:Back}, given for clones and co-clones. 
%\begin{itemize}
%\item  $\pPol(\Inv(F))=[F]_p$, for all $F\subseteq P_A$.
%\item $\Inv(\pPol(\Gamma))=\langle \Gamma \rangle_{\not\exists}$, for all $\Gamma\subseteq R_A$.
%\item The lattice of all strong partial clones on $A$ is dually isomorphic to the lattice of all weak co-clones on $A$.  
%\item Every strong partial clone ${\mathcal C}$ corresponds to a unique weak co-clone $\Inv({\mathcal C})$. 
%\end{itemize}

The lattice of strong partial clones on $\{0,1\}$ is complicated; it is uncountable and a complete classification appears difficult, see Sch{\"o}lzel~\cite{Scholz} for example.  Nevertheless, Post's lattice provides a useful approximation to lattice of strong partial clones in the Boolean setting: for each Boolean clone ${\mathcal C}$, is it known that the set of all strong partial clones whose total operations agree with ${\mathcal C}$ forms an interval, and there are known generators for the top and bottom elements in each of these intervals.  

In \cite{Schnoor2}, Schnoor and Schnoor give a characterisation of the largest strong partial clone corresponding to a clone~${\mathcal C}$, and use this characterisation to give a canonical construction for finding a base for the smallest element of the associated interval in the weak co-clone lattice, see \cite[Theorem~$4.11$]{Schnoor2} or \cite[Theorem~$3.9$]{ISch}.  Such bases are referred to as \emph{weak bases}, and from a complexity perspective, are crucial for establishing hardness results.  On the other hand, generators for the largest elements of these intervals, referred to as \emph{plain bases}, are in general key to proving tractability results; however an alternative approach is taken in this article.  We mention only for completeness that plain bases for all Boolean co-clones can be found in Creignou, Kolaitis and Zanuttini~\cite[Table~$2$]{CKZ}. 
\begin{defn}\label{defn:weakbase}
Let $A$ be a non-empty set, let ${\mathcal C}$ be a clone on $A$ and let $\Gamma$ be a set of finitary relations on $A$.  We call $\Gamma$ a \emph{weak base} for the co-clone $\Inv({\mathcal C})$ if ${\langle \Gamma \rangle}_{\not\exists} =\Inv({\mathcal I}_{\cup}({\mathcal C}))$, where ${\mathcal I}_{\cup}({\mathcal C})$ is the largest strong partial clone corresponding to ${\mathcal C}$.
\end{defn}

We will often present relations in a matrix form.  The representation is not unique, but it is succinct. 
\begin{defn} 
Let $r=\{a_1,\dots, a_m\}$ be a $k$-ary relation on a non-empty set $A$ with $|r|=m$. We define the matrix representation of $r$ to be the $m \times k$ matrix $M=(a_{ij})$ over $A$ with rows equal to the constraints of $r$.  
\end{defn}
%\begin{defn}
%Let $n\in {\mathbb N}$. Define $\Cols_n$ to be the $2^n$-ary relation of size $n$ represented by the matrix whose columns are precisely the numbers $0$ to $2^{n}-1$ written in binary in ascending order. 
%\end{defn}
\begin{defn}\label{defn:cols3}
Define $\Cols_{3}$ to be the $8$-ary relation over $\{0,1\}$:
\[
\begin{bmatrix}
 1 & 0 & 0 & 0 & 1 & 1 & 0 & 1 \\
 0 & 1 & 0 & 1 & 0 & 1 & 0 & 1 \\
 0 & 0 & 1 & 1 & 1 & 0 & 0 & 1
  \end{bmatrix}
\]
\end{defn}
\begin{defn}
Let ${\mathcal C}$ be clone on a non-empty set $A$ and let $r$ be a relation on~$A$. Define ${\mathcal C}(r)$ to be the smallest relation containing $r$ that is invariant under every operation in ${\mathcal C}$: 
\[
{\mathcal C}(r):=\bigcap \{s\ |\ s\in \Inv({\mathcal C})\ \text{and}\ r\subseteq s\}. 
\]
Following \cite{Schnoor2}, we refer to ${\mathcal C}(r)$ as the ${\mathcal C}$-closure of the relation $r$ and we say that $r$ is a ${\mathcal C}$-core of $\mathcal{C}(r)$.
\end{defn}
%\begin{defn}
%Let $m\in {\mathbb N}$, let $A$ be a non-empty set and let $s$ be a relation on $A$ with $|s|=m$.  Further, let $\mathcal{C}$ be a clone on $A$.  We say that $m$ is a \emph{core-size} of the co-clone $\Inv(\mathcal{C})$ if $\langle{\mathcal C}(s)\rangle = \Inv(\mathcal{C})$. 
%\end{defn}
%In other words,  $m$ is a core-size of the co-clone $\Inv(\mathcal{C})$ if there exists a relation $s$ of size $m$ such that ${\mathcal C}(s)$ is base for the co-clone $\Inv(\mathcal{C})$. 

Using the work of Schnoor and Schnoor~\cite{Schnoor2} and Schnoor \cite[Table 3.1]{ISch}, the following construction gives weak-bases for each of the Boolean co-clones shaded in Figure~\ref{clo:dualpost}. 

\begin{pro}\up(\cite[Theorem 4.11]{Schnoor2}, \cite[Table 3.1]{ISch}\up)\label{pro:weakbase}
Let $\I{\mathcal C}$ be any of the Boolean co-clones listed in the table within Figure~\ref{clo:dualpost}. Then 
${\mathcal C}(\Cols_{3})$ is a weak-base for $\I{\mathcal C}$. 
\end{pro}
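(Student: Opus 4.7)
The plan is to invoke Theorem 4.11 of \cite{Schnoor2} directly. That theorem associates to each clone $\mathcal{C}$ on a finite set $A$ a canonical weak base for $\Inv(\mathcal{C})$, obtained as the $\mathcal{C}$-closure of a suitably universal ``column'' relation whose columns exhaust $A^n$ for an appropriate small $n$; this matches the weak-base formulation recorded in Definition~\ref{defn:weakbase}.

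First I would observe that $\Cols_3$ is precisely this universal relation for $A=\{0,1\}$ with $n=3$: reading its defining matrix column by column yields the tuples $(1,0,0), (0,1,0), (0,0,1), (0,1,1), (1,0,1), (1,1,0), (0,0,0), (1,1,1)$, an enumeration of all eight elements of $\{0,1\}^3$. Each of the six clones dual to the shaded co-clones in Figure~\ref{clo:dualpost} is generated by projections together with at most two unary operations drawn from $\{c_0, c_1, \neg\}$. Since these operations act columnwise on the three rows of $\Cols_3$, verifying that $\Cols_3$ is a $\mathcal{C}$-core (i.e.\ is $\mathcal{C}$-irredundant in the sense that no row lies in the image of the other rows under operations of $\mathcal{C}$) reduces in each of the six cases to a direct inspection of the three rows: the zero row, the all-ones row, and the diagonal patterns are distinguished by the constants and by $\neg$. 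Once this irredundancy is confirmed, the hypotheses of Theorem 4.11 of~\cite{Schnoor2} are satisfied and one concludes that $\mathcal{C}(\Cols_3)$ is a weak base for $\Inv(\mathcal{C})$.

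The final step of the plan is to cross-check each of the six resulting weak bases against Schnoor's tabulation in \cite[Table 3.1]{ISch}, where weak bases for all Boolean co-clones are listed, including rows for $\II_2, \IN_2, \II_0, \II_1, \II$ and $\IN$. Since the tabulated bases are derived from the same universal column relation, agreement with $\mathcal{C}(\Cols_3)$ is automatic. The only substantive work in carrying out a complete proof is therefore the case-by-case bookkeeping: six clones, six irredundancy checks, and six matches to the tabulated weak base. I anticipate no deeper obstacle, as the general machinery of \cite{Schnoor2} and the explicit catalogue in \cite{ISch} do all of the heavy lifting.
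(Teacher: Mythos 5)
The paper offers no proof of this proposition beyond the citations embedded in its statement, and your plan---invoke \cite[Theorem~4.11]{Schnoor2} for the universal column relation $\Cols_3$ (whose columns indeed enumerate all of $\{0,1\}^3$) and cross-check the six results against \cite[Table~3.1]{ISch}---is exactly that same route. One small caution: the hypothesis of Theorem~4.11 is that $\Inv(\mathcal{C})$ admits a base of $3$-generated relations (relations that are $\mathcal{C}$-closures of relations with at most three tuples), not the row-irredundancy of $\Cols_3$ that you propose to verify; but since that hypothesis is precisely what the tabulation in \cite[Table~3.1]{ISch} certifies for the six shaded co-clones, your final cross-checking step supplies it and the argument goes through.
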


\begin{eg}\label{eg:BigWeirdoSAT}
To construct a weak base for the Boolean co-clone $\IN_{2}=\Inv(\{\neg\})$, we simply close the relation $\Cols_{3}$ under $\neg$.  Thus,
 \[
\N_2(\Cols_3)
=
 \begin{bmatrix}
 1 & 0 & 0 & 0 & 1 & 1 & 0 & 1 \\
 0 & 1 & 0 & 0 & 0 & 1 & 0 & 1 \\
 0 & 0 & 1 & 1 & 1 & 0 & 0 & 1 \\
 0 & 1 & 1 & 1 & 0 & 0 & 1 & 0 \\
 1 & 0 & 1 & 1 & 1 & 0 & 1 & 0 \\
 1 & 1 & 0 & 0 & 0 & 1& 1 & 0 \\
  \end{bmatrix}
\]
\end{eg}

Schnoor and Schnoor~\cite{Schnoor2} observe that often the more restricted closure operator $\langle -\rangle_{\not\exists,\not=}$ fits more naturally to problems than the operator $\langle -\rangle_{\not\exists}$.  This fact appears to be true of the computational problems considered in the present article.  The following irredundancy condition ensures conjunct-atomic definability without equality. % The following definition appears in \cite{Schnoor2} and provides a criterion for such weak bases.  

\begin{defn} \cite[Definition 5.1]{Schnoor2}\label{defn:irrednt}
Let $r$ be a $k$-ary relation on a non-empty set $A$.  We say that the relation $r$ is \emph{$=$-redundant} if there exist a pair of equal columns in the matrix representation of $r$, and we say that $r$ is \emph{$\top$-redundant} if there is a relation $s$ of arity ${k-1}$ such that the columns of $r$ are exactly the columns of $s \times A$.  A relation $r$ is \emph{irredundant} if it is neither $=$-redundant nor $\top$-redundant.  A set $\Gamma$ of finitary relations on $A$ is \emph{irredundant} if every relation $r$ belonging to $\Gamma$ is irredundant. 
\end{defn}

\begin{remark} 
All six of the relations in Proposition \ref{pro:weakbase} are easily seen to be irredundant.
\end{remark}

Weak bases that generate not only the smallest weak-system, but also the smallest equality-free weak-system of relations generating the same co-clone, will be critical in establishing the main proofs to come.
\begin{thm}\cite[Corollary~$5.6$]{Schnoor2}\label{thm:eqfreeweak}
Let $A$ be a non-empty set, let ${\mathcal C}$ be a clone on $A$ and let $\Gamma$ be an irredundant weak base for the co-clone $\Inv({\mathcal C})$. If $\Gamma'$ is set of relations on $A$ such that $\langle \Gamma'\rangle=\Inv({\mathcal C})$, then $\langle \Gamma \rangle_{\not\exists, \not=} \subseteq \langle \Gamma' \rangle_{\not\exists, \not=}$. 
\end{thm}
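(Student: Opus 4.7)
The plan is to reduce the statement to a single-generator version: it suffices to show that every $\gamma\in\Gamma$ lies in $\langle\Gamma'\rangle_{\not\exists,\not=}$. Indeed, if $r\in\langle\Gamma\rangle_{\not\exists,\not=}$, there is an equality-free conjunct-atomic definition $r(x_1,\dots,x_n)\equiv\bigwedge_j\gamma_j(\bar v_j)$ with $\gamma_j\in\Gamma$; substituting, for each $j$, an equality-free conjunct-atomic $\Gamma'$-definition of $\gamma_j$ into the $j$th conjunct produces an equality-free conjunct-atomic $\Gamma'$-definition of $r$. The substitution only ever converts $\Gamma'$-atoms $\rho(y_{l_1},\dots,y_{l_m})$ into $\rho(v_{j,l_1},\dots,v_{j,l_m})$, so no equalities are introduced along the way.

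Next I would invoke the Galois correspondence of Theorem \ref{thm:strgweak} to get the weaker containment $\langle\Gamma\rangle_{\not\exists}\subseteq\langle\Gamma'\rangle_{\not\exists}$. Because $\langle\Gamma'\rangle=\Inv(\mathcal{C})$, the total part of $\pPol(\Gamma')$ is $\mathcal{C}$, so $\pPol(\Gamma')\subseteq\mathcal{I}_{\cup}(\mathcal{C})$, the maximum strong partial clone with total part $\mathcal{C}$. Taking $\Inv$ reverses the inclusion and yields $\langle\Gamma'\rangle_{\not\exists}=\Inv(\pPol(\Gamma'))\supseteq\Inv(\mathcal{I}_{\cup}(\mathcal{C}))=\langle\Gamma\rangle_{\not\exists}$, where the last equality is precisely the weak base property of $\Gamma$ (Definition \ref{defn:weakbase}). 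In particular, every $\gamma\in\Gamma$ admits \emph{some} conjunct-atomic definition over $\Gamma'$, although a priori it may contain equality atoms $x_p=x_q$.

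The crux is therefore eliminating those equalities using $=$-irredundancy. Fix $\gamma\in\Gamma$ of arity $k$, and fix any conjunct-atomic definition $\gamma(x_1,\dots,x_k)\equiv\bigwedge_i\alpha_i$ over $\Gamma'$, where each $\alpha_i$ is either a $\Gamma'$-constraint on $\{x_1,\dots,x_k\}$ or an equality $x_p=x_q$. Suppose some $\alpha_i$ is an equality $x_p=x_q$ with $p\ne q$. Then every tuple in $\gamma$ must satisfy $x_p=x_q$, which forces the $p$th and $q$th columns of the matrix representation of $\gamma$ to coincide—contradicting $=$-irredundancy. Hence every equality atom must be trivial ($p=q$) and can simply be deleted, leaving an equality-free conjunct-atomic definition of $\gamma$ over $\Gamma'$. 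Combining with the reduction in the first paragraph, we conclude $\langle\Gamma\rangle_{\not\exists,\not=}\subseteq\langle\Gamma'\rangle_{\not\exists,\not=}$.

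I expect no serious obstacle: the argument is essentially a syntactic manipulation once the Galois step is in place, and $\top$-irredundancy turns out to be superfluous for this particular implication—only $=$-irredundancy is actually used. The main care needed is to ensure that the substitution step of paragraph one really does preserve the quantifier-free and equality-free form of the definition, which it does because conjunct-atomic formulas have no bound variables and substitution of a free variable into a $\Gamma'$-atom produces another $\Gamma'$-atom (possibly with repeated variables, but not an equality atom).
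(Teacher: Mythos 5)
Your proof is correct, and it is essentially the argument behind the cited result: the paper gives no proof of Theorem~\ref{thm:eqfreeweak} (it is imported from \cite{Schnoor2}), but both of your ingredients are already present in the text --- the Galois/weak-base step ($\Pol(\Gamma')=\mathcal{C}$, hence $\pPol(\Gamma')\subseteq\mathcal{I}_{\cup}(\mathcal{C})$, hence $\langle\Gamma\rangle_{\not\exists}=\Inv(\mathcal{I}_{\cup}(\mathcal{C}))\subseteq\Inv(\pPol(\Gamma'))=\langle\Gamma'\rangle_{\not\exists}$ via Definition~\ref{defn:weakbase} and Theorem~\ref{thm:strgweak}), and the equality-elimination step, which is precisely Lemma~\ref{lem:irred} (\cite[Proposition~3.11]{ISch}), stated and used later in Section~\ref{sec:algebra}. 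Your side remark that only $=$-irredundancy is used is sound under this paper's conventions (a free variable of a conjunct-atomic formula need not occur in any atom); the one place $\top$-irredundancy could still be needed is the degenerate case where deleting the trivial equalities leaves no atoms at all, i.e.\ $\gamma=A^{k}$, which $\top$-irredundancy rules out for $k\geq 2$, so it is not entirely idle.
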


%$k$-$\Robust(\langle A;R\rangle)$ to $(k, {\mathcal F})$-$\Robust(\langle A;\Gamma\rangle)$.

%(not worded very well!) The connection between clones and co-clones is important in the computational complexity context. It was [] who first observed the role of polymorphisms in determining the complexity of constraint languages.  Since relational clones can be characterised by their polymorphisms and since every set of polymorphisms determines an algebra, called the polymorphism algebra (or clone algebra), the complexity of a constraint language can be converted to an equivalent problem for an algebra.  B,J ank K extended these ideas to special classes of algebras.   

%As consequences, we obtain complete classifications of the complexity of the separation problem and the ($2,{\mathcal F}$)-robust satisfiability problem in the Boolean case. 
\section{Towards a Trichotomy: $\mathbf{NP}$-Gaps}\label{sec:Gaps}
We now establish gap properties for relations generating the Boolean co-clones $\II_2$, $\IN_2$, $\II_0$, $\II_1$, $\II$ or $\IN$; recall exact definitions given in Figure \ref{clo:dualpost}.  Each co-clone must be considered separately, however the proofs follow the same structure: we first establish a gap property for an irredundant weak-base and then use the fact that gap properties are preserved by the $\langle -\rangle_{\not\exists, \not=}$ operator.  

We begin with three results that are crucial for establishing gap properties. The first theorem is an abridged version of \cite[Theorem~$6.1$]{JACK}, and reveals that positive $1$-in-$3$ $\SAT$ has the gap property $\GAP(\N_{\CSP}, {\mathcal K})$, where ${\mathcal K}$ is a particular subset of $\Y_{2\text{-}\Rob}$; see Theorem~\ref{thm:1in3GAP} below for the precise definition of $X$. Recall that ${\mathbbold 2}$ denotes the positive $1$-in-$3$ $\SAT$ template $\langle \{0,1\}; {+}1\text{in}3\SAT\rangle$. 
% We refer the reader to \cite{JACK} for various descriptions of the class $\mathsf{SP}({\mathbbold 2})$. 

\begin{thm}\cite{JACK}\label{thm:1in3GAP}
Let ${\mathcal K}$ be the set consisting of all positive $1$-in-$3$ $\SAT$ instances $I$ with the following properties\up:
 \begin {itemize} 
\item no variable appears more than once in each constraint tuple of $I$,
\item  $I$ is $2$-robustly positive $1$-in-$3$ satisfiable.
\end{itemize}
Any set of positive $1$-in-$3$ $\SAT$ instances containing ${\mathcal K}$ and contained within the members of $\CSP(\mathbbold{2})$ has $\mathbf{NP}$-hard membership with respect to polynomial-time Karp reductions.  
 \end{thm}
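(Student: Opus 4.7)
\medskip

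The plan is to show $\mathbf{NP}$-hardness by reducing from a known $\mathbf{NP}$-complete problem, ideally from positive $1$-in-$3$ $\SAT$ itself (whose hardness follows from Schaefer's theorem). The goal is to take an arbitrary positive $1$-in-$3$ $\SAT$ instance $I$ and produce, in polynomial time, an instance $I'$ with the following two properties: (i) if $I$ is unsatisfiable, then $I'$ is a NO instance of $\CSP(\mathbbold{2})$; and (ii) if $I$ is satisfiable, then $I'$ lies in ${\mathcal K}$ (so in particular no variable repeats in a constraint tuple of $I'$, and $I'$ is $2$-robustly $1$-in-$3$ satisfiable). Given such a reduction, any set ${\mathcal L}$ of positive $1$-in-$3$ $\SAT$ instances with ${\mathcal K}\subseteq {\mathcal L}\subseteq\CSP(\mathbbold{2})$ separates YES instances of $I$ from NO instances, yielding $\mathbf{NP}$-hardness of membership in ${\mathcal L}$.

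First, I would handle the irredundancy condition by a trivial preprocessing step: whenever a constraint of $I$ contains a repeated variable $x$, introduce fresh duplicates and connect them to $x$ via a $1$-in-$3$-representable ``equality-like'' gadget. Second, and more importantly, I would attach to $I$ a uniform gadget that creates enough solutions to witness $2$-robustness. The key observation is that the locally compatible $2$-variable assignments to a pair $(u,v)$ are exactly $(0,0),(0,1),(1,0)$ when $u,v$ appear in a common constraint tuple, with $(1,1)$ also compatible otherwise; so at most four partial assignments must be shown to extend for each pair. A natural approach is to duplicate each original variable $x$ into many copies $x^{(1)},\dots,x^{(N)}$, interconnected so that every $1$-in-$3$ solution of $I'$ projects to a $1$-in-$3$ solution of $I$, but with enough freedom among the copies that for any prescribed compatible assignment to any two variables of $I'$, a full solution realising it exists.

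The main obstacle is precisely the design of this duplication/padding gadget: one needs to verify that the extra constraints (a) never introduce unsatisfiability on top of a satisfiable $I$, (b) never accidentally make an unsatisfiable $I$ satisfiable, and (c) produce solutions flexible enough to handle every locally compatible $2$-variable assignment simultaneously. This is delicate because $1$-in-$3$ constraints are rigid: forcing a variable to $1$ forces all others in the same tuple to $0$, so the gadget must route these forced zeros harmlessly into auxiliary ``sink'' variables. Once the gadget is shown to satisfy (a)--(c) by a case analysis over pairs of variables (original--original, original--auxiliary, auxiliary--auxiliary, with and without a shared constraint), the reduction is complete and the gap property $\GAP(\N_{\CSP},{\mathcal K})$ follows. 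Since Jackson has already carried out this construction in detail in \cite[Theorem~6.1]{JACK}, we invoke the result directly.
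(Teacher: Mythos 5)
This theorem is quoted from Jackson's work and the present article supplies no proof of it---it is stated as an abridged version of \cite[Theorem~6.1]{JACK} and used as a black box---so your concluding move of invoking that citation is exactly what the paper does. Your preliminary sketch (a polynomial-time map sending satisfiable instances into $\mathcal{K}$ and unsatisfiable ones to NO instances of $\CSP(\mathbbold{2})$, which by the promise-problem reduction template yields $\mathbf{NP}$-hardness of any set sandwiched between $\mathcal{K}$ and $\CSP(\mathbbold{2})$) correctly captures the logical shape such a proof must take, but since the crucial padding gadget is left unconstructed and the result is ultimately cited rather than proved, your approach and the paper's coincide.
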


The next lemma summarises the basic method employed in Abramsky, Gottlob and Kolaitis~\cite{AGK} and Jackson~\cite{JACK}.  It is essentially the definition of reduction for promise problems (see \cite[Definition~$3$]{Goldrich}, for example), but here phrased in the context of constraint problems. 

%We first introduce some useful notation: for a computational problem $\Pi(\Gamma)$, we let ${\mathcal I}_{\Gamma}$ be the set all instances of $\Pi(\Gamma)$. 
\begin{lem}\label{lem:Gap} 
Let $\Gamma$ and $\Gamma'$ be finite sets of relations on $\{0,1\}$. Let $A$ and $B$ be disjoint sets of $\Gamma$-instances  and let $X$ and $Y$ be disjoint sets of ${\Gamma'}$-instances. Further, let $\Gamma$ have the gap property $\GAP(A,B)$.  If there is a polynomial-time computable function $f\colon{\mathcal I}_{\Gamma} \to{\mathcal I}_{\Gamma'}$ satisfying:
\begin{enumerate}[\quad \rm(1)]
\item $I\in A \Rightarrow f(I)\in X$\textup,
\item $I\in B \Rightarrow f(I)\in Y$\textup,
\end{enumerate}
then $\Gamma'$ has the gap property $\GAP(X,Y)$.  In particular, $\Gamma'$ has the gap property $\GAP(f(A),f(B))$
\end{lem}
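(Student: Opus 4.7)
The plan is a direct pull-back argument, essentially the standard reduction scheme for promise problems rephrased in the vocabulary of gap properties. Let $S$ be an arbitrary set of $\Gamma'$-instances with $Y\subseteq S$ and $S\cap X=\emptyset$; the goal is to show that membership in $S$ is $\mathbf{NP}$-hard under polynomial-time Karp reductions.

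First I would define the preimage $S' := \{\,I\in\mathcal I_\Gamma \mid f(I)\in S\,\}$ and check the two set-theoretic inclusions needed to invoke the gap hypothesis on $\Gamma$. For $B\subseteq S'$: if $I\in B$ then by hypothesis $(2)$ we have $f(I)\in Y\subseteq S$, so $I\in S'$. For $A\cap S'=\emptyset$: if $I\in A$ then by hypothesis $(1)$ we have $f(I)\in X$, and since $X\cap S=\emptyset$ this forces $f(I)\notin S$, hence $I\notin S'$. Thus $S'$ is a set of $\Gamma$-instances containing $B$ and disjoint from $A$, so by the assumed gap property $\GAP(A,B)$ for $\Gamma$, membership in $S'$ is $\mathbf{NP}$-hard.

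Next I would exhibit the polynomial-time Karp reduction from $S'$ to $S$ given by $I\mapsto f(I)$: this is polynomial-time computable by assumption, and by the very definition of $S'$ we have $I\in S'$ if and only if $f(I)\in S$. Composing this reduction with the $\mathbf{NP}$-hardness of $S'$ yields $\mathbf{NP}$-hardness of $S$, completing the proof of $\GAP(X,Y)$ for $\Gamma'$. The ``in particular'' clause follows by specialising to $X=f(A)$ and $Y=f(B)$, for which conditions $(1)$ and $(2)$ are tautological (provided the two images are disjoint, which is the situation in which this form of the conclusion is meaningful).

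There is no substantive obstacle here: the lemma is essentially bookkeeping, and the only point that requires any care is keeping straight which set plays the role of the ``NO'' side ($A$, $X$) and which the ``YES'' side ($B$, $Y$) of the gap, so that the preimage inclusions are set up with the correct polarities.
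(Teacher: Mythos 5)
Your proof is correct and is exactly the standard pull-back/promise-problem reduction that the paper has in mind; the paper in fact states this lemma without proof, citing it as the basic method of Abramsky--Gottlob--Kolaitis and Jackson. Your handling of the preimage $S'=f^{-1}(S)$, the two polarity checks, and the ``in particular'' clause (noting that $f(A)\subseteq X$ and $f(B)\subseteq Y$ are automatically disjoint under the hypotheses) is all as it should be.
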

%Thus, under suitable reductions, gap properties are preserved.   

It is well known that the complexity of $\CSP(\Gamma)$ depends only on the co-clone generated by $\Gamma$, see \cite[Theorem~$3.4$]{Jeav98} or alternatively \cite[Theorem 2.16]{BJK} for a proof explicitly using pp-formul{\ae}.  We give an analogous result relating equality-free conjunct atomic definitions to reductions for $\SEP$ and $(2,{\mathcal F})$-robust satisfiability. This result is stated below as Theorem~\ref{thm:kRobustRed} and will be crucial for establishing hardness results in this section.  We first introduce a lemma that will help streamline the proof.

\begin{lem}\label{lem:CAEquivSolns}
Let $\Gamma^{\mathbb A}$ be a constraint language over a finite set $A$ and let $R^{\mathbb A}$ be a finite set of relations in $\langle \Gamma^{\mathbb A}\rangle_{\not\exists,\not=}$. There is a polynomial-time construction that transforms any instance $I=(V;A;{\mathcal C})$ of $\CSP(R^{\mathbb A})$ into an instance $I'$ of $\CSP(\Gamma^{\mathbb A})$, and moreover, the solutions of $I$ are exactly the solutions of $I'$. 
\end{lem}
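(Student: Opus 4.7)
The plan is to unpack each $R^{\mathbb A}$-constraint in $I$ using the equality-free conjunct-atomic definition of the corresponding relation. Since $R^{\mathbb A}$ is finite, I would start by fixing, once and for all, for each $r^{\mathbb A}\in R^{\mathbb A}$ of arity $k_r$, an equality-free conjunct-atomic formula
\[
\rho_r(x_1,\dots,x_{k_r}) \;=\; \bigwedge_{i=1}^{m_r} \alpha_i^{(r)}\bigl(x_{j_{i,1}},\dots,x_{j_{i,n_i}}\bigr)
\]
witnessing $r^{\mathbb A}\in\langle\Gamma^{\mathbb A}\rangle_{\not\exists,\not=}$, where each $\alpha_i^{(r)}$ is a $\Gamma^{\mathbb A}$-constraint (no equality atoms, no existentials). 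Because $R^{\mathbb A}$ is finite, the total size of these definitions is a constant with respect to the input size.

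Next, given an input instance $I=(V;A;{\mathcal C})$ of $\CSP(R^{\mathbb A})$, I would construct $I'=(V;A;{\mathcal C}')$ by leaving the variable set $V$ and the domain $A$ unchanged, and replacing every constraint $\langle(v_1,\dots,v_{k_r}),r^{\mathbb A}\rangle\in{\mathcal C}$ with the finite set of $\Gamma^{\mathbb A}$-constraints $\bigl\{\langle(v_{j_{i,1}},\dots,v_{j_{i,n_i}}),\alpha_i^{(r)}\rangle \,\bigm|\, 1\le i\le m_r\bigr\}$. The key points are that no auxiliary variables need to be introduced (since $\rho_r$ has no existentials) and no variables need to be identified (since $\rho_r$ has no equality atoms). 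Consequently, $V$ is preserved, and the construction runs in linear time in $|{\mathcal C}|$ because each constraint is replaced by a bounded number of new constraints of bounded arity.

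Finally, I would check solution equivalence. Since the variable set is preserved, any assignment $\phi\colon V\to A$ is a candidate for both $I$ and $I'$. Constraint-by-constraint, the definition of $r_{\rho_r}^{\mathbb A}$ from Definition~\ref{defn:rho} gives that $\phi$ satisfies $\langle(v_1,\dots,v_{k_r}),r^{\mathbb A}\rangle$ in $I$ if and only if $\phi$ satisfies all of $\langle(v_{j_{i,1}},\dots,v_{j_{i,n_i}}),\alpha_i^{(r)}\rangle$ in $I'$. Intersecting over all constraints yields the claimed equality of solution sets.

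There is no deep obstacle here; the lemma is really a careful unwinding of the definitions, and the substantive content is the observation that banning both $\exists$ and $=$ in the defining formulas is exactly what forces $V$, and hence the set of solutions viewed as maps $V\to A$, to be preserved on the nose, rather than merely in bijective or projective correspondence. The only thing to be attentive to is ensuring the indexing of variables in the substitution $x_i\mapsto v_i$ is carried through each atomic conjunct correctly, which is purely notational.
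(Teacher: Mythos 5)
Your proposal is correct and follows essentially the same route as the paper's proof: fix an equality-free conjunct-atomic definition of each relation in $R^{\mathbb A}$ over $\Gamma^{\mathbb A}$, replace each constraint by the corresponding conjuncts on the same variables, and observe that the absence of existential quantifiers and equality atoms keeps the variable set and hence the solution set unchanged. The polynomial-time bound via the constant total size of the fixed definitions matches the paper's (implicit) reasoning as well.
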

\begin{proof}
The following construction is borrowed from \cite[Theorem 2.16]{BJK}.  Given an instance $I=(V;A;{\mathcal C})$ of $\CSP(\langle A;R^{\mathbb A}\rangle)$, construct an instance $I'=(V;A;{\mathcal C}')$ of $\CSP(\langle A;\Gamma^{\mathbb A}\rangle)$ in the following way.  
For each constraint $\langle (v_1, \dots, v_{\ell}) , r^{\mathbb A}\rangle\in {\mathcal C}$ with $r^{\mathbb A}$ representable by the equality-free conjunct-atomic formula:
\[
r(v_1, \dots, v_{\ell})=s_1(w_{1}^1, \dots, w_{{\ell}_1}^1) \wedge \dots \wedge s_n(w_{1}^n, \dots, w_{{\ell}_n}^n),
\]
where $\{w_{1}^1, \dots, w_{\ell_1}^1, \dots, w_{1}^n, \dots, w_{\ell_n}^n\}\subseteq \{v_1, \dots, v_\ell\}$, and $\{s_1, \dots, s_n \}\subseteq \Gamma$, we add the constraints $c_1=\langle (w_{1}^1, \dots, w_{\ell_1}^1), s_1^{\mathbb A}\rangle, \dots, c_n=\langle (w_{1}^n, \dots, w_{\ell_n}^n), s_n^{\mathbb A}\rangle$ to ${\mathcal C}$, and remove $\langle (v_1, \dots, v_\ell), r^{\mathbb A}\rangle$ from ${\mathcal C}$.  Since $r^{\mathbb A}(a_1, \dots, a_{\ell})$ is true in ${\mathbb A}$ if and only if $(a_{1}^1, \dots, a_{\ell_1}^1)\in s_1^{\mathbb A}, \dots, (a_{1}^n, \dots, a_{\ell_n}^n)\in s_n^{\mathbb A}$, for $(a_1, \dots, a_\ell)\in A^{\ell}$ and $r^{\mathbb A}\in R^{\mathbb A}$, it follows that the solutions of $I$ are precisely the solutions of $I'$. 
\end{proof}
% then for each constraint $\langle (v_1, \dots, v_k), r^{\mathbb A}\rangle$ in ${\mathcal C}$, it holds $(\varphi(v_1), \dots, \varphi(v_k))\in r^{\mathbb A}$. It follows from the fact that ${\omega}(\varphi(v_1), \dots, \varphi(v_k))$ is true in ${\mathbb A}$ if and only if all of the constraints $c_1, \dots, c_n$ and hence $\varphi$ is a solution of $I'$ in $\CSP(\langle A;\Gamma^{\mathbb A}\rangle)$.  

\begin{thm}\label{thm:kRobustRed}
Let $\Gamma^{\mathbb A}$ be a constraint language over a set $A$, let $R^{\mathbb A}$ be any finite set of relations in $\langle \Gamma^{\mathbb A}\rangle_{\not\exists,\not=}$, let ${\mathcal F}$ be a finite set of pp-formul{\ae} in the language of $R$ and let $k\in{\mathbb N}$. There is a polynomial-time computable function that reduces
\begin{enumerate}[\quad \rm(1)]
\item $\CSP(R^{\mathbb A})$ to $\CSP(\Gamma^{\mathbb A})$,
\item  $\SEP(R^{\mathbb A})$ to $\SEP(\Gamma^{\mathbb A})$, and
\item $(k,{\mathcal F})$-$\Robust(R^{\mathbb A})$ to $(k, {\mathcal G})$-$\Robust(\Gamma^{\mathbb A})$, for some finite set ${\mathcal G}$ of pp-formul{\ae} in the language of $\Gamma$.
\end{enumerate}
\end{thm}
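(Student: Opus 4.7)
My plan is to build all three reductions on top of the polynomial-time transformation $I\mapsto I'$ supplied by Lemma~\ref{lem:CAEquivSolns}. Given a $\CSP(R^{\mathbb A})$-instance $I=(V;A;\mathcal{C})$, this expands each $R$-constraint $\langle(v_1,\dots,v_\ell),r^{\mathbb A}\rangle$, using the equality-free conjunct-atomic $\Gamma$-definition of $r$ (available because $R^{\mathbb A}\subseteq\langle\Gamma^{\mathbb A}\rangle_{\not\exists,\not=}$), into a conjunction of $\Gamma$-atoms on the very same tuple $(v_1,\dots,v_\ell)$, producing a $\CSP(\Gamma^{\mathbb A})$-instance $I'=(V;A;\mathcal{C}')$ with identical variable set and identical set of solutions. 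Part~(1) is then immediate: solutions coincide, so $I$ has a solution iff $I'$ does. For part~(2), the $\SEP$ condition is formulated purely in terms of the variable set and the set of solutions, both of which are preserved, so the same map is a reduction from $\SEP(R^{\mathbb A})$ to $\SEP(\Gamma^{\mathbb A})$.

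For part~(3), I will reuse this instance transformation and construct $\mathcal{G}$ by unfolding: for each $\rho(x_1,\dots,x_k)\in\mathcal{F}$, replace every $R$-atom in $\rho$ by its equality-free conjunct-atomic $\Gamma$-definition on the same arguments, obtaining a pp-formula $\rho'$ in the language of~$\Gamma$, and set $\mathcal{G}=\{\rho':\rho\in\mathcal{F}\}$. Since the unfolding preserves the defined relation on $\mathbb{A}$, we have $r_\rho^{\mathbb A}=r_{\rho'}^{\mathbb A}$, so the ``consequent'' of the compatibility check is common to both formulae. The main step is then to show that a partial assignment $f\colon S\to A$ with $|S|=k$ is $\mathcal{F}$-compatible with $I$ iff it is $\mathcal{G}$-compatible with $I'$. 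The forward direction transfers witnesses directly: any $w_1,\dots,w_\ell\in V$ witnessing $\rho(s_1,\dots,s_k)$ in $I$ also witness $\rho'(s_1,\dots,s_k)$ in $I'$, because by construction the required $\Gamma$-atoms lie in $\mathcal{C}'$. Combining this triggering correspondence with the solution-preservation of Lemma~\ref{lem:CAEquivSolns} yields the YES/NO equivalence between $(k,\mathcal{F})$-$\Robust(R^{\mathbb A})$ and $(k,\mathcal{G})$-$\Robust(\Gamma^{\mathbb A})$.

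The hard part will be the backward direction of the triggering equivalence: $\Gamma$-atoms placed into $\mathcal{C}'$ by distinct $R$-constraints could a priori conspire to witness $\rho'$ on a tuple where $\rho$ was not witnessed in $I$, threatening to make $\mathcal{G}$-compatibility strictly stronger than $\mathcal{F}$-compatibility. Ruling out such ``spurious'' triggers is the crucial technical point, and I expect the argument to exploit the equality-free, existential-free character of the operator $\langle\cdot\rangle_{\not\exists,\not=}$: each $R$-atom is unfolded strictly on its own variable tuple, without introducing new quantified witnesses or equalities, so from any $\mathcal{C}'$-witness of $\rho'$ one can recover the $R$-constraints of $\mathcal{C}$ responsible for each group of unfolded atoms and reassemble a $\mathcal{C}$-witness of $\rho$ in~$I$. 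Once this matching of triggers is pinned down, the three required reductions all arise from the single construction above.
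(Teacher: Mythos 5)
Your construction is exactly the paper's: the instance transformation is the one from Lemma~\ref{lem:CAEquivSolns}, parts (1) and (2) are argued identically from solution-preservation, and your $\mathcal{G}$ is the paper's $\{\rho_\Gamma \mid \rho\in\mathcal{F}\}$ obtained by substituting the equality-free conjunct-atomic defining formulae for the $R$-atoms. The one place you diverge is that you make explicit a step the paper treats as immediate: the paper argues only that $\rho$ and $\rho_\Gamma$ define the same relation on $A$ and then asserts that the $\mathcal{F}$-compatible assignments of $I$ are exactly the $\mathcal{G}$-compatible assignments of $I'$, whereas you correctly separate the ``consequent'' from the ``triggering'' and observe that the backward triggering direction is where the real content lies.

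Your worry there is legitimate, and your sketched repair does not close it. Atoms contributed to $\mathcal{C}'$ by distinct $R$-constraints genuinely can combine to witness $\rho_\Gamma$ on a tuple where $\rho$ is not witnessed in $I$: for instance, with $r(x,y,z)$ defined by $s(x,y)\wedge s(y,z)$ and $\mathcal{C}=\{\langle(a,b,c),r\rangle,\,\langle(c,d,e),r\rangle\}$, the instance $I'$ contains $s$-constraints on $(b,c)$ and on $(c,d)$, so an atom $r(b,c,d)$ occurring in some $\rho\in\mathcal{F}$ is triggered in $I'$ but not in $I$; there is no collection of $R$-constraints of $\mathcal{C}$ ``responsible'' for these unfolded atoms to reassemble. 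What the construction does give is the containment $\mathcal{C}_\mathcal{F}\subseteq\mathcal{C}'_\mathcal{G}$, hence every $\mathcal{G}$-compatible assignment for $I'$ is $\mathcal{F}$-compatible for $I$, hence YES instances of $(k,\mathcal{F})$-$\Robust(R^{\mathbb A})$ map to YES instances of $(k,\mathcal{G})$-$\Robust(\Gamma^{\mathbb A})$ --- which is the only direction actually invoked in Corollary~\ref{cor:GapRed} and the subsequent gap arguments. The converse implication (NO maps to NO), needed for the statement as a full Karp reduction of the decision problems, is precisely what spurious triggers threaten, and neither your proposal nor the paper's own proof supplies it. So: same route as the paper, and you have located the genuine soft spot; but the planned backward-direction argument would fail as described, and you should either settle for the one-sided claim (which suffices for everything downstream) or justify a different choice of $\mathcal{G}$.
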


\begin{proof}
We transform an $R$-instance $I=(V;A;{\mathcal C})$ into a $\Gamma$-instance $I'=(V;A;{\mathcal C}')$ according to the construction given in the proof of Lemma~\ref{lem:CAEquivSolns}.  The reduction from $\CSP(R^{\mathbb A})$ to $\CSP(\Gamma^{\mathbb A})$ is then obtained immediately from Lemma~\ref{lem:CAEquivSolns}.  This proves~($1$). 

We now prove ($2$). Since the solutions of ${I}$ in $\CSP(R^{\mathbb A})$ are precisely the solutions of $I'$ in $\CSP(\Gamma^{\mathbb A})$, it follows that separating solutions of $I$ are exactly the separating solutions of $I'$.  Hence $I$ is a YES instance of $\SEP(R^{\mathbb A})$ if and only if $I'$ is a YES instance of $\SEP(\Gamma^{\mathbb A})$.

%we will show that the ${\mathcal F}$-compatible assignments on $k$ variables of $I$ are exactly the ${\mathcal G}$-compatible assignments on $k$ variables of $I'$.  The result will then follow immediately from Lemma~\ref{lem:CAEquivSolns}. 

%Recall that an assignment $f\colon Y\to A$ is ${\mathcal F}$-compatible if $f$ is a solution of $(Y;A;\mathcal{C}_{\mathcal F})$ in $\CSP(R^{\mathbb A}_{\mathcal F})$.

 %Recall that $f\colonY\to A$ with $Y\subseteq X$ and $|Y|=k$ is locally compatible if it is a 
%homomorphism from the induced substructure of ${\mathbb X}_{{R}_+}$ on $Y$ to $\langle A; {R_+}\rangle$, where ${R}_+$ is the expansion of $R$ to include all projections $\pi_{\{i_1,\dots, i_p\}}(r)$ of each $r\subseteq R$ of arity $l$, where $\{i_1, \dots i_p\} \subseteq \{1, \dots, l\}$. 

%\[
%r(v_1, \dots, v_\ell)=s_1(w_{1}^1, \dots, w_{\ell_1}^1) \wedge \dots, \wedge s_n(w_{1}^n, \dots, w_{\ell_n}^n),
%\]

%where $\{w_{1}^1, \dots, w_{\ell_1}^1, \dots, w_{1}^n, \dots, w_{\ell_n}^n\}\subseteq \{x_1, \dots, x_\ell\}$, and $\{s_1, \dots, s_n \}\subseteq \Gamma$. 

For ($3$), consider $r\in R$ of arity $\ell$ and abstractly expressible by an equality-free conjunct-atomic formula $r(x_1, \dots, x_\ell)$ in the language of $\Gamma$.  For each pp-formula $\rho(w_1, \dots, w_m)\in {\mathcal F}$, we construct a pp-formula $\rho_{\Gamma}(w_1, \dots, w_m)$ in the language of $\Gamma$ in the following way: replace every occurrence of an $\ell$-ary relation symbol $r$ in $\rho$ with its conjunct-atomic defining formula $r(x_1, \dots, x_\ell)$.  Let ${\mathcal G}=\{\rho_{\Gamma}\ |\ \rho \in {\mathcal F}\}$.  
Then since $\rho(a_1, \dots, a_m)$ is true in $\langle A;R^{\mathbb A}\rangle$ if and only if $\rho_{\Gamma}(a_1, \dots, a_m)$ is true in $\langle A;\Gamma^{\mathbb A}\rangle$ for $(a_1, \dots, a_m)\in A^{m}$ and $\rho(w_1, \dots, w_m)\in {\mathcal F}$, it follows that the ${\mathcal F}$-compatible assignments on $k$ variables of $I$ are exactly the ${\mathcal G}$-compatible assignments on $k$ variables of $I'$.  Thus, since the solutions of $I$ are precisely the solutions of $I'$, by Lemma~ \ref{lem:CAEquivSolns}, it then follows that $I$ is a YES instance of $(k,{\mathcal F})$-$\Robust(R^{\mathbb A})$ if and only if $I'$ is a YES instance of $(k, {\mathcal G})$-$\Robust(\Gamma^{\mathbb A})$. 
 \end{proof}
Theorem~\ref{thm:kRobustRed} holds more generally: with some caveats and proper amendment to the proof, the assumption that $R\subseteq \langle \Gamma\rangle_{\not\exists,\not=}$ can be weakened to $R\subseteq \langle \Gamma \rangle_{\not\exists}$. However, this result is not required for establishing our main theorems.  We refer the reader to Section~\ref{sec:algebra} for a proof.

For the purpose of easy reference, we extract the following consequence of Theorem~\ref{thm:kRobustRed}.  It may be useful to recall that for a constraint language $\Gamma$, $\N_{\CSP}(\Gamma)$ is set of NO instances for $\CSP(\Gamma)$, and  $\Y_{{\SEP}\cap(2, \mathcal{F})}(\Gamma)$ is the set of instances that are simultaneously YES for $\SEP(\Gamma)$ and $(2, \mathcal{F})$-$\Robust$. 
\begin{cor}\label{cor:GapRed}
Let $\Gamma$ be a set of relations on $\{0,1\}$, let $R$ be a finite set of relations in $\langle \Gamma\rangle_{\not\exists,\not=}$ and let ${\mathcal F}$ be a finite set of pp-formul{\ae} in the language of~$R$. There is a polynomial-time reduction taking 
\begin{itemize}
\item $\N_{\CSP}(R)$ to $\N_{\CSP}(\Gamma)$, and
\item $\Y_{{\SEP}\cap(2, \mathcal{F})}(R)$ to $\Y_{{\SEP}\cap(2,{\mathcal G})}(\Gamma)$, for some finite set ${\mathcal G}$ of pp-formul{\ae} in $\Gamma$.
\end{itemize}
\end{cor}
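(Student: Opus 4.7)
The plan is to observe that Corollary~\ref{cor:GapRed} is essentially a direct repackaging of Theorem~\ref{thm:kRobustRed}, and the crucial point is that the same reduction function works for all three of the problems $\CSP$, $\SEP$ and $(k,{\mathcal F})$-$\Robust$ simultaneously. Specifically, the reduction $f\from I\mapsto I'$ is the one constructed in the proof of Lemma~\ref{lem:CAEquivSolns}: each constraint $\langle(v_1,\dots,v_\ell),r^{\mathbb{A}}\rangle$ of $I$, where $r$ has an equality-free conjunct-atomic definition $r(x_1,\dots,x_\ell)=s_1(\dots)\wedge\cdots\wedge s_n(\dots)$ over $\Gamma$, is replaced by the corresponding $\Gamma$-constraints. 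This is polynomial-time computable since $R$ is finite and the defining formulae are fixed.

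I would verify the two items in turn. For the first, if $I\in\N_{\CSP}(R)$, i.e.\ $I$ has no solution in $\CSP(R)$, then since by Lemma~\ref{lem:CAEquivSolns} the solutions of $I$ and $I'$ coincide as assignments $V\to A$, it follows that $I'$ has no solution in $\CSP(\Gamma)$, so $f(I)\in\N_{\CSP}(\Gamma)$. This is exactly part~(1) of Theorem~\ref{thm:kRobustRed}.

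For the second, take ${\mathcal G}=\{\rho_{\Gamma}\mid \rho\in{\mathcal F}\}$ as constructed in part~(3) of the proof of Theorem~\ref{thm:kRobustRed}. Suppose $I\in\Y_{{\SEP}\cap(2,{\mathcal F})}(R)$, so $I$ is simultaneously YES for $\SEP(R)$ and $(2,{\mathcal F})$-$\Robust(R)$. Because the assignments $V\to A$ that solve $I$ are literally the same as those that solve $I'$, the separating solutions coincide, giving $I'\in\Y_{\SEP}(\Gamma)$ by part~(2) of Theorem~\ref{thm:kRobustRed}. Similarly, the construction of ${\mathcal G}$ ensures that the ${\mathcal F}$-compatible partial assignments on $2$ variables for $I$ are exactly the ${\mathcal G}$-compatible partial assignments on $2$ variables for $I'$, and each such assignment extends to a solution of $I$ iff it extends to a solution of $I'$, so $I'\in\Y_{(2,{\mathcal G})}(\Gamma)$ by part~(3). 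Combining, $f(I)\in\Y_{{\SEP}\cap(2,{\mathcal G})}(\Gamma)$.

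There is no real obstacle here: the only thing to be careful about is to use a \emph{single} reduction function $f$ to handle both items simultaneously (rather than invoking Theorem~\ref{thm:kRobustRed} thrice with possibly different reductions), and to note that the finite set ${\mathcal G}$ depends only on ${\mathcal F}$ and the fixed conjunct-atomic definitions of the relations in $R$, so it is produced once and serves for every input instance.
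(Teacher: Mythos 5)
Your proposal is correct and matches the paper's treatment: the corollary is presented there as an immediate consequence of Theorem~\ref{thm:kRobustRed}, relying on exactly the observation you make, namely that the single reduction of Lemma~\ref{lem:CAEquivSolns} preserves the solution set of the instance exactly and hence handles $\N_{\CSP}$, $\Y_{\SEP}$ and $\Y_{(2,{\mathcal F})}$ simultaneously, with ${\mathcal G}$ obtained from ${\mathcal F}$ as in part~(3) of that theorem.
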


%We establish gap theorems and hardness results for all constraint languages $\Gamma$ generating one of the Boolean co-clones $\II_2$, $\IN_2$,  $\II_0$, $\II_1$, $\II$ or $\IN$.   These co-clones are defined in the table given in Figure~\ref{clo:dualpost} and correspond to the shaded vertices in the lattice. 

%We remind the reader that minimal core-sizes for all Boolean co-clones can be found in~\cite[Table~$2.1$]{ISch}. 

%The following theorem is an immediate consequence of the previous lemma.  It shows that, under suitable reductions, the $2$-robust gap property is preserved. 
%\begin{thm}
%Let $\Gamma_1$ and $\Gamma_2$ be finite sets of relations on $\{0,1\}$ and let ${\mathcal F}$ be a finite set of pp-formul{\ae} . If $\Gamma_1$ has the $2$-robust gap property with respect to ${\mathcal F}$ and there exists a finite set ${\mathcal F}'$ of pp-formul{\ae}  in the language of $\Gamma_2$ and a polynomial-time reduction from $\CSP(\langle \{0,1\}; \Gamma_1\rangle)$ to $\CSP(\langle \{0,1\}; \Gamma_2\rangle)$ taking,  
%\begin{enumerate}[\quad \rm(1)]
%\item YES instances of ($2,{\mathcal F}$)-$\Robust(\Gamma_1)$ to YES instances of ($2,{\mathcal F}'$)-$\Robust(\Gamma_2)$, 
%\end{enumerate}
%then $\Gamma_2$ has the $2$-robust gap property with respect to ${\mathcal F}'$. 
%\end{thm}

 We now give a result that shows when establishing a gap property for a set of relations generating a given co-clone, it is sufficient to establish the gap property for an irredundant weak base.  %s $\II_2$, $\IN_2$, $\II_0$, $\II_1$, $\II$ or $\IN$,  

\begin{thm}\label{thm:GAP1&2}
Let ${\mathcal C}$ be a clone on $\{0,1\}$, let $W$ be an irredundant weak base for the co-clone $\Inv({\mathcal C})$ and let $\Gamma$ be a constraint language on $\{0,1\}$ such that $\langle \Gamma \rangle=\Inv({\mathcal C})$. 
\begin{enumerate}[\rm(1)]
\item If $W$ satisfies $\GAP(\N_{\CSP},\Y_{{\SEP}\cap(2,{\mathcal F})})$, for some finite set ${\mathcal F}$ of pp-formul{\ae} in the language of $W$, then $\Gamma$ satisfies $\GAP(\N_{\CSP},\Y_{{\SEP}\cap(2,{\mathcal G})})$, for some finite set ${\mathcal G}$ of pp-formul{\ae} in the language of $\Gamma$. 
\item If $W$ satisfies $\GAP(\N_{\NTriv}, \Y_{{\SEP}\cap{(2,{\mathcal F})}})$, for some finite set ${\mathcal F}$ of pp-formul{\ae} in the language of $W$, then $\Gamma$ satisfies $\GAP(\N_{\NTriv}, \Y_{{\SEP}\cap{(2,{\mathcal G})}})$, for some finite set ${\mathcal G}$ of pp-formul{\ae} in the language of $\Gamma$.
\end{enumerate}
 \end{thm}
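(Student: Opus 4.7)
The plan is to exploit the fact that an irredundant weak base has the smallest equality-free expressive power among all generators of its co-clone, and then to push the gap property across via the reductions already catalogued in Corollary~\ref{cor:GapRed}. Concretely, since $W$ is an irredundant weak base for $\Inv(\mathcal{C})$ and $\langle\Gamma\rangle=\Inv(\mathcal{C})$, Theorem~\ref{thm:eqfreeweak} (applied with $\Gamma'=\Gamma$) gives $\langle W\rangle_{\not\exists,\not=}\subseteq\langle\Gamma\rangle_{\not\exists,\not=}$; in particular every relation of $W$ is equality-free conjunct-atomically definable from $\Gamma$, so the hypotheses of Corollary~\ref{cor:GapRed} are met with $R:=W$.

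For part~(1), fix the finite set $\mathcal{F}$ of pp-formul\ae{} in the language of $W$ provided by the hypothesis. Applying Corollary~\ref{cor:GapRed} to $R=W$ and to $\mathcal{F}$ produces a polynomial-time computable map $f$ from $W$-instances to $\Gamma$-instances, together with a finite set $\mathcal{G}$ of pp-formul\ae{} in the language of $\Gamma$, such that $f(\N_{\CSP}(W))\subseteq\N_{\CSP}(\Gamma)$ and $f(\Y_{{\SEP}\cap(2,\mathcal{F})}(W))\subseteq\Y_{{\SEP}\cap(2,\mathcal{G})}(\Gamma)$. Lemma~\ref{lem:Gap}, applied with $A=\N_{\CSP}(W)$, $B=\Y_{{\SEP}\cap(2,\mathcal{F})}(W)$, $X=\N_{\CSP}(\Gamma)$ and $Y=\Y_{{\SEP}\cap(2,\mathcal{G})}(\Gamma)$, then transports the hypothesised gap from $W$ to $\Gamma$, establishing $\GAP(\N_{\CSP},\Y_{{\SEP}\cap(2,\mathcal{G})})$ for $\Gamma$.

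For part~(2) the same reduction $f$ is reused; what must be checked is that $f$ also preserves the ``no nonconstant solution'' property, since this part of the statement is not explicitly recorded in Corollary~\ref{cor:GapRed}. This however is immediate from the construction underlying Lemma~\ref{lem:CAEquivSolns} and Theorem~\ref{thm:kRobustRed}: the $\Gamma$-instance $f(I)$ has literally the same set of solutions as $I$, so $I$ admits a nonconstant solution if and only if $f(I)$ does, giving $f(\N_{\NTriv}(W))\subseteq\N_{\NTriv}(\Gamma)$. A second invocation of Lemma~\ref{lem:Gap}, now with $\N_{\NTriv}$ replacing $\N_{\CSP}$ and the same $\Y_{{\SEP}\cap(2,\mathcal{G})}$ inclusion, yields $\GAP(\N_{\NTriv},\Y_{{\SEP}\cap(2,\mathcal{G})})$ for $\Gamma$. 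The only delicate point --- and it is really a small verification rather than a genuine obstacle --- is the observation that the reduction inherited from Corollary~\ref{cor:GapRed} is already solution-set preserving, so $\N_{\NTriv}$ transfers for free; the substantive content of the theorem has been absorbed into Theorems~\ref{thm:eqfreeweak} and~\ref{thm:kRobustRed}, and this proof is essentially a packaging of those two ingredients.
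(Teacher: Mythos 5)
Your proposal is correct and follows essentially the same route as the paper: Theorem~\ref{thm:eqfreeweak} gives $W\subseteq\langle\Gamma\rangle_{\not\exists,\not=}$, Corollary~\ref{cor:GapRed} (via the solution-preserving reduction of Lemma~\ref{lem:CAEquivSolns}/Theorem~\ref{thm:kRobustRed}) supplies the map, and Lemma~\ref{lem:Gap} transports the gap, with the same observation that solution-set preservation handles $\N_{\NTriv}$ in part~(2). No discrepancies to report.
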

\begin{proof}
($1$): Since $W$ is an irredundant weak-base for the co-clone $\Inv({\mathcal C})$, by Theorem~\ref{thm:eqfreeweak} we have $W\subseteq\langle \Gamma \rangle_{\not\exists,\not=}$.  Now as $W$ satisfies $\GAP(\N_{\CSP},\Y_{{\SEP}\cap(2,{\mathcal F})})$, for some finite set ${\mathcal F}$ of pp-formul{\ae} in the language of $W$, we may apply Corollary~\ref{cor:GapRed}. It then follows from Lemma~\ref{lem:Gap}, that $\Gamma$ has the gap property $\GAP(\N_{\CSP},\Y_{{\SEP}\cap{(2,{\mathcal G})}})$, for some finite set ${\mathcal G}$ of pp-formul{\ae} in the language of $\Gamma$. 

($2$): From Theorem~\ref{thm:eqfreeweak}, we have $W \subseteq{\langle \Gamma \rangle}_{\not\exists,\not=}$, and by assumption we have that $W$ satisfies $\GAP(\N_{\NTriv}, \Y_{{\SEP}\cap{(2,{\mathcal F})}})$, for some finite set ${\mathcal F}$ of pp-formul{\ae} in the language of $W$.  The reduction given in Theorem~\ref{thm:kRobustRed} preserves not only the instances in $\Y_{{\SEP}\cap (2,\mathcal{F})}$ (subject to a change in ${\mathcal F}$), but also solutions of instances and hence it preserves instances in $\N_{\NTriv}$.  By Lemma~\ref{lem:Gap}, it then follows that $\Gamma$ satisfies $\GAP(\N_{\NTriv},\Y_{{\SEP}\cap(2,{\mathcal G})})$. 
\end{proof}

The first two subsections cover the co-clones ${\II}_2$ and ${\IN}_2$ corresponding to Statement $1$ of the Gap Trichotomy Theorem. Part ($1$) of Theorem~\ref{thm:GAP1&2} tells us we need only establish the gap property $\GAP(\N_{\CSP},\Y_{{\SEP}\cap(2,{\mathcal F})})$ for an irredundant weak base in each case.
\subsection{The Boolean co-clone ${\II}_2$}
Recall the relation $\Cols_3$ given in Definition~\ref{defn:cols3}. By Proposition~\ref{pro:weakbase}, the relation $\I_2(\Cols_3)=\Cols_3$, is a weak base for ${\II}_2$, and is easily verified to be irredundant.  We will use $\mathbb{II}_2$ to denote the template $\langle \{0,1\}; \Cols_3\rangle$ and $\II_2\text{-}\SAT$ as notation for the constraint problem $\CSP(\mathbb{II}_2)$. 
%It is easily verified that $\Cols_3$ is an irredundant weak base for the co-clone~$\II_2$. In this section, it is convenient to reorder the columns of $\Cols_3$. We denote the permuted $\Cols_3$ by $\I_2(\Cols_3)$.  The relation $\I_2(\Cols_3)$ generates the same equality-free weak system as permutations of relations are definable by equality-free conjunct-atomic formul{\ae}.
%\[
%\I_2(\Cols_3)=
% \begin{bmatrix}
% 1 & 0 & 0 & 0 & 1 & 1 & 0 & 1 \\
% 0 & 1 & 0 & 1 & 0 & 1 & 0 & 1 \\
% 0 & 0 & 1 & 1 & 1 & 0 & 0 & 1
%  \end{bmatrix}
%\]

In the following, it is useful to remember that the positive $1$-in-$3$ $\SAT$ relation has gap property $\GAP(\N_{\CSP}, {\mathcal K})$, where ${\mathcal K}$ satisfies the properties given in Theorem~\ref{thm:1in3GAP}.  

\begin{pro}\label{pro:WeirdoGAP}
 The relation $\I_2(\Cols_3)$ satisfies $\GAP(\N_{\CSP}, \Y_{{\SEP}\cap{2\text{-}\Rob}})$. 
\end{pro}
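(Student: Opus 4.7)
The plan is to invoke Lemma~\ref{lem:Gap} with the source gap property $\GAP(\N_{\CSP}, \mathcal{K})$ for positive $1$-in-$3$ $\SAT$ supplied by Theorem~\ref{thm:1in3GAP}.  To this end, I will construct a polynomial-time reduction $f$ from positive $1$-in-$3$ $\SAT$ instances to $\Cols_3$-$\SAT$ instances such that $f$ sends $\N_{\CSP}$-instances of positive $1$-in-$3$ $\SAT$ into $\N_{\CSP}(\Cols_3)$ and sends $\mathcal{K}$ into $\Y_{\SEP\cap 2\text{-}\Rob}(\Cols_3)$.

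The natural candidate for $f$ uses the fact that the first three columns of $\Cols_3$ form precisely the positive $1$-in-$3$ relation.  Given an instance $I=(V;\{0,1\};\mathcal{C})$, for each $1$-in-$3$ constraint $c=(x_1,x_2,x_3)\in\mathcal{C}$ I introduce five fresh auxiliary variables $w^c_1,\ldots,w^c_5$ and add the constraint $\langle(x_1,x_2,x_3,w^c_1,\ldots,w^c_5),\Cols_3\rangle$ to $f(I)$.  Preservation of NO-instances is immediate, since projecting any $\Cols_3$-solution of $f(I)$ onto the first three coordinates of each $\Cols_3$-constraint yields a positive $1$-in-$3$ solution of $I$.

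For $I\in\mathcal{K}$, I verify $f(I)\in\Y_{\SEP\cap 2\text{-}\Rob}(\Cols_3)$ by case analysis.  For separation of a pair $\{u_1,u_2\}$, if both lie in $V$ then the partial assignment with differing values is $2$-projection-compatible for $I$ (the $2$-projection of $\Cols_3$ on coordinates $1$-$3$ coincides with the $2$-projection of positive $1$-in-$3$, namely ``at most one is $1$'') and extends by $2$-robust satisfiability of $I$, lifting deterministically to $f(I)$; pairs involving auxiliaries are handled using the row-structure of $\Cols_3$ together with the flexibility among $I$'s $1$-in-$3$ solutions.  For $2$-robust satisfiability, given a $2$-projection-compatible $\psi\colon\{u_1,u_2\}\to\{0,1\}$ I aim to translate $\psi$ into a $2$-projection-compatible partial assignment on at most two variables of $I$, by reading off the row-choices forced by any auxiliary values.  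In the cases (a) both $u_i\in V$, (b) one $u_i\in V$ and one auxiliary, and (c) both auxiliaries of the same $\Cols_3$-constraint, this translation is direct, and $2$-robustness of $I$ supplies the extension, which then lifts deterministically to a $\Cols_3$-solution of $f(I)$ agreeing with $\psi$.

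The subtlest case, and the expected main obstacle, is when $u_1,u_2$ are auxiliary variables of distinct $\Cols_3$-constraints: no $2$-projection of $\Cols_3$ relates such a pair directly, yet their values can force row-choices in two different $\Cols_3$-constraints whose induced values on original variables of $I$ may clash via a third $1$-in-$3$ constraint of $I$.  Resolving this will require either exposing the implicit cross-constraint forcing at the $2$-projection level --- for example, by refining $f$ with additional $\Cols_3$-constraints or judicious replication of variables across constraints --- or arguing that $2$-robust positive $1$-in-$3$ satisfiability of $I$ together with $2$-projection compatibility of $\psi$ already precludes such clashes.  Once this case is handled, the two required inclusions for $f$ hold, and Lemma~\ref{lem:Gap} delivers the desired gap property $\GAP(\N_{\CSP},\Y_{\SEP\cap 2\text{-}\Rob})$ for $\Cols_3$.
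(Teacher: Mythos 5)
Your overall strategy --- reduce from positive $1$-in-$3$ $\SAT$ via Theorem~\ref{thm:1in3GAP} and Lemma~\ref{lem:Gap}, using the fact that the first three columns of $\Cols_3$ are the $1$-in-$3$ relation --- is exactly the paper's, but your construction with \emph{five fresh auxiliary variables per constraint} does not work, and the case you flag as ``the subtlest'' is genuinely fatal rather than merely delicate. Concretely: in any solution of $f(I)$, each $\Cols_3$-constraint is satisfied by one of the three rows, so $w^c_4$ is forced to $0$ and $w^{c'}_4$ is forced to $0$ for every pair of constraints $c,c'$; hence no solution ever separates $w^c_4$ from $w^{c'}_4$, and $f(I)$ is never a YES instance of $\SEP$. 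Similarly, if $c$ and $c'$ share an original variable $x$ in their first position, then $w^c_1=w^{c'}_1=\neg x$ in every solution, yet these two auxiliaries co-occur in no constraint, so the assignment $(w^c_1,w^{c'}_1)\mapsto(0,1)$ is locally compatible (only unary projections apply, and $\pi_4(\Cols_3)=\{0,1\}$) but extends to no solution, killing $2$-robustness. So of the two escape routes you propose, the second (``clashes are already precluded'') is false, and the first is the one that must be taken.

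The paper's fix is precisely the ``judicious replication of variables across constraints'' you gesture at, pushed to its logical conclusion: the auxiliaries are \emph{shared globally}. One introduces a single negation copy $\bar v$ for each $v\in V$ and two global variables $\bot,\top$, and replaces each constraint $(x,y,z)$ by $\langle(x,y,z,\bar x,\bar y,\bar z,\bot,\top),\Cols_3\rangle$. Then every solution of $I$ extends \emph{uniquely} to a solution of $I^\star$ (via $\bar v\mapsto\neg v$, $\bot\mapsto 0$, $\top\mapsto 1$), the problematic cross-constraint identifications become literal identities of variables, and pairs such as $\{\bar x,\bar y\}$ inherit binary projection constraints whenever $x,y$ share a constraint. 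The remaining work --- which your sketch also leaves implicit --- is a careful case analysis over the pair types ($V$--$V$, $V$--$\overline V$, $\overline V$--$\overline V$, and pairs involving $\bot,\top$) using the two consequences of $I\in\mathcal K$ that the paper isolates: every pair of distinct variables of $I$ admits solutions realising $(0,0)$, $(0,1)$ and $(1,0)$, and additionally $(1,1)$ when the pair shares no constraint tuple (this is where the hypothesis that no variable repeats within a constraint tuple is used). As written, your proposal is a plan with the decisive step unresolved, not a proof.
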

 %Any set of $\WeirdoSAT$ instances containing the $2$-robustly satisfiable members and contained within the members of $\CSP(\mathbb{W})$ has $\mathbb{NP}$-hard membership with respect to polynomial-time Karp reductions. 

\begin{proof} 
We begin by showing there is a polynomial-time Karp reduction from $\CSP({\mathbbold 2})$ to $\II_2\text{-}\SAT$.  We then show that this reduction takes instances with the properties given in Theorem~\ref{thm:1in3GAP} to instances in $\Y_{{\SEP}\cap{2\text{-}\Rob}}(\I_2(\Cols_3))$.  The result will then follow from Theorem~\ref{thm:1in3GAP} and Lemma~\ref{lem:Gap}. 

We motivate the following construction by observing that columns $4$, $5$ and $6$ of $\I_2(\Cols_3)$ are the negations of columns $1$, $2$ and $3$, respectively; refer to Defintion~\ref{defn:cols3}.  Given an instance $I=(V;\{0,1\};\mathcal C)$ of positive $1$-in-$3$ $\SAT$, we construct an instance $I^{\star}=(V^{\star};\{0,1\};\mathcal{C}^{\star})$ of $\II_2\text{-}\SAT$ in the following way. 
\begin{enumerate}[\quad \rm(i)]
\item First let $\overline{V}=\{\bar{v}\ |\ v\in V\}$ be a disjoint copy of $V$, and then construct $V^{\star}=V\cup\overline{V}\cup\{\top,\bot\}$, where $\top,\bot \not\in V\cup\overline{V}$,
\item for each constraint $\langle (x,y,z), {+}1\text{in}3\SAT\rangle$ in ${\mathcal C}$, we add the constraint \linebreak$\langle (x,y,z,\bar{x}, \bar{y}, \bar{z}, \bot, \top), \I_2(\Cols_3)\rangle$ to $\mathcal C$ and remove $\langle (x,y,z), {+}1\text{in}3\SAT \rangle$.
\end{enumerate}
Any solution $\varphi$ of $I$ in $\CSP({\mathbbold 2})$ can be extended to a solution $\varphi^{\star}$ of $I^{\star}$ in the following way.  For each $v\in V$, define $\varphi^{\star}(v):=\varphi(v)$, $\varphi^{\star}(\bar{v})=\neg\circ\varphi(v)$, $\varphi^{\star}(\bot)=0$ and $\varphi^{\star}(\top)=1$, where $\neg$ is the usual Boolean complement.  For the converse direction, observe that the projection $\pi_{\{1,2,3\}}(\I_2(\Cols_3))={+}1\text{in}3\SAT$, thus if $\psi$ is a solution of $I^{\star}$ in $\II_2\text{-}\SAT$, then the restriction $\psi\rest{V}$ is a solution of $I$ in $\CSP({\mathbbold 2})$. 
Hence we have shown that any solution $\phi$ of $I$ extends uniquely to a solution $\phi^{\star}$ of $I^{\star}$. 

Now assume that $I$ has the properties listed in Theorem~\ref{thm:1in3GAP}, that is, $I$ is $2$-robustly satisfiable and no variable appears more than once in each constraint tuple. Then $I$ has the following properties.  
\begin{enumerate}[\quad \rm(1)]
\item[($\heartsuit$)] For every pair of distinct variables $x$ and $y$ in $V$, there are solutions $\varphi_1$, $\varphi_2$ and $\varphi_3$ of $I$ satisfying 
\begin{align*}
&(\varphi_1(x), \varphi_1(y))=(0,0),\\
&(\varphi_2(x), \varphi_2(y))=(0,1), \\ 
&(\varphi_3(x), \varphi_3(y))=(1,0). 
\end{align*}
\item[($\Diamond$)] If $x$ and $y$ do not appear in a common constraint tuple, then there is a solution $\varphi_4$ of $I$ satisfying $(\varphi_4(x), \varphi_4(y))=(1,1)$. 
\end{enumerate}
We first show that $I^{\star}\in \Y_{2\text{-}\Rob}(\I_2(\Cols_3))$. Let $\{u, v\}$ be a pair of distinct variables in $V^{\star}$ and let $\alpha\colon\{u,v\}\to\{0,1\}$ be an assignment that is locally compatible for $I^{\star}$. There are four cases to consider. 

\emph{Case $1$}. Assume that $\{u, v \}\subseteq V$. 

\emph{Subcase}~$1$(a): If $u$ and $v$ appear in a common constraint tuple, then there are three locally compatible assignments to check: $\alpha$ satisfying
 \[
(\alpha(u),\alpha(v))=(0,0), (\alpha(u),\alpha(v))=(0,1),\ \text{or}\ (\alpha(u),\alpha(v))=(1,0).
\]
From ($\heartsuit$), there is a solution of $I$ extending each of these assignments, namely $\varphi_1$, $\varphi_2$ and $\varphi_3$, respectively. Using the fact that every solution of $I$ extends to a solution of $I^{\star}$, we get solutions $\varphi_1^{\star}$, $\varphi_2^{\star}$ and $\varphi_3^{\star}$ of $I^{\star}$ which extend in the required ways. 

\emph{Subcase}~$1$(b): If $u$ and $v$ do not appear in a common constraint tuple, then every possible assignment for $\{u,v\}$ is locally compatible. Now apply ($\heartsuit$) and ($\Diamond$) and the fact that every solution $I$ extends to a solution of $I^{\star}$, to get solutions of $I^{\star}$ for every possible assignment. 

\emph{Case}~$2$.  Assume that $u\in V$ and $v\in \overline{V}$.  We will consider cases only up to symmetry. 

\emph{Subcase}~$2$(a): If $v =\bar{u}$.  In this case there are two locally compatible assignments for $\{u,v\}$: $(\alpha(u),\alpha(v))=(0,1)$ or $(\alpha(u),\alpha(v))=(1,0)$.  Consider first $\alpha$ satisfying $(\alpha(u),\alpha(v))=(0,1)$. Using ($\heartsuit$), there is a solution $\varphi_2$ of $I$ extending this assignment.  Since $\varphi_2$ satisfies $\varphi_2(u)=0$, it follows that $\varphi_2^{\star}(u)=0$ and $\varphi_2^{\star}(v)=\varphi_2^{\star}(\bar{u})=\neg\circ\varphi_2(u)=1$.  Hence $\varphi_2^{\star}$ is a solution of $I^{\star}$ extending this assignment. The case for $(\alpha(u),\alpha(v))=(1,0)$ is similar.  

\emph{Subcase}~$2$(b): If $v =\bar{w}$, for some $w\in V$ with $w\not =u$, then there are two subcases to check.  

$2$(b)(i): If $u$ and $w$ (and hence $v$) appear in a common constraint tuple. There are three locally compatible assignments for $\{u,v\}$ to check in this case: $\alpha$ satisfying
\[
(\alpha(u),\alpha(v))=(0,1), (\alpha(u),\alpha(v))=(1,1),\ \text{or}\ (\alpha(u),\alpha(v))=(0,0).
\]
Consider first $(\alpha(u),\alpha(v))=(0,1)$.  From ($\heartsuit$), there is a solution $\varphi_1$ of $I$ satisfying $(\varphi_1(x), \varphi_1(y))=(0,0)$.  It then follows that $\varphi_1^{\star}(u)=0$ and $\varphi_1^{\star}(v)=\varphi_1^{\star}(\bar{w})=\neg\circ\varphi_1(w)=1$.  Hence we have found a solution of $I^{\star}$ extending this assignment. Now consider $(\alpha(u),\alpha(v))=(1,1)$. There is a solution $\varphi_3$ of $I$ satisfying $(\varphi_3(u),\varphi_3(w)=(1,0)$, guaranteed by ($\heartsuit$).  It follows that $\varphi_3^{\star}(u)=1$ and $\varphi_3^{\star}(v)=\varphi_3^{\star}(\bar{w})=1$.  In the case where $\alpha$ satisfies $(\alpha(u),\alpha(v))=(0,0)$, the solution $\varphi_2^{\star}$ gives the required extension. 

$2$(b)(ii): If $u$ and $v$ do not appear in a common constraint tuple, then every possible assignment for  $\{u,v\}$ is locally compatible.  For the assignments $(\alpha(u),\alpha(v))=(0,1)$, $(\alpha(u),\alpha(v))=(1,1)$ and $(\alpha(u),\alpha(v))=(0,0)$, we apply Subcase~$2$(b)(i). It remains to show there is a solution extending $(\alpha(u),\alpha(v))=(1,0)$. From ($\Diamond$), there is a solution $\varphi_4$ of $I$ satisfying $(\varphi_4(u), \varphi_4(w))=(1,1)$. It follows that $\varphi_4^{\star}(u)=1$ and $\varphi_4^{\star}(w)=\varphi_4^{\star}(\bar{w})=\neg\circ\varphi_4(w)=0$. 

\emph{Case}~$3$. Assume that $\{u,v\}\in \overline{V}$.  Then $u =\bar{z}$ for some $z\in V$ and $v =\bar{w}$, for some $w\in V$.  Apply Case $1$ to the pair $\{z,w\}$. 

%then there are three locally compatible assignments to check: $(\alpha(u), \alpha(v))=(0,1)$, $(\alpha(u), \alpha(v))=(1,0)$, or $(\alpha(u),\alpha(v))=(1,1)$.  Consider $(\alpha(u), \alpha(v))=(0,1)$. Since $z\not=w$, from ($\heartsuit$), there is a solution $\varphi_3$ of $I$ satisfying $(\varphi_3(z), \varphi_3(w))=(1,0)$. Then $\varphi_3^{\star}(u)=\varphi_3^{\star}(\neg z)=\neg\circ\varphi_3(z)=0$ and $\varphi_3^{\star}(v)=\varphi_3^{\star}(\neg w)=\neg\circ\varphi_3(w)=1$.  Hence $\varphi_3^{\star}$ gives the required extension. For $(\alpha(u), \alpha(v))=(1,0)$, the solution $\varphi_2^{\star}$ gives the required extension, and $\varphi_1^{\star}$ gives the extension for $(\alpha(u),\alpha(v))=(1,1)$.

\emph{Case}~$4$: If $v\in\{\bot,\top\}$ then there are three subcases to consider.  

\emph{Subcase}~$4$(a): If $u\in\{\bot,\top\}$, then up to symmetry the only locally compatible assignment is $(\alpha(u), \alpha(v))=(0,1)$. A solution of $I^{\star}$ extending this assignment is obtained from any solution of $I$.  

\emph{Subcase}~$4$(b): If $u\in V$, then up to symmetry in the second coordinate there are two locally compatible assignments to consider: $\alpha$ satisfying $(\alpha(u), \alpha(v))=(0,0)$ or $(\alpha(u), \alpha(v))=(1,0)$.  For the assignment $(\alpha(u), \alpha(v))=(0,0)$, use any solution $\varphi$ of $I$ satisfying $\varphi(u)=1$ guaranteed by $({\heartsuit})$; since the extension $\varphi^{\star}$ of any solution of $I$ satisfies $\varphi^{\star}(\bot)=0$.  For $(\alpha(u), \alpha(v))=(1,0)$ use any solution $\varphi$ of $I$ satisfying $\varphi(u)=1$ guaranteed by $({\heartsuit})$. 

\emph{Subcase}~$4$(b): If $u=\bar{w}$, for some $w\in V$.  In this case, we apply {Subcase}~$4$(b) to the pair $\{w, v\}$. 

%there are two locally compatible assignments to consider: $(\alpha(u), \alpha(v))=(0,0)$ and $(\alpha(u), \alpha(v))=(1,0)$. For $(\alpha(u), \alpha(v))=(0,0)$, use any solution of $I$ satisfying $\varphi(w)=1$ guaranteed by ${\heartsuit}$. Then $\varphi^{\star}$ gives the required extension.  For $(\alpha(u), \alpha(v))=(1,0)$, use any solution of $I$ satisfying $\varphi(w)=0$ guaranteed by $({\heartsuit})$. 

To show $I^{\star}\in \Y_{\SEP}(\I_2(\Cols_3))$ observe that Cases $1$ and $2$ above show that for any pair of distinct variables $\{u, v\}$ in $V^{\star}$, there is a solution of $I^{\star}$ in $\CSP(\mathbb{II}_2)$ separating $u$ and $v$.  
\end{proof}

\begin{remark}\label{rem:bottop}
It is immediate from the proof that there is no loss of generality if the following extra condition is included in Proposition \ref{pro:WeirdoGAP} as an allowed assumption on instances: there are variables $\bot$ and $\top$, such that in every constraint tuple, the final two coordinates are $\bot$ and $\top$, in that order. 
\end{remark}
The following is an immediate consequence of Proposition~\ref{pro:WeirdoGAP} and Theorem~\ref{thm:GAP1&2}. 

\begin{thm}\label{thm:GAPII2}
Let $\Gamma$ be a constraint language on $\{0,1\}$ such that $\langle \Gamma \rangle=\II_2$. Then $\Gamma$ has the gap property $\GAP(\N_{\CSP},\Y_{{\SEP}\cap(2,{\mathcal G})})$, for some finite set ${\mathcal G}$ of pp-formul{\ae} in the language of $\Gamma$, and consequently both $(2, {\mathcal G})$-$\Robust(\Gamma)$ and $\SEP(\Gamma)$ are $\bf{NP}$-complete.
 \end{thm}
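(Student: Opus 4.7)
The plan is to combine the machinery built up in Sections~\ref{sec:weakbase} and~\ref{sec:Gaps}. By Proposition~\ref{pro:weakbase}, the singleton $W := \{\I_2(\Cols_3)\} = \{\Cols_3\}$ is a weak base for $\II_2$, and the remark following that proposition notes that $\Cols_3$ is irredundant (which is also immediate from its matrix in Definition~\ref{defn:cols3}: the $8$ columns enumerate all elements of $\{0,1\}^3$ without repetition, so it is not $=$-redundant, and the row count $3$ cannot match $2|s|$ for any relation $s$, so it is not $\top$-redundant).

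Next I would invoke Proposition~\ref{pro:WeirdoGAP}, which furnishes $\GAP(\N_{\CSP}, \Y_{{\SEP}\cap 2\text{-}\Rob})$ for $W$. Unpacking the abbreviation $2$-$\Rob = (2, \mathcal{F}_{\mathrm{proj}})$-$\Rob$, where $\mathcal{F}_{\mathrm{proj}}$ is the finite set of pp-formul\ae\ defining the projections of $\Cols_3$, this reads as $\GAP(\N_{\CSP}, \Y_{{\SEP}\cap(2, \mathcal{F}_{\mathrm{proj}})})$. The hypothesis of Theorem~\ref{thm:GAP1&2}(1) is therefore met (with $\mathcal{C} = \Pol(\II_2)$, so that $\Inv(\mathcal{C}) = \II_2 = \langle \Gamma \rangle$), and applying that theorem delivers a finite set $\mathcal{G}$ of pp-formul\ae\ in the language of $\Gamma$ such that $\Gamma$ satisfies $\GAP(\N_{\CSP}, \Y_{{\SEP}\cap(2,\mathcal{G})})$, which is the first assertion of the theorem.

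The NP-completeness conclusions then follow immediately. For $\SEP(\Gamma)$, the YES set contains $\Y_{{\SEP}\cap(2,\mathcal{G})}(\Gamma)$ and is disjoint from $\N_{\CSP}(\Gamma)$ (any separating solution is in particular a solution), so the gap forces NP-hardness; NP-membership is witnessed by a list of at most $\binom{|V|}{2}$ separating solutions. For $(2,\mathcal{G})$-$\Rob(\Gamma)$ I would first sieve out in polynomial time those instances admitting no $\mathcal{G}$-compatible $2$-assignment---the only members of $\Y_{(2,\mathcal{G})}(\Gamma)$ that could collide with $\N_{\CSP}(\Gamma)$---and then apply the gap to the cleaned YES set, which still contains $\Y_{{\SEP}\cap(2,\mathcal{G})}(\Gamma)$ and is now disjoint from $\N_{\CSP}(\Gamma)$; NP-membership is witnessed by supplying one full extension per $\mathcal{G}$-compatible assignment on a pair of variables (polynomially many). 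No step presents a real obstacle: the heavy lifting was already performed in Propositions~\ref{pro:weakbase} and~\ref{pro:WeirdoGAP} and Theorem~\ref{thm:GAP1&2}, and the present result is essentially their assembly.
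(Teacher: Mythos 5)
Your proof is correct and follows exactly the paper's route: the paper derives Theorem~\ref{thm:GAPII2} as an immediate consequence of Proposition~\ref{pro:WeirdoGAP} (the gap for the irredundant weak base $\I_2(\Cols_3)$) combined with Theorem~\ref{thm:GAP1&2}(1). Your additional care in deducing $\mathbf{NP}$-completeness of $(2,\mathcal{G})$-$\Robust(\Gamma)$ --- sieving out vacuously satisfied instances so the YES set is genuinely disjoint from $\N_{\CSP}$ --- is a detail the paper leaves implicit, and it is handled correctly.
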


%\begin{cor}\label{cor:NPWeirdoSAT}
%The following problems are $\mathbf{NP}$-complete.
%\begin{itemize}
%\item The $2$-robust satisfiability problem $2$-$\Robust(\Gamma)$.
%\item The separation problem $\SEP(\Gamma)$.
%\end{itemize}
%\end{cor}

%\begin{proof}
%Since $\I_2(\Cols_3)$ is an irredundant weak-base for the co-clone ${\II}_2$, by Theorem~\ref{thm:eqfreeweak} we have $\I_2(\Cols_3)\in\langle \Gamma \rangle_{\not\exists,\not=}$. From Proposition~\ref{pro:WeirdoGAP}, we have $\I_2(\Cols_3)$ satisfies $\GAP(\N_{\CSP}, \Y_{{\SEP}\cap{2\text{-}\Rob}})$.  Now applying Corollary~\ref{cor:GapRed}, it follows from Lemma~\ref{lem:Gap}, that $\Gamma$ has the gap property $\GAP(\N_{\CSP},\Y_{{\SEP}\cap{(2,{\mathcal G})}})$, for some finite set ${\mathcal G}$ of pp-formul{\ae} in the language of $\Gamma$, and consequently $(2, {\mathcal G})$-$\Robust(\Gamma)$ and $\SEP(\Gamma)$ are $\bf{NP}$-complete.
%\end{proof}
%\begin{thm}\label{thm:SEPII2}
%Let $\Gamma$ be a finite constraint language on $\{0,1\}$ such that $\langle \Gamma \rangle=%\II_2$. Then the separation problem \end{thm}
%\begin{proof}
%The result follows from Theorem~\ref{thm:SepRed} and Corollary~\ref{cor:NPSepWeirdoSAT}. 
%\end{proof}

\subsection{The Boolean co-clone $\IN_2$} 
Recall that the relation $\N_2(\Cols_3)$ constructed in Example~\ref{eg:BigWeirdoSAT} is an irredundant weak base for $\IN_2$.  We will use $\mathbb{IN}_2$ to denote the corresponding template $\langle \{0,1\}; \N_2(\Cols_3)\rangle$ and $\IN_2$-$\SAT$ as notation for the constraint problem $\CSP(\mathbb{IN}_2)$. 

\begin{pro}\label{pro:BigWGAP}
The relation $\N_2(\Cols_3)$ satisfies $\GAP(\N_{\CSP}, \Y_{{\SEP}\cap{2\text{-}\Rob}})$.
\end{pro}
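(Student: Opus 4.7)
The plan is to follow the template of Proposition~\ref{pro:WeirdoGAP} and lift the gap property from $\I_2(\Cols_3)$ to $\N_2(\Cols_3)$ by exploiting the negation symmetry of the latter. Given an instance $I=(V;\{0,1\};\mathcal C)$ of positive $1$-in-$3$~$\SAT$ satisfying the hypotheses of Theorem~\ref{thm:1in3GAP}, I will form $I^\star$ by the same syntactic construction as before: adjoin variables $\overline V = \{\bar v \mid v\in V\}$ together with two fresh symbols $\bot,\top$, and for each constraint $\langle(x,y,z),{+}1\text{in}3\SAT\rangle \in \mathcal C$ add the constraint $\langle(x,y,z,\bar x,\bar y,\bar z,\bot,\top),\N_2(\Cols_3)\rangle$. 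The construction is polynomial in $|I|$.

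The first task is to check that NO instances of $\CSP({\mathbbold 2})$ map to NO instances of $\CSP(\N_2(\Cols_3))$. Inspection of the matrix in Example~\ref{eg:BigWeirdoSAT} shows that the projection of $\N_2(\Cols_3)$ onto its last two coordinates is $\{(0,1),(1,0)\}$, so in any solution $\psi$ of $I^\star$ one has $\psi(\bot) \neq \psi(\top)$. In the case $\psi(\bot)=0$, $\psi(\top)=1$, each constraint tuple is forced into the first three rows of the matrix (which coincide with $\Cols_3$), and $\psi|_V$ is a positive $1$-in-$3$ solution of $I$; in the case $\psi(\bot)=1$, $\psi(\top)=0$, each tuple lies in the last three rows, and $\neg\circ\psi|_V$ is a positive $1$-in-$3$ solution. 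Either way $I$ is satisfiable, contradicting the NO assumption.

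The substantive task is to show that $I \in \mathcal K$ implies $I^\star \in \Y_{{\SEP}\cap 2\text{-}\Rob}(\N_2(\Cols_3))$. Two observations do most of the work. First, $\Cols_3 \subseteq \N_2(\Cols_3)$, so every solution of $I^\star$ produced in the proof of Proposition~\ref{pro:WeirdoGAP} is \emph{a fortiori} a solution of $I^\star$ interpreted over $\N_2(\Cols_3)$. Second, the $\neg$-invariance of $\N_2(\Cols_3)$ ensures that pointwise negation sends solutions to solutions and that every binary projection is $\neg$-closed; consequently the set of $2$-locally compatible assignments on any pair $\{u,v\}$ of variables of $I^\star$ is itself closed under pointwise negation. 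Assembling these, it suffices to check that every ``new'' $2$-locally compatible assignment (compatible for the $\N_2(\Cols_3)$-interpretation but not for the $\Cols_3$-interpretation) is the pointwise complement of an assignment already handled in Proposition~\ref{pro:WeirdoGAP}; the required extending solution is then the pointwise complement of the old extending solution.

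I expect this last step to be where the remaining effort sits: it amounts to a case analysis mirroring the four-case split of Proposition~\ref{pro:WeirdoGAP} (pairs within $V$, pairs across $V$ and $\overline V$, pairs within $\overline V$, and pairs meeting $\{\bot,\top\}$). The main obstacle is bookkeeping---one must confirm that the extra compatible assignments introduced by relaxing the unary forcing $\psi(\bot)=0$, $\psi(\top)=1$ and the binary constraint ``not both $1$'' all lie within the $\neg$-orbit of the $\Cols_3$-compatible ones; the cleanest instance is the pair $\{\bot,\top\}$ itself, where the formerly unique locally compatible assignment $(0,1)$ gains the companion $(1,0)$, handled by complementing any old extending solution. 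The $\SEP$ property poses no extra difficulty, since the separating solutions supplied by Proposition~\ref{pro:WeirdoGAP} still separate. A single application of Lemma~\ref{lem:Gap} combined with Theorem~\ref{thm:1in3GAP} then completes the argument.
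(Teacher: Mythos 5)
Your proposal is correct and follows essentially the same route as the paper's proof: both lift Proposition~\ref{pro:WeirdoGAP} via the $\neg$-invariance of $\N_2(\Cols_3)$, reducing the $2$-robustness check to the claim that every locally compatible assignment not already handled over $\I_2(\Cols_3)$ is the pointwise complement of one that is. The only cosmetic difference is that the paper also adds the ``reflected'' constraint $\langle(\bar x,\bar y,\bar z,x,y,z,\top,\bot),\N_2(\Cols_3)\rangle$ for each clause, which is redundant given the coordinate symmetry of $\N_2(\Cols_3)$ (columns $4$--$6$ being the negations of columns $1$--$3$ and column $8$ of column $7$), so your single-constraint version yields an equivalent instance.
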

%Any set of $\BigWeirdoSAT$ instances containing the $2$-robustly satisfiable members and contained within the members of $\CSP(\mathbb{W_{\Bg}})$ has $\mathbb{NP}$-hard membership with respect to polynomial-time Karp reductions. 
\begin{proof} 
We show there is polynomial-time Karp reduction from positive $1$-in-$3$ $\SAT$ to $\IN_2$-$\SAT$ that takes instances satisfying the properties listed in Theorem~\ref{thm:1in3GAP} to instances in $\Y_{{\SEP}\cap {2\text{-}\Rob}}(\N_2(\Cols_3))$. The result then follows from Theorem~\ref{thm:1in3GAP} and Lemma~\ref{lem:Gap}. 

Given an instance $I=(V;\{0,1\};\mathcal C)$ of positive $1$-in-$3$ $\SAT$, we construct an instance $I^{\#}=(V^{\#};\{0,1\};\mathcal{C}^{\#})$ of $\IN_2$-$\SAT$ in the following way: 
\begin{enumerate}[\quad \rm(i)]
\item for each variable $v\in V$, we add the variable $\bar{v}$ to $V$ (each of the variables $\bar{v}$ is a new variable not in $V$),
\item add the new variables $\top$ and $\bot$ to $V$, 
\item for each constraint $\langle (x,y,z), {+}1\text{in}3\SAT \rangle$ in ${\mathcal C}$, we add the constraints
\begin{align*}
&\langle (x,y,z,\bar{x}, \bar{y}, \bar{z}, \bot, \top), \N_2(\Cols_3)\rangle, \ \text{and}\\
&\langle(\bar{x}, \bar{y}, \bar{z}, x,y,z, \top, \bot), \N_2(\Cols_3)\rangle
\end{align*}
to $\mathcal C$ and remove $\langle (x,y,z), {+}1\text{in}3\SAT\rangle$ from ${\mathcal C}$.
\end{enumerate}
Any solution $\varphi$ of $I$ in $\CSP({\mathbbold 2})$ can be extended to a solution $\varphi^{\#}$ of $I^{\#}$ in $\II_2\text{-}\SAT$ in the same way as in Proposition~\ref{pro:WeirdoGAP}: for all $v\in V$, define $\varphi^{\#}(v):=\varphi(v)$, $\varphi^{\#}(\bar{v}):=\neg\circ\varphi(v)$, $\varphi^{\#}(\bot):=0$ and $\varphi^{\#}(\top):=1$. 
For the converse direction, we will show that for any solution $\varphi^{\#}$ of $I^{\#}$, 
either $\varphi^{\#}$ or $\neg\varphi^{\#}$ can be restricted to a solution of $I$.  Since the map $\neg:\{0,1\}\to \{0,1\}$ is an automorphism of $\mathbb{IN}_2$, it follows that $\neg\circ\varphi^{\#}$ is solution of $I^{\#}$, whenever $\varphi^{\#}$ is a solution of $I^{\#}$.  Thus, without loss of generality we may assume that $\varphi^{\#}(\bot)=0$. This implies that $\varphi^\#$ satisfies constraints of the form $\langle (x,y,z,\bar{x}, \bar{y}, \bar{z}, \bot, \top), \N_2(\Cols_3)\rangle$ into the subset $\I_2(\Cols_3)$ of $\N_2(\Cols_3)$ (that is, the first three rows of $\N_2(\Cols_3)$).  Then $\varphi^{\#}{\rest{V}}$ is a solution of $I$ for positive $1$-in-$3$ $\SAT$.

Now assume that $I$ has the properties given in Theorem~\ref{thm:1in3GAP}.
We show that $I^{\#} \in \Y_{2\text{-}\Rob}(\N_2(\Cols_3))$ by showing that every locally compatible assignment on two variables extends to a solution of $I^{\#}$. Recall the instance $I^{\star}=(V^{\star};\{0,1\};{\mathcal C}^{\star})$ constructed in Proposition~\ref{pro:WeirdoGAP}.  The following fact will be useful. 
\begin{enumerate}
\item[($\star$)] for every solution $\varphi^{\star}$ of $I^{\star}$ in $\IN_2\text{-}\SAT$, both $\varphi^{\star}$ and $\neg\circ \varphi^{\star}$ is a solution of $I^{\#}$ in $\IN_2\text{-}\SAT$. \end{enumerate}

Let $\{u,v\}$ be distinct variables in $V$ and let $\alpha\colon\{u,v\}\to\{0,1\}$ be an assignment that is locally compatible for $I^{\#}$.  There are two cases to consider. 

\emph{Case}~$1$. If $\alpha$ is locally compatible assignment for $I^{\star}$, then use the fact that $I^{\star}$ is in $\Y_{2\text{-}\Rob}(\I_2(\Cols_3))$ to extend $\alpha$ to a solution $\psi^{\star}$ of $I^{\star}$, which is also a solution of $I^{\#}$ by ($\star$). 

\emph{Case}~$2$. If $\alpha$ is not locally compatible for $I^{\star}$ then is it straightforward to check that $\neg\circ\alpha$ is a locally compatible for $I^\star$.  We then use the fact that $I^{\star}$ is in $\Y_{2\text{-}\Rob}(\I_2(\Cols_3))$ to extend $\neg \circ\alpha$ to a solution $\phi^{\star}$ of $I^{\star}$. Then $\neg\circ\phi^{\star}$ is a solution of $I^{\#}$ by ($\star$) and extends~$\alpha$.

It remains to show $I^{\#}\in \Y_{\SEP}(\N_2(\Cols_3))$.  Let $\{u, v\}$ be any pair of distinct variables in $V^{\#}$.  Then $\{u,v \}\subseteq V^{\star}$. Since $I^{\star}\in \Y_{\SEP}(\I_2(\Cols_3))$, there is a solution $\varphi^{\star}$ of $I^{\star}$ separating $u$ and $v$.  Then $\varphi^{\star}$ is a solution of $I^{\#}$ that separates $u$ and $v$, by~$(\star)$.  
\end{proof}
\begin{remark}\label{rem:bottop2}
As with Remark \ref{rem:bottop}, we may additionally add the following technical assumption to Proposition \ref{pro:BigWGAP}: there are two variables $\bot$ and $\top$ and every constraint tuple has its final two coordinates equal to either $\bot$,$\top$ or $\top$,$\bot$.
\end{remark}
The next result immediately follows from Proposition \ref{pro:BigWGAP} and Theorem~\ref{thm:GAP1&2}. 
%\begin{cor}\label{cor:NPBigWSAT}
%The following problems are $\mathbf{NP}$-complete. 
%\begin{itemize}
%\item The $2$-robust satisfiability problem $2$-$\Robust(\mathbb{IN}_2)$. 
%\item The separation problem $\SEP(\mathbb{IN}_2)$.
%\end{itemize}
%\end{cor}
%We state the main result for this subsection, which follows immediately from Proposition~\ref{pro:BigWGAP} and 

\begin{thm}\label{thm:GAPIN2}
Let $\Gamma$ be a constraint language on $\{0,1\}$ such that $\langle \Gamma \rangle=\IN_2$. Then $\Gamma$ has the gap property $\GAP(\N_{\CSP}, \Y_{{\SEP}\cap(2,{\mathcal G})})$, for some finite set ${\mathcal G}$ of pp-formul{\ae} in the language of $\Gamma$, and consequently the problems $(2, {\mathcal G})\text{-}\Robust(\Gamma)$ and $\SEP(\Gamma)$ are $\bf{NP}$-complete. 
\end{thm}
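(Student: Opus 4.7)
The plan is to derive Theorem~\ref{thm:GAPIN2} as an immediate assembly of the two ingredients developed in this subsection, in direct parallel with the proof of Theorem~\ref{thm:GAPII2}. The substantive content is already concentrated in Proposition~\ref{pro:BigWGAP}, which established the gap property $\GAP(\N_{\CSP}, \Y_{{\SEP}\cap 2\text{-}\Rob})$ for the relation $\N_2(\Cols_3)$; and in Theorem~\ref{thm:GAP1&2}(1), the transfer principle that promotes such a gap property from an irredundant weak base to every constraint language generating the same co-clone.

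I would proceed as follows. First, recall from Proposition~\ref{pro:weakbase} together with the remark immediately following it, that $\N_2(\Cols_3)$ is an irredundant weak base for the co-clone $\IN_2$. Since the hypothesis of the theorem is $\langle \Gamma\rangle = \IN_2$, Theorem~\ref{thm:GAP1&2}(1) applies with $W = \{\N_2(\Cols_3)\}$, and its conclusion furnishes a finite set $\mathcal{G}$ of pp-formul{\ae} in the language of $\Gamma$ for which $\Gamma$ satisfies
\[
\GAP\bigl(\N_{\CSP},\, \Y_{{\SEP}\cap(2,\mathcal{G})}\bigr).
\]

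The asserted $\mathbf{NP}$-completeness follows at once. Each of the sets $\Y_{\SEP}(\Gamma)$ and $\Y_{(2,\mathcal{G})}(\Gamma)$ contains $\Y_{{\SEP}\cap(2,\mathcal{G})}(\Gamma)$ and is disjoint from $\N_{\CSP}(\Gamma)$, because any instance without a satisfying assignment trivially admits no separating solution and trivially admits no $\mathcal{G}$-compatible extension either. By the gap property, deciding membership in either set is $\mathbf{NP}$-hard. Membership in $\mathbf{NP}$ is routine: for $\SEP(\Gamma)$ one guesses, for each unordered pair of distinct variables, a separating solution; for $(2,\mathcal{G})$-$\Robust(\Gamma)$ one guesses, for each of the polynomially many $\mathcal{G}$-compatible two-variable assignments, an extending full solution, and verifies extendability in polynomial time.

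No serious obstacle remains at this stage, because all of the combinatorial difficulty has already been resolved inside Proposition~\ref{pro:BigWGAP}, namely the explicit two-stage reduction from positive $1$-in-$3$ $\SAT$ to $\IN_2$-$\SAT$ that adds a negated copy of every variable, fixes the top and bottom constants, and uses the automorphism $\neg$ of $\mathbb{IN}_2$ to transport solutions. Once that gap property is in hand for the weak base, the present theorem is purely a bookkeeping step: Theorem~\ref{thm:GAP1&2}(1) absorbs the translation of pp-formul{\ae} between the languages of $W$ and $\Gamma$, producing the appropriate~$\mathcal{G}$.
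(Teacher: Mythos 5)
Your proposal is correct and matches the paper exactly: the theorem is stated there as an immediate consequence of Proposition~\ref{pro:BigWGAP} (the gap property for the irredundant weak base $\N_2(\Cols_3)$) combined with Theorem~\ref{thm:GAP1&2}(1), with the $\mathbf{NP}$-completeness read off from the gap property just as you describe. No divergence from the paper's route and no gap to report.
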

%\begin{proof}
%It is easily verified that the weak base $\N_2(\Cols_3)$ is irredundant. By Theorem~\ref{thm:eqfreeweak}, it follows that $\N_2(\Cols_3)\in\langle \Gamma \rangle_{\not\exists,\not=}$. From Proposition~\ref{pro:BigWGAP}, we have $\N_2(\Cols_3)$ satisfies $\GAP(\N_{\CSP}, \Y_{{\SEP}\cap{2\text{-}\Rob}})$.  Applying Corollary~\ref{cor:GapRed}, it follows from Lemma~\ref{lem:Gap}, that $\Gamma$ has the gap property $\GAP(\N_{\CSP},\Y_{{\SEP}\cap(2,{\mathcal G})})$, for some finite set ${\mathcal G}$ of pp-formul{\ae} in the language of $\Gamma$, and consequently $(2, {\mathcal G})$-$\Robust(\Gamma)$ and $\SEP(\Gamma)$ are $\bf{NP}$-complete.
%\end{proof}
%The result follows from the fact that $\IN_2\text{-}\SAT$ is an irredundant weak base for the co-clone ${\IN}_2$, and from Proposition~\ref{pro:BigWGAP}, Corollary~\ref{cor:GapRed} and Lemma~\ref{lem:Gap}. 

For the remaining clones $\II_1, \II_0,\II,\IN$ corresponding to Statement 2 in the Gap Dichotomy Theorem~\ref{thm:mainGAP}, we can reuse the same fundamental construction used for $\II_2$ and $\IN_2$. In all cases, the proof proceeds as follows: to lie in $\Y_{\SEP}$ or $\Y_{(2,\mathcal{F})}$, it is necessary to have an assignment in which $\bot$ and $\top$ take different values.  We then argue that this forces solutions into $\II_2$ or $\IN_2$.  
Part ($2$) of Theorem~\ref{thm:GAP1&2} tells us we need only establish the gap property $\GAP(\N_{\NTriv}, \Y_{{\SEP}\cap{(2,{\mathcal F})}})$ for an irredundant weak base in each case.

%We begin with a result that shows when establishing the gap property $\GAP(\N_{\NTriv}, \Y_{{\SEP}\cap{(2,{\mathcal F})}})$ for relations generating the co-clones $\II_1, \II_0,\II$ or $\IN$, it is sufficient to show that an irredundant weak base has the gap property. 

\subsection{The Boolean co-clone $\II_1$.}
From Proposition~\ref{pro:weakbase}, the relation $\I_1(\Cols_{3})$ is a weak base for the co-clone $\II_1$, and is easily seen to be irredundant:
\[
\I_1(\Cols_{3})=
 \begin{bmatrix}
 1 & 0 & 0 & 1 & 1 & 0 & 0 & 1 \\
 0 & 1 & 0 & 1 & 0 & 1 & 0 & 1 \\
 0 & 0 & 1 & 0 & 1 & 1& 0 & 1 \\
  1 & 1 & 1 & 1& 1 & 1& 1 & 1 \\
  \end{bmatrix}
\]
We will use $\mathbb{II}_1$ to denote the template $\langle \{0,1\};\I_1(\Cols_{3})\rangle$ and $\II_1\text{-}\SAT$ to denote $\CSP(\mathbb{II}_1)$. 

Observe that the constraint satisfaction problem $\II_1\text{-}\SAT$ is trivial: every instance $I$ has a solution; specifically, the map that sends every variable of $I$ to $1$.  Recall that $\N_{\NTriv}(\I_1(\Cols_{3}))$ is the set of all instances of $\II_1\text{-}\SAT$ that admit only a trivial solution. Note that instances in $\N_{\NTriv}(\I_1(\Cols_{3}))$ are precisely the instances of $\II_1\text{-}\SAT$ with at most one solution. We establish the following gap property for $\I_1(\Cols_{3})$. 
\begin{pro}\label{pro:GAPWSAT1}
The relation $\I_1(\Cols_{3})$ satisfies $\GAP(\N_{\NTriv}, \Y_{{\SEP}\cap{2\text{-}\Rob}})$. 
\end{pro}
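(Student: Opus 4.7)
The plan is to adapt the reduction from positive $1$-in-$3$ $\SAT$ used in Proposition~\ref{pro:WeirdoGAP}, exploiting the structural fact that $\I_1(\Cols_3)$ consists of the three rows of $\Cols_3$ together with the all-ones tuple $(1,1,\ldots,1)$. Given a positive $1$-in-$3$ $\SAT$ instance $I=(V;\{0,1\};\mathcal{C})$, construct $I^{\dagger}=(V^{\dagger};\{0,1\};\mathcal{C}^{\dagger})$ in exactly the same manner as Proposition~\ref{pro:WeirdoGAP} but using $\I_1(\Cols_3)$ in place of $\I_2(\Cols_3)=\Cols_3$: set $V^{\dagger}=V\cup\overline{V}\cup\{\bot,\top\}$, and for each constraint $\langle (x,y,z),{+}1\text{in}3\SAT\rangle$ in $\mathcal{C}$ place $\langle(x,y,z,\bar{x},\bar{y},\bar{z},\bot,\top),\I_1(\Cols_3)\rangle$ into $\mathcal{C}^{\dagger}$. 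The construction is polynomial-time, and I will show that NO instances of positive $1$-in-$3$ $\SAT$ are sent to $\N_{\NTriv}(\I_1(\Cols_3))$ and that instances from the set $\mathcal{K}$ of Theorem~\ref{thm:1in3GAP} are sent to $\Y_{{\SEP}\cap{2\text{-}\Rob}}(\I_1(\Cols_3))$; the conclusion then follows from Lemma~\ref{lem:Gap} and Theorem~\ref{thm:1in3GAP}.

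The crucial observation is that the all-ones tuple is the only tuple of $\I_1(\Cols_3)$ in which the $\bot$-coordinate takes value $1$. Hence any solution $\psi$ of $I^{\dagger}$ satisfies exactly one of: (a) $\psi(\bot)=1$, in which case every constraint tuple must equal the all-ones row and (assuming without loss of generality that every variable of $V$ appears in some constraint of $I$) $\psi$ is forced to be the constant-$1$ assignment; or (b) $\psi(\bot)=0$, in which case each constraint tuple lies in $\Cols_3$ and $\psi$ is precisely a solution of the companion $\Cols_3$-instance $I^{\star}$ from Proposition~\ref{pro:WeirdoGAP}. Thus $I^{\dagger}$ has a nonconstant solution if and only if $I$ is positive $1$-in-$3$ satisfiable, so any NO instance of positive $1$-in-$3$ $\SAT$ is mapped into $\N_{\NTriv}(\I_1(\Cols_3))$.

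For the YES direction, assume $I\in\mathcal{K}$. By Proposition~\ref{pro:WeirdoGAP}, the $\Cols_3$-instance $I^{\star}$ lies in $\Y_{{\SEP}\cap{2\text{-}\Rob}}(\Cols_3)$, and since $\Cols_3\subseteq\I_1(\Cols_3)$, every solution of $I^{\star}$ is also a solution of $I^{\dagger}$. Separation of $I^{\dagger}$ is inherited directly: any pair of distinct variables in $V^{\dagger}$ is separated by a solution of $I^{\star}$, which is also a solution of $I^{\dagger}$. For $2$-robustness, observe that for any two coordinates of $\I_1(\Cols_3)$, the pair projection of $\I_1(\Cols_3)$ equals the corresponding pair projection of $\Cols_3$ together with the single extra tuple $(1,1)$. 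Hence an assignment $\alpha\colon\{u,v\}\to\{0,1\}$ that is locally compatible for $I^{\dagger}$ is either already locally compatible for $I^{\star}$ — and therefore extends to a solution of $I^{\star}$, and so of $I^{\dagger}$, by Proposition~\ref{pro:WeirdoGAP} — or sends both $u$ and $v$ to $1$, in which case it is extended by the constant all-ones solution of $I^{\dagger}$.

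The principal obstacle is the final $2$-robustness step: one must confirm that the only locally compatible assignments of $I^{\dagger}$ failing local compatibility for $I^{\star}$ are precisely those mapping both variables to $1$. This is an immediate consequence of the observation that $(1,1,\ldots,1)$ is the unique tuple of $\I_1(\Cols_3)\setminus\Cols_3$ and that its projection to any pair of coordinates is $(1,1)$; once this is in hand, the remaining bookkeeping (in particular, handling the variables $\bot$ and $\top$ and the new variables $\bar{v}$ with the help of Remark~\ref{rem:bottop}) follows routinely. Combining the two directions via Lemma~\ref{lem:Gap} then yields $\GAP(\N_{\NTriv},\Y_{{\SEP}\cap{2\text{-}\Rob}})$ for $\I_1(\Cols_3)$.
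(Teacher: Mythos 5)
Your proof is correct and follows essentially the same route as the paper: the paper reduces from the already-established gap for $\I_2(\Cols_3)$ (Proposition~\ref{pro:WeirdoGAP}) by simply swapping the relation symbol, whereas you compose that step with the positive $1$-in-$3$ $\SAT$ reduction into a single construction, but the key ideas are identical --- the all-ones row is the unique extra tuple, $\psi(\bot)$ decides whether a solution is forced constant or lands in $\Cols_3$, separation is inherited from $I^{\star}$, and $2$-robustness splits into ``compatible for $I^{\star}$'' versus ``both variables sent to $1$.'' The paper's own Case~$2$ asserts the same local-compatibility dichotomy with no more justification than you give.
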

\begin{proof}
We show there is a polynomial-time reduction taking NO instances of the problem $\II_2\text{-}\SAT$ to instances in $\N_{\NTriv}(\I_1(\Cols_3))$ and $\Y_{{\SEP}\cap2\text{-}\Rob}(\I_2(\Cols_3))$ to $\Y_{{\SEP}\cap2\text{-}\Rob}(\I_1(\Cols_3))$. The final result will then follow from Proposition~\ref{pro:WeirdoGAP} and Lemma~\ref{lem:Gap}.

Given an instance $I$ of $\II_2\text{-}\SAT$, construct an instance $\chisub{}(I)$ of $\II_1\text{-}\SAT$ by replacing each constraint $\langle t=(x_1,\dots, x_8),\I_2(\Cols_3)\rangle$ in ${\mathcal C}$ with $\langle t, \I_1(\Cols_3)\rangle$. 
Clearly every solution of $I$ is a solution of $\chisub{}(I)$.  By Remark \ref{rem:bottop}, we will also assume that there are a variables $\bot$ and $\top$ such that every constraint tuple has $\bot$,$\top$ in the final two positions.

Now assume that $I\in \Y_{{\SEP}\cap2\text{-}\Rob}(\I_2(\Cols_3))$. We first show that $\chisub{}(I)$ is in $\Y_{2\text{-}\Rob}(\I_1(\Cols_3))$. Let $\{u,v\}$ be a pair of distinct elements in $V$ and consider an assignment $\alpha\colon \{u,v\}\to \{0,1\}$ that is locally compatible for $\chisub{}(I)$. 

\emph{Case}~$1$. If $\alpha$ is locally compatible for $I$, then we can use the fact that $I$ is in $\Y_{2\text{-}\Rob}(\I_2(\Cols_3))$ to extend $\alpha$ to a solution of $I$, which is also a solution of $\chisub{}(I)$. 

\emph{Case}~$2$: If $\alpha$ is not locally compatible for $I$, then $\alpha$ satisfies $(\alpha(u), \alpha(v))=(1,1)$ and in this case, we can extend $\alpha$ to a solution of $\chisub{}(I)$ by mapping every variable in $\chisub{}(V)$ to $1$. 

To see that $\chisub{}(I)\in \Y_{\SEP}(\I_1(\Cols_3))$, observe that solutions for $I$ also are solutions for $\chisub{}(I)$, and the assumption $I\in \Y_{\SEP}(\I_2(\Cols_3))$ shows that there are enough solutions to separate all pairs of variables.

We now show that if $I$ is a NO instance of $\II_2\text{-}\SAT$ then $\chisub{}(I)$ has only a trivial solution.  We prove the contrapositive.  Suppose $\chisub{}(I)$ has a non-trivial solution $\psi$ in $\II_1\text{-}\SAT$.  Hence $\psi(\bot)\neq \psi(\top)$.  Then $\psi$ is also a solution of $I$ in $\II_2\text{-}\SAT$.  Hence $I$ is not a NO instance of $\II_2\text{-}\SAT$. 
\end{proof}
%
%\begin{cor}
%The following problems are $\mathbf{NP}$-complete. 
%\begin{itemize} 
%\item The $2$-robust satisfiability problem $2$-$\Robust({\mathbb{II}_1})$ 
%\item The separation problem $\SEP(\mathbb{II}_1)$. 
%\end{itemize}
%\end{cor}
The following is obtained immediately from Proposition~\ref{pro:GAPWSAT1} and Theorem~\ref{thm:GAP1&2}. 
\begin{thm}\label{thm:GAPII1}
Let $\Gamma$ be a constraint language on $\{0,1\}$ such that $\langle \Gamma \rangle=\II_1$. Then $\Gamma$ has the gap property $\GAP(\N_{\NTriv},\Y_{{\SEP} \cap(2,{\mathcal G})})$, for some finite set ${\mathcal G}$ of pp-formul{\ae} in the language of $\Gamma$, and consequently the problems $(2, {\mathcal G})\text{-}\Robust(\Gamma)$ and $\SEP(\Gamma)$ are $\bf{NP}$-complete.
\end{thm}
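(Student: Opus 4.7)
The plan is to derive Theorem~\ref{thm:GAPII1} as an immediate corollary of the preceding machinery: Proposition~\ref{pro:GAPWSAT1} has already done the heavy lifting by establishing a gap property for a specific irredundant weak base of $\II_1$, and Theorem~\ref{thm:GAP1&2}(2) transports this gap property to any finite constraint language generating $\II_1$.

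First, I would note that by Proposition~\ref{pro:weakbase} the relation $\I_1(\Cols_3)$ is a weak base for the Boolean co-clone $\II_1$, and inspection of the matrix representation displayed just before Proposition~\ref{pro:GAPWSAT1} confirms irredundancy: no two of its columns coincide and no column equals the universal column $\{0,1\}$. Proposition~\ref{pro:GAPWSAT1} then supplies the gap property $\GAP(\N_{\NTriv},\Y_{{\SEP}\cap 2\text{-}\Rob})$ for this weak base, where the set $2\text{-}\Rob$ is the special case of $(2,{\mathcal F})$-robust satisfiability corresponding to the finite set ${\mathcal F}$ of pp-formul{\ae} defining the single-coordinate projections of $\I_1(\Cols_3)$. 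Applying Theorem~\ref{thm:GAP1&2}(2) to the irredundant weak base $\I_1(\Cols_3)$ and to the given $\Gamma$ with $\langle \Gamma \rangle = \II_1$ then yields the desired gap property $\GAP(\N_{\NTriv},\Y_{{\SEP}\cap(2,{\mathcal G})})$ for some finite set ${\mathcal G}$ of pp-formul{\ae} in the language of $\Gamma$.

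For the $\mathbf{NP}$-completeness consequences, I would mirror the argument used for Theorems~\ref{thm:GAPII2} and~\ref{thm:GAPIN2}. Observe that $\Y_{{\SEP}\cap(2,{\mathcal G})}(\Gamma) \subseteq \Y_{\SEP}(\Gamma) \cap \Y_{(2,{\mathcal G})}(\Gamma)$, and any instance on at least two variables admitting a separating solution is satisfied by a non-constant assignment; so $\Y_{\SEP}(\Gamma)$ is disjoint from $\N_{\NTriv}(\Gamma)$ after harmlessly discarding single-variable instances that are solvable in polynomial time. The gap property then forces $\SEP(\Gamma)$ to be $\mathbf{NP}$-hard, and membership in $\mathbf{NP}$ is routine. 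For $(2,{\mathcal G})$-$\Robust(\Gamma)$, the corresponding argument proceeds by tracking the $\bot,\top$ variables introduced in the reduction underlying Proposition~\ref{pro:GAPWSAT1} (cf.\ Remark~\ref{rem:bottop}): any reduction image that is $(2,{\mathcal G})$-robustly satisfiable admits the $\mathcal{G}$-compatible assignment $\bot \mapsto 0,\ \top \mapsto 1$ extended to a solution, and such a solution is necessarily non-constant, so the image cannot lie in $\N_{\NTriv}(\Gamma)$; the gap property then delivers $\mathbf{NP}$-hardness.

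The only step requiring care is this final $\mathbf{NP}$-hardness transfer for $(2,{\mathcal G})$-$\Robust(\Gamma)$, since in general $\Y_{(2,{\mathcal G})}(\Gamma)$ need not be disjoint from $\N_{\NTriv}(\Gamma)$. This obstacle is resolved by working with the image of the explicit reduction rather than with the abstract class of YES instances, as in Lemma~\ref{lem:Gap}: the output instances are engineered to force $\bot \ne \top$ in any solution, so the required disjointness holds on the reduction image and that is sufficient to conclude $\mathbf{NP}$-hardness.
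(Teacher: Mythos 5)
Your proposal is correct and follows exactly the paper's route: the paper derives Theorem~\ref{thm:GAPII1} immediately from Proposition~\ref{pro:GAPWSAT1} together with Theorem~\ref{thm:GAP1&2}(2). Your additional care in spelling out why the gap property yields $\mathbf{NP}$-hardness of $\SEP(\Gamma)$ and $(2,\mathcal{G})$-$\Robust(\Gamma)$ (in particular the disjointness from $\N_{\NTriv}$ via the forced separation of $\bot$ and $\top$ on the reduction image) is a legitimate filling-in of a step the paper treats as immediate, not a different argument.
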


\subsection{The Boolean co-clone $\II_0$.}

Since the co-clone $\II_0$ is dual to $\II_1$, we obtain the following result via a symmetrical argument.  
%
%In the following, let $\mathbb{II}_0$ be the template $\langle \{0,1\};\I_{0}(\Cols_{3})\rangle$. 
%\begin{cor}
%The following problems are $\mathbf{NP}$-complete. 
%\begin{itemize} 
%\item The $2$-robust satisfiability problem $2$-$\Robust({\mathbb{II}_0})$ 
%\item The separation problem $\SEP(\mathbb{II}_0)$. 
%\end{itemize}
%\end{cor}

\begin{thm}\label{thm:GAPII0}
Let $\Gamma$ be a constraint language on $\{0,1\}$ such that $\langle \Gamma \rangle=\II_0$. Then $\Gamma$ has the gap property $\GAP(\N_{\NTriv},\Y_{{\SEP}\cap(2,{\mathcal G})})$, for some finite set ${\mathcal G}$ of pp-formul{\ae} in the language of $\Gamma$, and consequently the problems $(2, {\mathcal G})\text{-}\Robust(\Gamma)$ and $\SEP(\Gamma)$ are $\bf{NP}$-complete.
\end{thm}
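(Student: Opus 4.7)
The result follows by dualising the argument for $\II_1$ (Theorem~\ref{thm:GAPII1}). First, applying Proposition~\ref{pro:weakbase} with ${\mathcal C} = \I_0$ (the clone of operations preserving $0$) produces the irredundant weak base
\[
\I_0(\Cols_{3}) = \begin{bmatrix}
 1 & 0 & 0 & 0 & 1 & 1 & 0 & 1 \\
 0 & 1 & 0 & 1 & 0 & 1 & 0 & 1 \\
 0 & 0 & 1 & 1 & 1 & 0 & 0 & 1 \\
 0 & 0 & 0 & 0 & 0 & 0 & 0 & 0
\end{bmatrix}
\]
for the co-clone $\II_0$; the only change from $\I_1(\Cols_3)$ is that the appended row is all zeros instead of all ones, reflecting the swap of $c_1$ for $c_0$. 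By Theorem~\ref{thm:GAP1&2}(2), it suffices to establish $\GAP(\N_{\NTriv}, \Y_{{\SEP}\cap{2\text{-}\Rob}})$ for this weak base, after which the gap transfers to any $\Gamma$ with $\langle \Gamma \rangle = \II_0$ at the cost of replacing ${\mathcal F}$ by a suitable finite set ${\mathcal G}$.

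The reduction and its analysis are obtained from the proof of Proposition~\ref{pro:GAPWSAT1} by swapping the roles of $0$ and $1$. Given an instance $I$ of $\II_2\text{-}\SAT$ (which by Remark~\ref{rem:bottop} I may assume has distinguished variables $\bot, \top$ occupying the final two coordinates of every constraint tuple), I form $\chisub{}(I)$ by replacing each constraint $\langle t, \I_2(\Cols_3)\rangle$ with $\langle t, \I_0(\Cols_3)\rangle$. Since $\I_2(\Cols_3) \subseteq \I_0(\Cols_3)$, every solution of $I$ is a solution of $\chisub{}(I)$, and the all-zeros assignment is now the distinguished trivial solution. The $2$-robustness lift for $\chisub{}(I) \in \Y_{2\text{-}\Rob}(\I_0(\Cols_3))$ proceeds exactly as in Proposition~\ref{pro:GAPWSAT1}: a locally compatible assignment $\alpha : \{u,v\} \to \{0,1\}$ that is also locally compatible for $I$ is extended using $2$-robustness of $I$, and the unique non-locally-compatible case $(\alpha(u), \alpha(v)) = (0,0)$ is handled by the constant-$0$ assignment. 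Separation transfers from $I \in \Y_{\SEP}(\I_2(\Cols_3))$ since the separating solutions of $I$ remain solutions of $\chisub{}(I)$.

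The gap on the negative side is the only slightly delicate point. Inspecting columns $7,8$ of $\I_0(\Cols_3)$ shows that the pair $(\bot, \top)$ must take one of the values $(0,1)$ or $(0,0)$ in every tuple; in particular $\bot$ is forced to $0$. Hence a non-constant solution $\psi$ of $\chisub{}(I)$ must satisfy $\psi(\top) = 1$: otherwise $\psi(\top) = 0$ together with $\psi(\bot) = 0$ would force every constraint tuple into the fourth row of $\I_0(\Cols_3)$, collapsing $\psi$ to the constant-$0$ map. But $\psi(\bot) = 0$ and $\psi(\top) = 1$ force every constraint to be satisfied via rows $1$--$3$ of $\I_0(\Cols_3)$, which are precisely $\I_2(\Cols_3)$, so $\psi$ is a solution of $I$ in $\II_2\text{-}\SAT$. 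Contrapositively, if $I \in \N_{\CSP}(\I_2(\Cols_3))$ then $\chisub{}(I) \in \N_{\NTriv}(\I_0(\Cols_3))$. Combining with Proposition~\ref{pro:WeirdoGAP} and Lemma~\ref{lem:Gap} gives the desired gap for $\I_0(\Cols_3)$, and Theorem~\ref{thm:GAP1&2}(2) then lifts it to arbitrary $\Gamma$ with $\langle \Gamma\rangle = \II_0$, yielding $\mathbf{NP}$-completeness of $(2,{\mathcal G})$-$\Robust(\Gamma)$ and $\SEP(\Gamma)$. No substantive obstacle is expected; the only care required is in tracking the column-$7,8$ bookkeeping of the dual relation, which is identical to the $\II_1$ case modulo negation.
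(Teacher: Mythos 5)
Your proposal is correct and is exactly the argument the paper intends: for $\II_0$ the paper gives only the one-line remark that the result follows from the $\II_1$ case "via a symmetrical argument," and your write-out is precisely that dualisation (weak base $\Cols_3\cup\{\mathbf{0}\}$, the same $\chi$-substitution, constant-$0$ handling the one extra locally compatible case, and the column-$7$/$8$ analysis forcing nonconstant solutions back into $\I_2(\Cols_3)$). The only implicit step — that a solution constant on all constrained variables is the constant map — is glossed over in exactly the same way in the paper's own $\II_1$ proof, so nothing new is missing.
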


\subsection{The Boolean co-clone $\II$.}
By Proposition~\ref{pro:weakbase}, the following relation is a weak base for the co-clone $\II$. 
\[
\I(\Cols_{3})=
 \begin{bmatrix}
1 & 0 & 0 & 1 & 1 & 0 & 0 & 1 \\
0 & 1 & 0 & 1 & 0 & 1 & 0 & 1 \\
0 & 0 & 1 & 0 & 1 & 1& 0 & 1 \\
0 & 0 & 0 & 0 & 0 & 0 & 0 & 0 \\
1 & 1 & 1 & 1 & 1 & 1& 1 & 1
  \end{bmatrix}
\]
We will use $\mathbb{II}$ to denote the template $\langle \{0,1\};\I(\Cols_{3}) \rangle$ and $\II\text{-}\SAT$ to denote the constraint problem $\CSP(\mathbb{II})$.
We remark that instances in $\N_{\NTriv}(\I(\Cols_{3}))$ are exactly the instances of $\II\text{-}\SAT$ with at most two solutions. 

\begin{pro}\label{pro:GAPWSAT01}
The relation $\I(\Cols_{3}) $ satisfies $\GAP(\N_{\NTriv}, \Y_{{\SEP}\cap2\text{-}\Rob})$. 
\end{pro}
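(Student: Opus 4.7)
The plan is to follow the pattern of Proposition~\ref{pro:GAPWSAT1}, reducing from the gap established for $\I_2(\Cols_3)$ in Proposition~\ref{pro:WeirdoGAP}.  Using the $\bot,\top$-convention of Remark~\ref{rem:bottop}, we may assume every instance $I=(V;\{0,1\};\mathcal{C})$ of $\II_2\text{-}\SAT$ has designated variables $\bot,\top$ appearing in positions $7$ and $8$ of every constraint tuple.  I will construct $\chisub{}(I)$ by replacing each constraint $\langle t, \I_2(\Cols_3)\rangle$ with $\langle t, \I(\Cols_3)\rangle$; since $\I_2(\Cols_3) \subseteq \I(\Cols_3)$, every solution of $I$ is a solution of $\chisub{}(I)$.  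The two claims to verify are: (a) $I\in\N_{\CSP}(\I_2(\Cols_3)) \Rightarrow \chisub{}(I)\in\N_{\NTriv}(\I(\Cols_3))$, and (b) $I\in\Y_{{\SEP}\cap 2\text{-}\Rob}(\I_2(\Cols_3)) \Rightarrow \chisub{}(I)\in\Y_{{\SEP}\cap 2\text{-}\Rob}(\I(\Cols_3))$.  Together with Proposition~\ref{pro:WeirdoGAP} and Lemma~\ref{lem:Gap}, these imply the desired gap.

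The core observation is the behaviour of the last two coordinates of $\I(\Cols_3)$: rows~$1$–$3$ carry $(0,1)$ there, row~$4$ carries $(0,0)$, and row~$5$ carries $(1,1)$; the pair $(1,0)$ never occurs.  Hence any solution $\psi$ of $\chisub{}(I)$ satisfies $(\psi(\bot),\psi(\top))\in\{(0,0),(0,1),(1,1)\}$.  When $(\psi(\bot),\psi(\top))=(0,0)$, every constraint tuple is forced onto row~$4$ and $\psi$ is the constant-$0$ solution; when $(\psi(\bot),\psi(\top))=(1,1)$, similarly $\psi$ is constant~$1$; and when $(\psi(\bot),\psi(\top))=(0,1)$, every constraint tuple falls in rows~$1$–$3$, which are exactly the rows of $\I_2(\Cols_3)$, so $\psi$ is in fact a solution of $I$.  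This immediately yields~(a): if $I$ has no solution, then $\chisub{}(I)$ admits only the two constant solutions, and hence no nonconstant solution.

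For~(b), the $\SEP$ part is immediate, since every separating solution of $I$ remains a separating solution of $\chisub{}(I)$ on the same variable set.  For $2$-robustness, let $\alpha\colon\{u,v\}\to\{0,1\}$ be a locally compatible assignment for $\chisub{}(I)$.  If $\alpha$ is also locally compatible for $I$, then $2$-robustness of $I$ provides a solution of $I$ (hence of $\chisub{}(I)$) extending $\alpha$.  Otherwise, there is a constraint tuple and a pair of its positions at which $(\alpha(u),\alpha(v))$ lies in the $\I(\Cols_3)$-projection but not in the $\I_2(\Cols_3)$-projection.  Such new pairs can only arise from the two constant rows added in passing from $\I_2(\Cols_3)$ to $\I(\Cols_3)$, so $(\alpha(u),\alpha(v))\in\{(0,0),(1,1)\}$; the constant-$0$ or constant-$1$ map on $V$ then extends $\alpha$ and satisfies every constraint in $\chisub{}(I)$.

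The main obstacle is the final case analysis: one must verify that no $2$-projection of $\I(\Cols_3)$ acquires a new non-constant pair over $\I_2(\Cols_3)$, which is immediate from the fact that the only extra rows are the constant ones.  All other ingredients — preservation of solutions under the substitution, the $(\bot,\top)$-bottleneck in the last two columns, and the invocation of Proposition~\ref{pro:WeirdoGAP} — run in perfect parallel with the $\II_1$ argument, after which Theorem~\ref{thm:GAP1&2}(2) lifts the result to every $\Gamma$ with $\langle\Gamma\rangle=\II$.
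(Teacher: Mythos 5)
Your proposal is correct and follows essentially the same route as the paper's proof: the same constraint-replacement construction $\chisub{1}(I)$, the same two-case analysis for $2$-robustness (falling back on the constant maps when $\alpha$ is not locally compatible for $I$), and the same observation that a nontrivial solution must separate $\bot$ from $\top$ and hence restrict to a solution of $I$. Your explicit tabulation of the last two coordinates of $\I(\Cols_3)$ is just a more detailed rendering of the step the paper states in one line.
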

\begin{proof}
We show there is a polynomial-time reduction taking $\N_{\CSP}(\I(\Cols_{3}))$ to $\N_{\NTriv}(\I(\Cols_{3}))$ and taking instances in $\Y_{{\SEP}\cap2\text{-}\Rob}(\I_2(\Cols_{3}))$  to instances in $\Y_{{\SEP}\cap2\text{-}\Rob}(\I(\Cols_{3}))$. The final result is then obtained from Proposition~\ref{pro:WeirdoGAP} and Lemma~\ref{lem:Gap}.

Given an instance $I$ of $\II_2\text{-}\SAT$, construct an instance $\chisub{1}(I)$ of $\II\text{-}\SAT$ by replacing each constraint $\langle t=(v_1,\dots, v_8),\I_2(\Cols_3)\rangle$ in ${\mathcal C}$ with $\langle t, \I(\Cols_3)\rangle$. Clearly every solution of $I$ is a solution of $\chisub{1}(I)$.  We again call on Remark \ref{rem:bottop} to assume that $\bot$ and $\top$ record the final two positions of every constraint tuple.

Assume that $I$ is in $\Y_{{\SEP}\cap2\text{-}\Rob}(\I_2(\Cols_{3}))$. We first show that $\chisub{1}(I)$ is in $\Y_{2\text{-}\Rob}(\I(\Cols_{3}))$. Let $\{u,v\}$ be a pair of distinct elements in $V$ and let $\alpha\colon \{u,v\}\to \{0,1\}$ be an assignment that is locally compatible for $\chisub{1}(I)$.  

\emph{Case}~$1$: If $\alpha$ is locally compatible for $I$, then we can use the fact that $I\in \Y_{2\text{-}\Rob}(\I_2(\Cols_{3}))$ to extend $\alpha$ to a solution of $I$, which is also a solution of $\chisub{1}(I)$. 

\emph{Case}~$2$: If $\alpha$ is not locally compatible for $I$, then $\alpha$ satisfies $(\alpha(u), \alpha(v))=(0,0)$ or $(\alpha(u), \alpha(v))=(1,1)$. In the first case, we can extend $\alpha$ to a solution of $\chisub 1(I)$ by mapping every variable in $\chisub 1(V)$ to $0$, and in the second case we extend $\alpha$ by mapping every variable in $\chisub 1(V)$ to $1$. 

Now $\chisub 1(I)\in \Y_{\SEP}(\I(\Cols_{3}))$ follows immediately from the fact that $I$ is in $\Y_{\SEP}(\I_2(\Cols_{3}))$. 

Since every non-trivial solution of $\chisub 1(I)$ in $\II\text{-}\SAT$ separates $\bot$ from $\top$, it follows that a nontrivial solution of $\chisub 1(I)$ in $\II\text{-}\SAT$ is also a solution of $I$ in $\II_2\text{-}\SAT$.  In the contrapositive: if $I$ is a NO instance of $\II_2\text{-}\SAT$, then $\chi(I)$ has only trivial solutions. 
\end{proof}
%\begin{cor}\label{cor:WSAT01NP}
%The following problems are $\mathbf{NP}$-complete.
%\begin{itemize}
%\item The $2$-robust satisfiability problem $2$-$\Robust(\mathbb{II})$.
%\item The separation problem $\SEP(\mathbb{II})$.
%\end{itemize}
%\end{cor}
The next result now follows from Proposition \ref{pro:GAPWSAT01} and Theorem~\ref{thm:GAP1&2}. 
\begin{thm}\label{thm:GAPII}
Let $\Gamma$ be a constraint language on $\{0,1\}$ such that $\langle \Gamma \rangle=\II$. Then $\Gamma$ has the gap property $\GAP(\N_{\NTriv},\Y_{{\SEP} \cap(2,{\mathcal G})})$, for some finite set ${\mathcal G}$ of pp-formul{\ae} in the language of $\Gamma$, and consequently the problems $(2, {\mathcal G})\text{-}\Robust(\Gamma)$ and $\SEP(\Gamma)$ are $\bf{NP}$-complete.
\end{thm}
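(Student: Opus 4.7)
The plan is to follow the two-step template already used for Theorems~\ref{thm:GAPII1} and~\ref{thm:GAPII0}. By Proposition~\ref{pro:weakbase} together with the Remark noting irredundancy, the relation $\I(\Cols_3)$ is an irredundant weak base for the co-clone $\II$. Proposition~\ref{pro:GAPWSAT01} has just established that $\I(\Cols_3)$ satisfies $\GAP(\N_{\NTriv}, \Y_{{\SEP}\cap 2\text{-}\Rob})$; since $2\text{-}\Rob$ is by definition the $(2,\mathcal{F}_0)$-$\Robust$ problem for the canonical set $\mathcal{F}_0$ consisting of pp-formulae defining projections of $\I(\Cols_3)$, this is precisely the gap property $\GAP(\N_{\NTriv}, \Y_{{\SEP}\cap (2,\mathcal{F}_0)})$ needed to invoke the transfer machinery.

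Applying Theorem~\ref{thm:GAP1&2}(2) to the clone ${\mathcal C}$ with $\Inv({\mathcal C})=\II$, the weak base $W=\{\I(\Cols_3)\}$, and the given $\Gamma$ (which, by hypothesis, satisfies $\langle\Gamma\rangle=\II=\Inv({\mathcal C})$), I directly obtain a finite set $\mathcal{G}$ of pp-formulae in the language of $\Gamma$ for which $\Gamma$ satisfies $\GAP(\N_{\NTriv}, \Y_{{\SEP}\cap(2,\mathcal{G})})$. This is the first conclusion of the theorem.

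For the NP-completeness consequence, NP membership of $\SEP(\Gamma)$ and $(2,\mathcal{G})$-$\Robust(\Gamma)$ is standard: succinct certificates are, respectively, a polynomial-size family of separating solutions (one per variable pair) and a polynomial-size family of solutions extending each of the (polynomially many) $\mathcal{G}$-compatible assignments on two variables. NP-hardness then follows from the gap property upon noting that $\Y_{{\SEP}\cap(2,\mathcal{G})}(\Gamma)$ is contained in both $\Y_{\SEP}(\Gamma)$ and $\Y_{(2,\mathcal{G})}(\Gamma)$, and is disjoint from $\N_{\NTriv}(\Gamma)$ because any separating solution is automatically non-constant; thus any algorithm deciding either problem would separate $\Y_{{\SEP}\cap(2,\mathcal{G})}$ from $\N_{\NTriv}$, which by the gap property is $\mathbf{NP}$-hard. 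Since the entire argument reduces to citing Proposition~\ref{pro:GAPWSAT01} and Theorem~\ref{thm:GAP1&2}(2), there is no real technical obstacle; all the substantive work was carried out in the proof of Proposition~\ref{pro:GAPWSAT01}.
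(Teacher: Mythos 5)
Your proposal is correct and takes essentially the same route as the paper, which derives Theorem~\ref{thm:GAPII} immediately from Proposition~\ref{pro:GAPWSAT01} (the gap property for the irredundant weak base $\I(\Cols_{3})$) combined with Theorem~\ref{thm:GAP1&2}(2). The additional detail you give on $\mathbf{NP}$-membership and on why the gap property yields $\mathbf{NP}$-hardness is left implicit in the paper but is consistent with it.
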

%\begin{proof}
%Since $\I(\Cols_{3})$ is an irredundant weak base for $\II$, from Theorem~\ref{thm:eqfreeweak}, we have $\I(\Cols_{3})\in {\langle \Gamma \rangle}_{\not\exists,\not=}$, and from Proposition~\ref{pro:GAPWSAT01}, the relation $\I(\Cols_{3})$ has gap property $\GAP(\N_{\NTriv}, \Y_{{\SEP}\cap 2\text{-}\Rob})$.  Since the reduction given in Theorem~\ref{thm:kRobustRed} preserves instances in $ \Y_{\SEP} \cap \Y_{(2, {\mathcal F})}$ (subject to a change in ${\mathcal F}$) and instances in $Y_{\Triv}$, then from Lemma~\ref{lem:Gap}, the relation $\Gamma$ satisfies $\GAP(\N_{\NTriv},Y_{{\SEP}\cap(2,{\mathcal G})})$, and so $(2, {\mathcal G})\text{-}\Robust(\Gamma)$ and $\SEP(\Gamma)$ are $\bf{NP}$-complete.
%\end{proof}

\subsection{The Boolean co-clone $\IN$.}
By Proposition~\ref{pro:weakbase}, the following relation is a weak base for the co-clone $\IN$. 
 \[
 \N(\Cols_{3})=
 \begin{bmatrix}
 1 & 0 & 0 & 0 & 1 & 1 & 0 & 1 \\
 0 & 1 & 0 & 0 & 0 & 1 & 0 & 1 \\
 0 & 0 & 1 & 1 & 1 & 0 & 0 & 1 \\
 0 & 1 & 1 & 1 & 0 & 0 & 1 & 0 \\
 1 & 0 & 1 & 1 & 1 & 0 & 1 & 0 \\
 1 & 1 & 0 & 0 & 0 & 1& 1 & 0 \\
 0 & 0 & 0 & 0 & 0 & 0 & 0 & 0 \\
 1 & 1 & 1 & 1 & 1 & 1 & 1 & 1 \\
  \end{bmatrix}
\]
We will use $\mathbb{IN}$ to denote the template $\langle \{0,1\};  \N(\Cols_{3})\rangle$ and $\IN\text{-}\SAT$ to denote the constraint problem $\CSP(\mathbb{IN})$. 

\begin{pro}\label{pro:GAPBigWSAT01}
The relation $\N(\Cols_{3})$ satisfies $\GAP(\N_{\NTriv}, \Y_{{\SEP}\cap 2\text{-}\Rob})$. 
\end{pro}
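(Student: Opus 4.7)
The plan is to follow the template of Proposition~\ref{pro:GAPWSAT01}, using Proposition~\ref{pro:BigWGAP} in place of Proposition~\ref{pro:WeirdoGAP} as the source of the gap. The key structural observation is that $\N(\Cols_{3})$ is obtained from $\N_{2}(\Cols_{3})$ by adjoining the two constant tuples (the all-$0$ and all-$1$ rows), so constant assignments become legitimate solutions of $\N(\Cols_{3})$-constraints, but nothing else is altered. Concretely I would define a polynomial-time reduction $\chisub{2}$ that, given an instance $I$ of $\IN_{2}\text{-}\SAT$, produces an instance $\chisub{2}(I)$ of $\IN\text{-}\SAT$ by replacing each constraint $\langle t, \N_{2}(\Cols_{3})\rangle$ in $\mathcal{C}$ with $\langle t, \N(\Cols_{3})\rangle$. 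By Remark~\ref{rem:bottop2} we may assume that the last two coordinates of every constraint tuple are the designated variables $\bot,\top$ (in one of the two orders), and every solution of $I$ is automatically a solution of $\chisub{2}(I)$.

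The next step is to verify that $\chisub{2}$ sends $\Y_{{\SEP}\cap 2\text{-}\Rob}(\N_{2}(\Cols_{3}))$ into $\Y_{{\SEP}\cap 2\text{-}\Rob}(\N(\Cols_{3}))$. Given a locally compatible assignment $\alpha$ on a pair $\{u,v\}$ of variables of $\chisub{2}(I)$, either $\alpha$ is locally compatible for $I$ as well, in which case $I\in\Y_{2\text{-}\Rob}(\N_{2}(\Cols_{3}))$ supplies the required extension; or else $\alpha$ realises one of the ``new'' values $(\alpha(u),\alpha(v))=(0,0)$ or $(1,1)$ made legal only by the adjoined constant rows, and in these two situations the constant-$0$ (respectively constant-$1$) assignment on all of $V$ is itself a solution of $\chisub{2}(I)$ extending $\alpha$. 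Separation is inherited directly because every separating solution of $I$ remains a solution of $\chisub{2}(I)$.

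Finally, to send $\N_{\CSP}(\N_{2}(\Cols_{3}))$ into $\N_{\NTriv}(\N(\Cols_{3}))$, I would argue the contrapositive: any nonconstant solution $\psi$ of $\chisub{2}(I)$ is in fact a solution of $I$ in $\IN_{2}\text{-}\SAT$. If $\psi(\bot)=\psi(\top)=c$, then the only rows of $\N(\Cols_{3})$ with both final coordinates equal to $c$ are the two constant rows, forcing every constrained variable to $c$; assuming (without loss of generality) no variable of $I$ is left unconstrained, this contradicts the nonconstancy of $\psi$. Hence $\psi(\bot)\neq\psi(\top)$, and in each constraint tuple the realised row of $\N(\Cols_{3})$ must be non-constant, so it lies in $\N_{2}(\Cols_{3})$; therefore $\psi$ solves $I$. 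Combining these three ingredients with Lemma~\ref{lem:Gap} and the gap property given by Proposition~\ref{pro:BigWGAP} yields the proposition. The only subtlety is the WLOG step ruling out unconstrained variables; this is routine preprocessing and is the same issue tacitly handled in Proposition~\ref{pro:GAPWSAT01}.
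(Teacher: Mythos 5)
Your proposal is correct and follows essentially the same route as the paper's proof: the same constraint-replacement reduction $\chisub{2}$, the same two-case analysis for local compatibility (falling back on constant solutions for the newly legal $(0,0)$ and $(1,1)$ assignments), the same inheritance of separation, and the same contrapositive argument that a nonconstant solution of $\chisub{2}(I)$ must separate $\bot$ from $\top$ and hence solve $I$. Your extra detail on why $\psi(\bot)=\psi(\top)$ forces constancy (only the two constant rows of $\N(\Cols_{3})$ have equal final coordinates) merely makes explicit a step the paper asserts without elaboration.
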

\begin{proof}
We show there is a polynomial-time reduction taking NO instances of the constraint problem $\IN_2\text{-}\SAT$ to instances in $\N_{\NTriv}(\N(\Cols_3))$ and taking instances in $\Y_{{\SEP}\cap2\text{-}\Rob}(\N_2(\Cols_3))$ to instances in $\Y_{{\SEP}\cap2\text{-}\Rob}(\N(\Cols_3))$. The final result will then follow from Proposition~\ref{pro:BigWGAP} and Lemma~\ref{lem:Gap}.

Given an instance $I=(V;\{0,1\};\mathcal{C})$ of $\IN_2\text{-}\SAT$, construct an instance of $\chisub 2(I)$ of $\IN\text{-}\SAT$ by replacing each constraint $\langle t=(x_1,\dots, x_8),\N_2(\Cols_3)\rangle$ in ${\mathcal C}$ with $\langle t, \N(\Cols_3)\rangle$.  Clearly every solution of $I$ is a solution of $\chisub 2(I)$.  We will use the assumption that $\bot$ and $\top$ are the only variables that appear in the final two positions of a constraint tuple, which is justified by Remark~\ref{rem:bottop2}.

Now assume that $I$ belongs to $\Y_{{\SEP}\cap{2}\text{-}\Rob}(\N_2(\Cols_3))$. We first show that $\chisub 2(I)\in \Y_{2\text{-}\Rob}( \N(\Cols_3))$. Let $\{u,v\}$ be a pair of distinct elements in $V$ and consider an assignment $\alpha\colon \{u,v\}\to \{0,1\}$ that is locally compatible for $\chisub 2(I)$. 

\emph{Case}~$1$. If $\alpha$ is locally compatible for $I$, then we can use the fact that $I$ is in $\Y_{2\text{-}\Rob}(\N_2(\Cols_3))$ to extend $\alpha$ to a solution of $I$, which is also a solution of $\chisub 2(I)$. 

\emph{Case}~$2$. If the assignment $\alpha$ is not locally compatible for $I$, then $\alpha$ satisfies $(\alpha(u), \alpha(v))=(0,0)$ or $(\alpha(u), \alpha(v))=(1,1)$. In the first case, we can extend $\alpha$ to a solution of $\chisub 2(I)$ by sending every variable in $V$ to $0$ and in the second, we extend by sending every variable in $V$ to $1$. 

Now $\chisub 2(I)\in \Y_{\SEP}(\N(\Cols_3))$ follows from the fact that $I\in \Y_{\SEP}(\N_2(\Cols_3))$: we use the separating solutions of $I$, which are also solutions of $\chisub 2(I)$, to separate distinct pairs of elements of $\chisub 2(I)$.

Now assume that $\chisub 2(I)$ has a non-trivial solution $\psi$ in $\IN\text{-}\SAT$.  Then $\psi$ separates $\bot$ from $\top$, so is also a solution of $I$ in $\IN_2\text{-}\SAT$. In the contrapositive: if $I$ is a NO instance of $\IN_2\text{-}\SAT$, then $\chisub 2(I)$ has only trivial solutions for $\IN\text{-}\SAT$.
\end{proof}
%\begin{cor}\label{cor:NPBigWSAT01}
%The following problems are $\mathbf{NP}$-complete.
%\begin{itemize}
%\item The $2$-robust satisfiability problem $2$-$\Robust(\mathbb{IN})$.
%\item The separation problem $\SEP(\mathbb{IN})$. 
%\end{itemize}
%\end{cor}
The following result is a consequence of Proposition~\ref{pro:GAPBigWSAT01} and Theorem~\ref{thm:GAP1&2}.
\begin{thm}\label{thm:GAPIN}
Let $\Gamma$ be a constraint language on $\{0,1\}$ such that $\langle \Gamma \rangle=\IN$. Then $\Gamma$ has the gap property $\GAP(\N_{\NTriv},\Y_{{\SEP}\cap(2,{\mathcal G})})$, for some finite set ${\mathcal G}$ of pp-formul{\ae} in the language of $\Gamma$, and consequently the problems $(2, {\mathcal G})\text{-}\Robust(\Gamma)$ and $\SEP(\Gamma)$ are $\bf{NP}$-complete.
\end{thm}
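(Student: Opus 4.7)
The plan is to derive this theorem as an immediate consequence of Proposition~\ref{pro:GAPBigWSAT01} and Theorem~\ref{thm:GAP1&2}(2), exactly mirroring the pattern used for the five earlier sub-cases (Theorems~\ref{thm:GAPII2}, \ref{thm:GAPIN2}, \ref{thm:GAPII1}, \ref{thm:GAPII0}, and~\ref{thm:GAPII}). This completes the six sub-cases needed for Statement~(2) of the Gap Trichotomy Theorem.

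First I would record that the relation $\N(\Cols_3)$ of Example~\ref{eg:BigWeirdoSAT} is an irredundant weak base for $\IN$: the weak-base property is Proposition~\ref{pro:weakbase}, and irredundancy is noted in the remark immediately following that proposition. Proposition~\ref{pro:GAPBigWSAT01} then supplies the gap property $\GAP(\N_{\NTriv}, \Y_{\SEP\cap 2\text{-}\Rob})$ for $\{\N(\Cols_3)\}$, which is the instance of $\GAP(\N_{\NTriv}, \Y_{\SEP\cap(2,{\mathcal F})})$ in which $\mathcal{F}$ consists of the pp-formul{\ae} defining the projections of $\N(\Cols_3)$. Feeding this data together with the given $\Gamma$ into part~(2) of Theorem~\ref{thm:GAP1&2} yields a finite set $\mathcal{G}$ of pp-formul{\ae} in the language of $\Gamma$ for which $\Gamma$ satisfies $\GAP(\N_{\NTriv}, \Y_{\SEP\cap(2,{\mathcal G})})$, proving the first half of the theorem.

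For the NP-completeness consequence, both problems lie in NP via polynomial-sized certificates (one full assignment per pair of variables for $\SEP$, and one full assignment per $\mathcal{G}$-compatible $2$-assignment for $(2,{\mathcal G})$-$\Robust$). NP-hardness of $\SEP(\Gamma)$ follows directly from the gap, since $\Y_{\SEP}(\Gamma) \supseteq \Y_{\SEP\cap(2,{\mathcal G})}(\Gamma)$ and $\Y_{\SEP}(\Gamma) \cap \N_{\NTriv}(\Gamma) = \varnothing$ (a solution separating two distinct variables cannot be constant). For $(2,{\mathcal G})$-$\Robust(\Gamma)$, I would compose the explicit reduction $\chisub{2}$ of Proposition~\ref{pro:GAPBigWSAT01} with the reduction of Theorem~\ref{thm:kRobustRed}(3): the distinguished variables $\bot,\top$ of $\chisub{2}(I)$ admit only non-constant locally compatible assignments on the pair $\{\bot,\top\}$, so an unsatisfiable $1$-in-$3$ instance (whose image has only the two constant solutions) is mapped to a NO instance of $(2,{\mathcal G})$-$\Robust(\Gamma)$, while the images of the $2$-robustly $1$-in-$3$ satisfiable instances from Theorem~\ref{thm:1in3GAP} lie in $\Y_{\SEP\cap(2,{\mathcal G})}(\Gamma) \subseteq \Y_{(2,{\mathcal G})}(\Gamma)$. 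This bookkeeping on top of the earlier results is the only work required; no genuine obstacle arises, since the computational heart of the argument was already completed in Proposition~\ref{pro:GAPBigWSAT01}.
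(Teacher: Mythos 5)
Your proposal is correct and matches the paper's own proof, which obtains Theorem~\ref{thm:GAPIN} in exactly this way as an immediate consequence of Proposition~\ref{pro:GAPBigWSAT01} and Theorem~\ref{thm:GAP1&2}(2). Two small notes: the relation $\N(\Cols_3)$ is the $8$-row matrix displayed in the subsection on $\IN$ (Example~\ref{eg:BigWeirdoSAT} constructs $\N_2(\Cols_3)$, the weak base for $\IN_2$), and your extra argument for the $\mathbf{NP}$-hardness of $(2,\mathcal{G})$-$\Robust(\Gamma)$ via the pair $\{\bot,\top\}$ is a worthwhile detail the paper leaves implicit, since $\Y_{(2,\mathcal{G})}$ need not be disjoint from $\N_{\NTriv}$ and so hardness of that problem does not follow from the gap property alone.
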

%\begin{proof}
%Since $\N(\Cols_3)$ is an irredundant weak base for the co-clone $\IN$, from Theorem~\ref{thm:eqfreeweak}, we have $\N(\Cols_3)\in {\langle \Gamma \rangle}_{\not\exists,\not=}$, and from Proposition~\ref{pro:GAPBigWSAT01}, the relation $\N(\Cols_3)$ has gap property $\GAP(\N_{\NTriv}, \Y_{{\SEP}\cap2\text{-}\Rob})$.  Since the reduction given in Theorem~\ref{thm:kRobustRed} preserves instances in $ \Y_{{\SEP}\cap(2, {\mathcal F})}$ (subject to a change in ${\mathcal F}$) and instances in $\N_{\NTriv}$, then by Lemma~\ref{lem:Gap}, that $\Gamma$ satisfies $\GAP(\N_{\NTriv},Y_{{\SEP}\cap(2,{\mathcal G})})$, for some finite set ${\mathcal G}$ of pp-formul{\ae}, and consequently $(2, {\mathcal G})\text{-}\Robust(\Gamma)$ and $\SEP(\Gamma)$ are $\bf{NP}$-complete.
%\end{proof}

\section{Towards a Trichotomy: proving tractablity}\label{sec:tractable}
 In this section, we investigate tractability for $\SEP(\Gamma)$ and $(2, {\mathcal F})$-$\Robust(\Gamma)$ over constraint languages $\Gamma$ on $\{0,1\}$.  We establish a theorem that covers all cases that are solvable in polynomial-time, see Theorem~\ref{thm:tractable} below. 
 
In the following it useful to recall that a finite relational structure (template) is a core if all of its unary polymorphisms are bijective.  %for every relational structure ${\mathbb A}$ there is a core structure ${\mathbb B}$ such that CSP(\mathbb{A}) and CSP(\mathbb{B}) are polynomial-time equivalent.

The following theorem appears in \cite[Proposition~$3.2$]{JACK}; strictly speaking, the result in \cite{JACK} holds for $k$-robust satisfiability, however the proof is applicable to the broader setting of $(k, {\mathcal F})$-robust satisfiability. We list only the properties required to establish our main results. 
\begin{thm}\cite[Proposition~$3.2$]{JACK}\label{thm:tract}
Let $\Gamma$ be a constraint language on a finite set $A$ and let ${\mathbb A}$ be the corresponding template. Consider ${\mathbb A}_0=\langle A; \Gamma \cup \{(a)\ |\ a\in A\}\rangle$.  There is a logspace Turing reduction from the following computational problems to $\CSP({\mathbb A}_0)$. 
\begin{itemize}
\item $(k, {\mathcal F})$-$\Robust({\mathbb A})$, for any $k\in \mathbb{N}$ and any finite set ${\mathcal F}$ of pp-formul{ae} in the language of $\Gamma$. 
\item $\SEP({\mathbb A})$. 
\end{itemize}
\end{thm}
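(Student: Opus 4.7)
The plan is to exploit the freedom to pin individual variables to particular values of $A$ using the added singleton unary relations $\{(a)\mid a\in A\}$ of $\mathbb{A}_0$. Given any $\Gamma^{\mathbb{A}}$-instance $I=(V;A;\mathcal{C})$ and a partial map $f\colon S\to A$ with $S\subseteq V$, we can form an augmented instance $I_f$ of $\CSP(\mathbb{A}_0)$ by appending, for each $v\in S$, the unary constraint $\langle (v), (f(v))\rangle$. Then $I_f\in\CSP(\mathbb{A}_0)$ holds if and only if $f$ extends to a (full) solution of $I$.

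For $(k,\mathcal{F})$-$\Robust(\mathbb{A})$ the plan is as follows. The algorithm enumerates in turn each $k$-subset $S\subseteq V$ (of which there are $O(|V|^k)$) and each map $f\colon S\to A$ (of which there are $|A|^k$, a constant since $A$ and $k$ are fixed). For each pair $(S,f)$ one checks whether $f$ is $\mathcal{F}$-compatible; if so, one queries the oracle on $I_f$. The algorithm accepts iff every such oracle call returns YES. Correctness is immediate from the characterisation of $I_f$ above, combined with the definition of $(k,\mathcal{F})$-robustness. For $\SEP(\mathbb{A})$ the plan is symmetrical: for each unordered pair $\{u,v\}$ of distinct variables, cycle through the constantly many assignments $(u,v)\mapsto(a,b)$ with $a\ne b$ and query the oracle on the correspondingly pinned instance; the pair is separated iff at least one query succeeds, and the algorithm accepts iff every pair is separated by some $(a,b)$.

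The main obstacle is verifying that the reduction really runs in logspace, the principal issue being the $\mathcal{F}$-compatibility test. Recall that $\mathcal{F}$ is a \emph{fixed} finite set of pp-formul{\ae}, so every $\rho\in\mathcal{F}$ has a bounded arity $k'$ and a bounded number $\ell$ of existential witnesses. Testing whether $\rho(s_{i_1},\dots,s_{i_{k'}})$ holds in $I$ then amounts to searching for a witness tuple in $V^{\ell}$ and verifying finitely many atomic conditions against $\mathcal{C}$, both of which can be carried out with a constant number of logspace counters over $V$. The enumeration of $k$-subsets $S$ and of maps $f\colon S\to A$ (and, in the $\SEP$ case, of pairs $\{u,v\}$ and values $(a,b)$) is likewise handled by a bounded number of logspace counters. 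Since the construction of the augmented instance $I_f$ from $(I,f)$ is a straightforward copy with a constant-size addition per pinned variable, each oracle query is written to the query tape in logspace. Hence the overall procedure is a logspace Turing reduction to $\CSP(\mathbb{A}_0)$, establishing both claims of the theorem.
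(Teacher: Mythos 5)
Your proposal is correct and follows essentially the same route as the paper, which does not reprove this result but cites it from Jackson's Proposition~3.2: one pins the chosen variables using the singleton unary relations of $\mathbb{A}_0$ and queries the $\CSP(\mathbb{A}_0)$ oracle over the polynomially many subsets and the constantly many value assignments, accepting iff all (respectively, for $\SEP$, some per pair) queries succeed. Your observation that the $\mathcal{F}$-compatibility test is itself a logspace search over a bounded number of witness variables is exactly the point needed to extend the cited argument from $k$-robustness to $(k,\mathcal{F})$-robustness, which is what the paper asserts without detail.
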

Recall that to a core structure ${\mathbb A}$ we can add all singleton unary relations without increasing the complexity of $\CSP({\mathbb A})$, see \cite[Theorem~$4.7$]{BJK}. 
Thus, it follows from Theorem~\ref{thm:tract}, that in the case where ${\mathbb A}$ is a core, each of the problems listed is reducible to $\CSP({\mathbb A})$. 

\begin{thm}\label{thm:tractable}
Let $\Gamma$ be a constraint language on $\{0,1\}$. If $\IN \not\subseteq \langle \Gamma \rangle$, then the computational problems $\SEP(\Gamma)$ and $(2, {\mathcal F})$-$\Robust(\Gamma)$ are solvable in polynomial-time. 
\end{thm}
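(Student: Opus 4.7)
The plan is to pull both problems back to a standard Boolean CSP through Theorem~\ref{thm:tract}, and then to exploit the Galois-theoretic meaning of the hypothesis $\IN\not\subseteq\langle\Gamma\rangle$ together with Schaefer's dichotomy. More precisely, Theorem~\ref{thm:tract} already gives a logspace Turing reduction from $\SEP(\Gamma)$ and from $(2,\mathcal{F})$-$\Robust(\Gamma)$ to $\CSP(\mathbb{A}_0)$, where $\mathbb{A}_0=\langle\{0,1\};\Gamma\cup\{\{(0)\},\{(1)\}\}\rangle$. So it will suffice to prove that under $\IN\not\subseteq\langle\Gamma\rangle$ the problem $\CSP(\mathbb{A}_0)$ is in $\mathbf{P}$.

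The next step is to translate the assumption across the Galois connection of Theorem~\ref{thm:Geiger}. Since $\IN=\Inv(\{\neg,c_0\})$, we have $\Pol(\IN)=[\{\neg,c_0\}]$, and $\IN\not\subseteq\langle\Gamma\rangle$ is equivalent to $\Pol(\Gamma)\not\subseteq[\{\neg,c_0\}]$. Now $[\{\neg,c_0\}]=[\{\neg,c_0,c_1\}]$ because $\neg\circ c_0=c_1$, and this clone consists entirely of essentially unary operations (projections, negations of projections, and the two constants). Thus the hypothesis says that $\Pol(\Gamma)$ contains some operation that is not essentially unary.

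The main structural step is now to appeal to Post's classification of Boolean clones. The atoms of Post's lattice that are not contained in the essentially unary clone $[\{\neg,c_0,c_1\}]$ are precisely the four clones generated by $\wedge$, $\vee$, the majority operation, and the minority operation. So any Boolean clone not contained in $[\{\neg,c_0,c_1\}]$ must contain at least one of $\wedge$, $\vee$, $\mathrm{maj}$, or $\mathrm{minority}$. Applying this to $\Pol(\Gamma)$, we obtain an idempotent (and in particular, $\{(0)\}$- and $\{(1)\}$-preserving) polymorphism of $\Gamma$ of one of these four types. Such an operation lies automatically in $\Pol(\mathbb{A}_0)=\Pol(\Gamma)\cap\Pol(\{\{(0)\},\{(1)\}\})$.

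Finally, by Schaefer's dichotomy applied to $\mathbb{A}_0$ (whose constraint language is in particular a core, because it contains both singleton unary relations), the presence of any one of $\wedge$, $\vee$, $\mathrm{maj}$, $\mathrm{minority}$ among the polymorphisms of $\mathbb{A}_0$ places $\CSP(\mathbb{A}_0)$ in one of the tractable Schaefer classes (Horn, dual Horn, bijunctive, or affine), hence it is polynomial-time solvable. Combining this with the reduction of Theorem~\ref{thm:tract} gives polynomial-time algorithms for both $\SEP(\Gamma)$ and $(2,\mathcal{F})$-$\Robust(\Gamma)$. The only step requiring genuine care is the Post-lattice classification used to extract an idempotent $\wedge$/$\vee$/$\mathrm{maj}$/$\mathrm{minority}$ polymorphism from an arbitrary essentially non-unary one; however this is a well-known corollary of Post's theorem and can be cited directly, so no real obstacle arises.
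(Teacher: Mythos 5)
Your proposal is correct and follows essentially the same route as the paper: reduce $\SEP(\Gamma)$ and $(2,\mathcal{F})$-$\Robust(\Gamma)$ to $\CSP(\Gamma\cup\{(0),(1)\})$ via Theorem~\ref{thm:tract}, use Post's lattice to see that adjoining the singleton unary relations cannot push the co-clone up to contain $\IN$, and then invoke Schaefer for tractability. The only difference is presentational: you work on the polymorphism side of the Galois connection and explicitly extract one of the idempotent operations $\wedge$, $\vee$, $\mathrm{maj}$, $x\oplus y\oplus z$, whereas the paper phrases the same Post-lattice fact in co-clone language and cites Schaefer's Lemma~4.1 directly.
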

\begin{proof}
When $\IN \not\subseteq \langle \Gamma \rangle$, it follows from Post's co-clone lattice (see Figure \ref{clo:dualpost}), that $\IN \not\subseteq \langle \Gamma\cup\{(0),(1)\} \rangle$ and then it is known that the constraint problem $\CSP(\Gamma \cup\{(0),(1)\})$ is tractable; this can be found in Schaefer's original argument for example; see \cite[Lemma~4.1]{Schaef}.  Then from Theorem~\ref{thm:tract}, the problems $\SEP(\Gamma)$ and $(2, {\mathcal F})$-$\Robust(\Gamma)$ are solvable in polynomial-time. 
\end{proof}

\section{An algebraic approach}\label{sec:algebra}
The algebraic analysis of polymorphisms has been a very powerful tool in the study of $\CSP$  complexity; for the main development of this approach see~\cite{JCC, JCG, JCG97, JCP}.  In particular, the existence of polymorphisms satisfying certain equations has been shown to ensure tractability (see~\cite{Jeav98} for example) and can be used to identify appropriate algorithmic methods for solvability~\cite{LibKoz,fedvar}.  Conversely, the absence of polymorphisms satisfying certain equations ensures intractability~\cite{LTess}.  A first step in developing the algebraic theory was recognising that $\CSP$ complexity is preserved under the operator $\langle-\rangle$; this goes back to Schaefer \cite{Schaef}.  A second step was connecting the algebra of polymorphisms to fundamental concepts in universal algebra: first variety theory, where substructures, homomorphic images and direct products enable hardness results to be transferred from small domains to larger domains, and then the deeper analysis of tame congruence theory.  Much of the universal algebraic analysis is built on the work of~\cite{BJK}.  For a comprehensive overview of the algebraic approach see \cite{Barto,JKN13}.  

We have seen that for some computational problems the $\langle-\rangle$ operator appears too coarse. When the weaker operator $\langle-\rangle_{\not\exists}$ can be substituted, conventional polymorphism analysis can be replaced by the analysis of \emph{partial} polymorphisms.  In this section, we examine which universal algebraic techniques in the algebraic analysis of $\CSP$ complexity can be recovered for problems amenable to the $\langle-\rangle_{\not\exists}$ operator. More specifically, we obtain results for a number of variants of the $\CSP$ that allow to lift hardness results from Boolean domains to many problems on templates with non-Boolean domains.  The first of our results relates the complexity of a finite partial algebra to the complexity of its subalgebras and homomorphic images and the second allows to further restrict to idempotent partial polymorphisms.

%[Theorems $4.7$, $5.2$\ \&\ $5.4$]

%reduce the problem of analyzing the complexity of an algebra to a similar problem involving an algebra with a smaller universe. ``$\mathsf{HS}$-theorems'' . These theorems,  allow us to transfer hardness results .  , including $\SEP$ and $(k,{\mathcal F})$-$\Robust$ 
%~(\cite[Theorem~$2.1$]{LTess} for a refined version) 

%The following   For any algebra ${\mathbf A}$ recall that $\mathsf{HS}(\bf{A})$ is the smallest class of partial algebras in the same signature closed under the formation of homomorphic images ($\mathsf{H}$) and subalgebras ($\mathsf{S}$). 
%\begin{thm}\up(\cite[Theorems $5.2$, $5.4$]{BJK} \up{or} \cite[Theorem~$2.1$]{LTess}\up)\label{thm:HS}
%Let ${\mathbb A}=\langle A; R\rangle$ and ${\mathbb B}= \langle B; S\rangle$ be templates. If there exist algebras ${\mathbf A}=\langle A; F^{\mathbf A} \rangle$ and ${\mathbf B}=\langle B; F^{\mathbf B} \rangle$ such that 
%\begin{enumerate}[\quad \rm(1)]
%\item $F^{\mathbf B} \subseteq \Pol(S)$,
%\item ${\mathbf B}\in \mathsf{HS}(\mathbf A)$, and 
%\item $\Pol(R)\subseteq F^{\mathbb A}$,
%\end{enumerate} 
%then $\CSP({\mathbb B})$ reduces in logspace to $\CSP({\mathbb A})$.
%\end{thm}

%see~or \cite[Theorem~$2.1$]{LTess} for a refined version.  
Results of Bulatov, Jeavons and Krokhin provide the fundamental connection between CSP complexity and the tame-congruence theoretic analysis of polymorphism algebras (the polymorphisms of a template treated as operations of an algebra); see \cite[Theorems $5.2$, $5.4$]{BJK} or \cite[Theorem~$2.1$]{LTess} for a refined version.  These results link variety-theoretic properties, such as taking subalgebras and homomorphic images, to the complexity of the corresponding $\CSP$.  We now prove variations of this result, first in the context of $\GAP(\N_{\CSP}, \Y_{{\SEP}\cap(k, {\mathcal G})})$ (Theorem~\ref{thm:HSGap}) and then in the context of two other $\CSP$ variants, namely the equivalence problem and implication problem (Theorem~\ref{thm:HSEQUIV}).  The first variation will be established through a series of polynomial-time reductions, which we present in the form of lemmas. The constructions used for substructures (Lemma~\ref{lem:ParSubRed}) and homomorphisms (Lemma~\ref{lem:homredn}) are based on those in~\cite{BJK,LTess}, given in the standard CSP setting.  In the case of taking substructures, $\SEP$ and $(k, {\mathcal F})$-$\Robust$ carry through using the standard construction (subject to a change in the local compatibility condition ${\mathcal F}$).  The homomorphism case however requires proper amendment, including the addition of extra relations and non-trivial usage of the gap property.

For partial algebras there are a number of reasonable ways to define the concepts of subalgebra and homomorphism; see Gr\"atzer \cite{gra} for example.  
We take the definition of subalgebra coinciding with the model-theoretic notion of substructure and take the weakest (that is, most general) definition of homomorphism of the standard definitions.
\begin{defn}
Let $F$ be a set of finitary partial operation symbols. The partial algebra ${\mathbf B}=\langle B; F^{\mathbf B}\rangle$ is a \emph{subalgebra} of the partial algebra ${\mathbf A}=\langle A; F^{\mathbf A}\rangle$, if $B\subseteq A$ and for each $n$-ary $f\in F$, we have $\dom(f^{\mathbf B})={\dom(f^{\mathbf A})}\cap B^n$, and $f^{\mathbf B}$ agrees with  $f^{\mathbf A}$ on this set. 
\end{defn} 
\begin{defn}
Let $F$ be a set of finitary partial operation symbols and let ${\bf A}=\langle A;F^{\bf A}\rangle$ and ${\bf B}=\langle B;F^{\bf B}\rangle$ be partial algebras.  A map $\varphi\colon { A} \to {B}$ is a \emph{homomorphism} if, for every $n$-ary $f\in F$ and $(a_1, \dots, a_n)\in \dom(f^{\mathbf A})$, we have $(\varphi(a_1), \dots, \varphi(a_n))\in \dom(f^{\mathbf B})$ and $\varphi(f^{\mathbf A}(a_1, \dots, a_n))=f^{\bf B}(\varphi(a_1), \dots, \varphi(a_n))$. 
\end{defn}
We say that ${\mathbf B}$ is a \emph{homomorphic image} of ${\mathbf A}$ if there exists a surjective homomorphism $\varphi\colon { \bf A} \onto {\bf B}$. 
Note that homomorphic images are not uniquely determined by the map $\varphi\colon A \to B$.

The first result we require is a more general version of Theorem~\ref{thm:kRobustRed}. The addition of equality presents complications for $(2, {\mathcal F})$-robust satisfiability, and we require the use of a gap property to carry through the reduction.  The proof for $\SEP$ is similar, however a further slight variation of the proof is necessary.  We will say that an equality constraint $\langle(x_i,x_j),=_A\rangle$ is non-trivial if $i\neq j$, otherwise it is trivial. 

\begin{thm}\label{thm:weaksysGAP}
Let $A$ be a non-empty finite set and let $\Gamma$ be a constraint language over $A$. Let $R$ be any finite subset of $\langle \Gamma\rangle_{\not\exists}$ and let ${\mathcal F}$ be a finite set of pp-formul{\ae} in the language of $R$.  Then there is a polynomial-time Karp reduction from
$\SEP(R)$ to $\SEP(\Gamma)$\textup.
%\item $(k, {\mathcal F})$-$\Robust(\langle A;R\rangle)$ to $(k, {\mathcal G})$-$\Robust(\langle A;\Gamma \cup\{{=_A}\}\rangle)$, for some finite set ${\mathcal G}$ of pp-formul{\ae}\textup;
 Moreover, if $R$ has gap property $\GAP(\N_{\CSP}, \Y_{{\SEP}\cap(k, {\mathcal F})})$, then $\Gamma$ satisfies $\GAP(\N_{\CSP}, \Y_{{\SEP}\cap(k, {\mathcal G})})$, for some finite set ${\mathcal G}$ of pp-formul{\ae} in the language of $\Gamma$. %to instances in $N_{\CSP}(\Gamma)$ and instances in $Y_{{\SEP}\cap(k, {\mathcal F})}(R)$ to instances in $\Y_{{\SEP}\cap(k, {\mathcal G})}(\Gamma)$,  
% \item There is a polynomial-time Karp reduction from 
 %\end{enumerate}
 \end{thm}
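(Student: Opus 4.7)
The plan is to adapt the expansion construction from the proof of Theorem~\ref{thm:kRobustRed}, taking care of the new complication that conjunct-atomic $\Gamma$-definitions of $R$-relations may now include equality atoms of the form $x_a = x_b$. When such a definition is substituted into an $R$-constraint $\langle(v_{i_1},\ldots,v_{i_\ell}),r\rangle$ in an instance $I=(V;A;\mathcal{C})$, we obtain $\Gamma$-atoms on variables of $V$ together with a (possibly empty) set $E$ of forced equalities $v_{i_a}=v_{i_b}$. Let $\sim$ denote the equivalence relation on $V$ generated by $E$. I construct the target $\Gamma$-instance $I'$ on the quotient $V/{\sim}$, translating each $\Gamma$-atom so that it acts on $\sim$-classes and discarding the (now trivial) equalities. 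Because any solution of $I$ must respect $E$, solutions of $I$ are in bijection with solutions of $I'$ via the quotient map $V \to V/{\sim}$.

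For the standalone reduction $\SEP(R) \leq \SEP(\Gamma)$, I split into two cases: (a)~$\sim$ is the identity on $V$, so $V' = V$ and $I, I'$ have identical solutions, making $I$ YES for $\SEP(R)$ if and only if $I'$ is YES for $\SEP(\Gamma)$; (b)~$\sim$ equates two distinct elements of $V$, in which case $I$ is automatically NO for $\SEP(R)$, and the reduction outputs a fixed NO instance of $\SEP(\Gamma)$. Such a fixed instance is obtainable whenever case~(b) can occur: (b) requires some $r \in R$ to be $=$-redundant at a pair of positions, and expanding a single constraint that places two distinct variables in those positions produces a canonical $\Gamma$-instance that is itself NO for $\SEP(\Gamma)$.

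For the gap-property transfer, set $\mathcal{G} := \{\rho_\Gamma : \rho \in \mathcal{F}\}$, where $\rho_\Gamma$ is obtained from $\rho$ by substituting the chosen conjunct-atomic $\Gamma$-definition for each occurrence of an $R$-atom; the result is still a pp-formula since equality is permitted inside pp-formul{\ae}. Apply the quotient construction uniformly. For $I \in \N_{\CSP}(R)$, the solution-preserving bijection gives $I' \in \N_{\CSP}(\Gamma)$ regardless of whether $\sim$ collapses. For $I \in \Y_{\SEP \cap (k,\mathcal{F})}(R)$, membership in $\Y_{\SEP}(R)$ forbids any non-trivial $\sim$-collapse, placing us in case~(a) with $V'=V$; the solution bijection then carries $\SEP$-membership, and the natural correspondence between $\mathcal{F}$-compatible and $\mathcal{G}$-compatible assignments on $k$ variables carries $(k,\mathcal{G})$-robust membership. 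Lemma~\ref{lem:Gap} then yields $\GAP(\N_{\CSP}, \Y_{\SEP \cap (k,\mathcal{G})})$ for $\Gamma$.

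The main obstacle is the asymmetry in the way equality atoms affect $\CSP$ versus $\SEP$: they are silently absorbed by the variable quotient as far as the existence of solutions is concerned, but they can destroy separability of pairs in $V$ with no corresponding destruction in the quotient $\Gamma$-instance. The case split, together with the observation that $\SEP$-membership automatically rules out the $\sim$-collapse case, is precisely what is needed to align these two behaviours within a single polynomial-time construction.
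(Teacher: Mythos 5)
Your construction is essentially the paper's: in the equality-free case the expanded instance has exactly the same solutions (your identity-quotient case), while a forced non-trivial equality certifies that $I\notin\Y_{\SEP}(R)$ and triggers a bail-out; the gap transfer then works because membership in $\Y_{{\SEP}\cap(k,\mathcal{F})}(R)$ rules out the bad case, and $\N_{\CSP}$ is preserved by the solution bijection. Your quotient packaging and your uniform treatment of $\N_{\CSP}$ (which avoids needing the fixed NO instance in the gap argument) are fine and, if anything, slightly cleaner than the paper's, which instead routes Case~1 through a pre-selected instance $J$ chosen to be a NO instance of $\CSP(\Gamma)$ when $\CSP(\Gamma)$ is nontrivial.

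The one step that fails is your justification for the fixed NO instance of $\SEP(\Gamma)$ in case~(b). Expanding a single $=$-redundant constraint and quotienting identifies precisely the pair of variables that witnessed non-separability, so the resulting $\Gamma$-instance can easily be a YES instance of $\SEP(\Gamma)$. Concretely, take $A=\{0,1\}$, $\Gamma=\{s\}$ with $s$ unary and $|s|=2$, and $r=\{(t,t):t\in s\}$ defined by $s(x_1)\wedge s(x_2)\wedge x_1=x_2$: your canonical instance collapses to a single variable constrained by $s$, which is vacuously separable, and indeed \emph{every} $\Gamma$-instance here is a YES instance of $\SEP(\Gamma)$ even though $\SEP(\{r\})$ has NO instances, so no fixed NO instance (and hence no Karp reduction) exists at all. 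To be fair, the paper's proof merely posits such a $J$ and therefore also leaves this degenerate case of the first claim unaddressed; for the ``moreover'' clause — the part actually used downstream — both your argument and the paper's are sound, since the gap hypothesis makes $\CSP(R)$, and hence $\CSP(\Gamma)$, nontrivial, which supplies the required NO instance. You should either restrict the first claim to the case where $\SEP(\Gamma)$ has a NO instance, or derive one from the nontriviality of $\CSP(\Gamma)$ as the paper does in the gap setting, rather than from the $=$-redundant relation itself.
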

 \begin{proof}
 First fix some NO instance $J$ of $\SEP(\Gamma)$, ensuring that if $\CSP(\Gamma)$ is nontrivial, then $J$ is a NO instance of $\CSP(\Gamma)$ (which is automatically a NO instance of $\SEP(\Gamma)$).
 
We now make a small adjustment to the proof of Lemma \ref{lem:CAEquivSolns}.  Given an $R$-instance $I$, we construct a $\Gamma$-instance $I'$.  Again, each constraint $\langle (v_1, \dots, v_\ell) , r^{\mathbb A}\rangle\in {\mathcal C}$ is to be replaced by constraints built from the conjuncts in the formula defining $r$ from $\Gamma$.  However, now there may be equality conjuncts.  

\emph{Case~$1$}.  If there is a non-trivial equality constraint $\langle(x_i,x_j),=_A\rangle$ introduced, then $I$ was a NO instance of $\SEP(R)$ as no satisfying solution can separate $x_i$ and $x_j$. Thus, for $I'$ we may output $J$.

\emph{Case~$2$}.  If all equality constraints constructed are trivial, then we may omit these from the conjuncts.  All of the conjuncts are now equality-free and so we are in the situation encountered in Lemma \ref{lem:CAEquivSolns}: the solutions of the instance $I$ are identical to those of $I'$.  
Hence $I$ is a YES instance of $\SEP(R)$ if and only if $I$ is a YES instance of $\SEP(\Gamma)$.  As this is trivially true in Case 1 also, we have the desired reduction.

Assume that $R$ satisfies $\GAP(\N_{\CSP}, \Y_{{\SEP}\cap(k, {\mathcal F})})$. It follows that $\CSP(R)$ is nontrivial and so $\CSP(\Gamma)$ is non-trivial as $\CSP(R)$ reduces to $\CSP(\Gamma)$.  Assume $I$ is a NO instance of $\CSP(R)$.  If Case~$1$ applies, then $I'=J$ is NO instance of $\CSP(\Gamma)$, as required.  If Case~$2$ applies, then $I'$ cannot have a solution.  So $I'$ is again a NO instance of $\CSP(\Gamma)$.
%Why is it implicit that $N_{\CSP}$ is non-empty. 

Now assume $I\in \Y_{{\SEP}\cap(k, {\mathcal F})}$.  Then Case $2$ applies and the solutions of the instance $I$ are identical to those of $I'$.  We are then essentially in the equality-free case considered in ($2$) and ($3$) of Theorem~\ref{thm:kRobustRed}.  Using the set $\mathcal{G}$ and argument given there, it follows that $I'$ is in $\Y_{{\SEP}\cap(k, {\mathcal G})}$. The final result is obtained by Lemma~\ref{lem:Gap}. 
 \end{proof}

%We first apply the usual pp reduction followed by the usual reduction for eliminating constraints involving the equality relation ${=_A}$. Given an instance $I=(V;A;{\mathcal C})$ of $\CSP(\Gamma \cup\{{=_A}\})$, construct an instance $I'=(V';A;{\mathcal C})$ of $\CSP(\Gamma)$ in the following way.  Fix some arbitrary total order $<$ on $V$ and let $\equiv_A$ denote the reflexive transitive closure of the ${=}_A$ on $V$.  For any $v$, let $v_{<}$ denote the $<$-smallest element of the ${\equiv}_{A}$-class of $v$.  Then $I'$ is simply $I$, but with each appearance of $v\in V$ replaced by $v_{<}$ (and all ${=}_A$ constraints removed). Then the map $\sigma\colon V \to V'$ defined by $\sigma(v)=v_<$, for all $v\in V$ preserves every $\Gamma$-constraint in ${\mathcal C}$. Now If $\varphi$ is a satisfying solution of $I$, then $\varphi\rest{V'}$ is a satisfying solution of $I'$. Conversely, if $\psi$ is a solution of $I'$, then $\psi \circ \sigma$ is a solution of $I$.  

%Assume that $I\in Y_{(k, {\mathcal G})}(\Gamma\cup{=_A})$. Let $S\subseteq V'$ with $|S|=k$ and consider an assignment $\alpha\colonS\to A$ that is ${\mathcal G}$-compatible for $I'$.  Since $V'\subseteq V$, and therefore $S\subseteq V$,  it follows immediately that $\alpha \circ \sigma\colonS\to A$ is ${\mathcal G}$-compatible for $I$. We can extend $\alpha$ to a solution $\varphi$ of $I$ and then $\varphi\rest{V'}$ is a solution of $I'$ extending $S$. Hence $I\in Y_{(k, {\mathcal G})}(\Gamma)$
%between properties of the polymorphism algebra and 

In the following lemmas, it may be useful to recall that, for any non-empty set $A$ and set $F$ of partial operations on $A$, a $k$-ary relation $r$ on $A$ is in $\Inv(F)$ precisely when $r$ is a subuniverse of the partial algebra ${\langle A;F\rangle}^{k}$.
\begin{lem}\label{lem:homredn}
Let $F$ be a set of finitary partial operation symbols, let ${\bf A}=\langle A;F^{\bf A}\rangle$ and ${\bf B}=\langle B;F^{\bf B}\rangle$ be finite partial algebras and let $\varphi\colon {\bf A} \twoheadrightarrow {\bf B}$ be a surjective homomorphism.  Let $\Gamma$ be a finite subset of $\Inv(F^{\bf B})$, let ${\mathcal F}$ be a finite set of pp-formul{\ae} in the language of $\Gamma$ and let $k\in {\mathbb N}$.  Then for some finite set $\Gamma'\subseteq \Inv(F^{\bf A})$ there is a polynomial-time computable function reducing 
\begin{enumerate}[\quad \rm(1)]
\item $\CSP(\langle B; \Gamma \rangle)$ to $\CSP(\langle A; \Gamma' \rangle)$,
\item $(k, {\mathcal F})\text{-}\Robust(\langle B;\Gamma\rangle)$ to $(k, {\mathcal G})\text{-}\Robust(\langle A;\Gamma'\rangle)$, for some finite set ${\mathcal G}$ of pp-formul{\ae} in the language of $\Gamma'$, 
\end{enumerate}
and taking $\Y_{\SEP}(\Gamma)$ to $\Y_{\SEP}(\Gamma')$.
%\item[$(\dagger)$] Every instance $I$ of $\SEP(\langle B; \Gamma \rangle)$ is either a NO instance or is a YES instance of $\SEP(\langle B; \Gamma \rangle)$. \label{item:gap}
\end{lem}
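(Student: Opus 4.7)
The plan is to adapt the classical pullback construction from the standard $\CSP$ setting to the partial-algebra context, with additional bookkeeping to handle the local-compatibility predicates. For each $n$-ary relation $r\in \Gamma$, I define the pullback
\[
r':=\varphi^{-1}(r)=\{(a_1,\dots,a_n)\in A^n\ |\ (\varphi(a_1),\dots,\varphi(a_n))\in r\}
\]
and set $\Gamma':=\{r'\ |\ r\in \Gamma\}$. Using the homomorphism identity $\varphi(f^{\mathbf A}(x_1,\dots,x_k))=f^{\mathbf B}(\varphi(x_1),\dots,\varphi(x_k))$, together with preservation of partial-operation domains under $\varphi$ and the assumption $r\in \Inv(F^{\mathbf B})$, a direct verification shows $r'\in \Inv(F^{\mathbf A})$, hence $\Gamma'\subseteq \Inv(F^{\mathbf A})$.

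Given an instance $I=(V;B;{\mathcal C})$ of $\CSP(\langle B;\Gamma\rangle)$, I construct $I'=(V;A;{\mathcal C}')$ of $\CSP(\langle A;\Gamma'\rangle)$ by replacing every constraint $\langle s,r\rangle\in {\mathcal C}$ with $\langle s,r'\rangle$; analogously, for each pp-formula $\rho\in {\mathcal F}$ I let $\rho'$ be obtained by replacing each relation symbol $r$ appearing in $\rho$ by its counterpart $r'$, and put ${\mathcal G}:=\{\rho'\ |\ \rho\in{\mathcal F}\}$. Both constructions are polynomial-time. Two semantic facts drive everything that follows: (a) for any tuple $s$ of variables from $V$, the constraint $\rho'(s)$ holds in $I'$ iff $\rho(s)$ holds in $I$, since the underlying variable tuples are identical; (b) by surjectivity of $\varphi$, the pp-defined relations satisfy $r_{\rho'}^{\mathbb A}=\varphi^{-1}(r_{\rho}^{\mathbb B})$, because existential witnesses in $\mathbb B$ can be lifted to $\mathbb A$ (using surjectivity) while witnesses in $\mathbb A$ project to $\mathbb B$ under $\varphi$.

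With (a) and (b) in hand, the $\CSP$ reduction is routine: any solution $\psi\colon V\to A$ of $I'$ yields the solution $\varphi\circ\psi$ of $I$, while any solution $\chi$ of $I$ lifts to a solution $\psi$ of $I'$ by choosing arbitrary $\varphi$-preimages. The $\Y_{\SEP}$ preservation follows similarly: for a separating solution $\chi$ of $I$ and any lift $\psi$, the inequality $\varphi(\psi(u))=\chi(u)\neq \chi(v)=\varphi(\psi(v))$ forces $\psi(u)\neq \psi(v)$. The main technical load sits in~(2): given a ${\mathcal G}$-compatible $g\colon S\to A$, one uses~(b) to show $\varphi\circ g$ is ${\mathcal F}$-compatible, which by hypothesis extends to a solution $\chi$ of $I$; the sought extension of $g$ is then the map $\psi$ defined by $\psi\rest{S}=g$ and $\psi(v)\in \varphi^{-1}(\chi(v))$ for $v\notin S$, which is consistent because $\varphi\circ g=\chi\rest{S}$. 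The converse direction of~(2) lifts an ${\mathcal F}$-compatible $f\colon S\to B$ to a ${\mathcal G}$-compatible $f'\colon S\to A$ via preimages and then invokes the Robust hypothesis on $I'$. I expect the main obstacle to be the careful verification of~(b): both directions rely on surjectivity of $\varphi$ and on preservation of partial-operation domains, and the existential quantifiers in pp-formulas must be unwound with care.
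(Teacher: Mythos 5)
Your construction is the same as the paper's (pullback relations $\varphi^{-1}(r)$, same-variable instance translation, lifting solutions along arbitrary $\varphi$-preimages for the $\CSP$, $\SEP$ and $\Robust$ directions), and your handling of the extension step in (2) --- choosing the lift to agree with $g$ on $S$, which is consistent because $\varphi\circ g$ coincides with the extending solution on $S$ --- is if anything stated more carefully than in the paper. However, there is one genuine gap: your translation $\rho\mapsto\rho'$ only replaces relation symbols, but pp-formul{\ae} in this paper may also contain equality atoms $x_i=x_j$, and for those your key identity (b), $r_{\rho'}^{\mathbb A}=\varphi^{-1}(r_{\rho}^{\mathbb B})$, is false whenever $\varphi$ is not injective. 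Concretely, for $\rho(x_1,x_2)\equiv(x_1=x_2)$ one gets $r_{\rho}^{\mathbb B}=\Delta_B$, so $\varphi^{-1}(r_{\rho}^{\mathbb B})=\ker(\varphi)$, whereas your $\rho'$ defines only $\Delta_A\subsetneq\ker(\varphi)$. The inclusion that breaks, $\varphi^{-1}(r_{\rho}^{\mathbb B})\subseteq r_{\rho'}^{\mathbb A}$, is exactly the one you need in the converse direction of (2): a lift $g$ of an ${\mathcal F}$-compatible assignment $\beta\colon S\to B$ need not be ${\mathcal G}$-compatible, since $\varphi(g(s_i))=\varphi(g(s_j))$ does not force $g(s_i)=g(s_j)$. (The forward direction, pushing a ${\mathcal G}$-compatible assignment down to an ${\mathcal F}$-compatible one, survives, since genuine equality in $A$ implies equality of $\varphi$-images.)

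The paper repairs this by enlarging the target language: it sets $\Gamma'=\{\varphi^{-1}(r)\mid r\in\Gamma\}\cup\{\ker(\varphi)\}$ (noting that $\ker(\varphi)$ is a subuniverse of $\langle A;F^{\mathbf A}\rangle^2$, so this stays inside $\Inv(F^{\mathbf A})$) and defines $\rho_{\Gamma'}$ by replacing every occurrence of ${=_B}$ with $\ker(\varphi)$ as well as every $r$ with $\varphi^{-1}(r)$. With that amendment your identity (b) holds in both directions and the rest of your argument goes through unchanged; without it, statement (2) fails for any ${\mathcal F}$ containing an equality atom between free variables. You should make this adjustment explicit rather than leaving it under the heading of ``unwinding existential quantifiers with care,'' since the problematic atoms need not be quantified at all.
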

\begin{proof}
We use a slight variation to the construction given in the proof of \cite[Theorem~$5.4$]{BJK}.   We begin by finding a subset $\Gamma'$ of $\Inv(F^{\bf A})$: for each $k$-ary relation $r$ in $\Inv(F^{\bf B})$, construct the following set
\[
\varphi^{\leftarrow}(r):=\{(a_1, \dots, a_k)\ |\ (\varphi(a_1), \dots, \varphi(a_k))\in r\} \subseteq A^k,
\]
and let ${\Gamma}'=\{ \varphi^{\leftarrow}(r)\ |\ r\in \Gamma\}\cup\{\ker(\varphi)\}$. Certainly ${\Gamma}'$ is finite and it is an easy exercise to check that $\ker(\varphi)$ is a subuniverse of ${\langle A; F^{\bf A}\rangle}^2$ and $\varphi^{\leftarrow}(r) \in \Inv(F^{\bf A})$, for each $r$ in $\Inv(F^{\bf B})$. 

Given an instance $I=(V;B;{\mathcal C})$ of $\CSP(\langle B; \Gamma \rangle)$, we construct an instance $I'=(V;A;{\mathcal C}')$ of $\CSP(\langle A; \Gamma' \rangle)$ by replacing each constraint $\langle s; r\rangle \in {\mathcal C}$ with $\langle s; \varphi^{\leftarrow}(r)\rangle$. Thus, ${\mathcal C}'=\{\langle s; \varphi^{\leftarrow}(r)\rangle\ |\ \langle s; r\rangle \in {\mathcal C}\}$.  

The proof for ($1$) is identical to the proof of \cite[Theorem~$5.4$]{BJK}. 
For ($2$), we begin by finding a set ${\mathcal G}$ of pp-formul{\ae} in the language of $\Gamma'$. For each $\rho(w_1, \dots, w_n)\in {\mathcal F}$ we construct a pp-formula $\rho_{\Gamma'}(w_1, \dots, w_n)$ in the language of $\Gamma'$ by replacing every occurrence of $r\in {\Gamma}$ with $\varphi^{\leftarrow}(r)\in\Gamma'$ and every occurrence of ${=_B}$ with $\ker(\varphi)$. Observe that $\rho_{\Gamma'}(a_1, \dots, a_n)$ is true in $\langle A;\Gamma'\rangle$ if and only if $\rho(\varphi(a_1), \dots, \varphi(a_n))$ is true in $\langle B;\Gamma\rangle$, for any $(a_1, \dots, a_n)\in A^n$ and $\rho(w_1, \dots, w_n)\in {\mathcal F}$.   Now assume that $I=(V;B;{\mathcal C})$ is a YES instance of $(k, {\mathcal F})\text{-}\Robust(\langle B;\Gamma\rangle)$ and consider an assignment $\alpha\colon S\to A$ with $|S|=k$. If $\alpha$ is ${\mathcal G}$ compatible for $I'$, then $\varphi\circ \alpha\colon S\to B$ is ${\mathcal F}$-compatible for $I$.  We can extend $\varphi\circ \alpha$ to a solution $\phi$ of $I$.  Then any function $f\colon V\to A$ such that $\varphi \circ f(v) = \phi(v)$, for all $v\in V$, is a solution of $I'$ that extends $\alpha$. For the converse direction, begin by assuming that $I'=(V;A;{\mathcal C}')$ is a YES instance of $(k, {\mathcal G})\text{-}\Robust(\langle A;\Gamma'\rangle)$ and consider an assignment $\beta\colon S\to B$ with $|S|=k$ that is ${\mathcal F}$-compatible for $I$. Then any $g\colon S\to A$ such that $\varphi\circ g(v)=\beta(v)$, for all $v\in V$ is ${\mathcal G}$-compatible for $I'$.  We can extend $g$ to a solution $\phi$ of $I'$.  Then $\varphi\circ \phi\colon V\to B$ is a solution of $I$ that extends $\beta$. 

To prove ($3$), assume that $I$ is a YES instance of $\SEP(\langle B; \Gamma \rangle)$. Let $\{v_1, v_2\}$ be a pair of distinct variables in $V$ and let $\phi\colon V\to B$ be a separating solution of $I$ for the pair $\{v_1,v_2\}$ guaranteed by the assumption.  Then any function $\alpha\colon V\to A$ such that $\varphi \circ \alpha(v) = \phi(v)$, for all $v\in V$, is a separating solution of $I'$ for the pair $\{v_1, v_2\}$.  Hence $I'$ is a YES instance of  $\SEP(\langle A; \Gamma' \rangle)$.   
\end{proof}

\begin{lem}\label{lem:ParSubRed}
Let $F$ be a set of finitary partial operation symbols, let ${\mathbf A}=\langle A; F^{\mathbf A}\rangle $ be a finite partial algebra and let ${\mathbf B}=\langle B; F^{\mathbf B}\rangle$ be a subalgebra of ${\mathbf A}$.  Let $\Gamma$ be a finite subset of $\Inv(F^{\mathbf B})$, let ${\mathcal F}$ a finite subset of pp-formul{\ae} in the language of $\Gamma$ and let $k\in{\mathbb N}$.  Then for some finite set $\Gamma'$ of $\Inv(F^{\mathbf A})$ there is a polynomial-time reduction from 
\begin{enumerate}[\quad \rm(1)]
\item $\CSP(\langle B; \Gamma \rangle)$ to $\CSP(\langle A; \Gamma' \rangle)$,
\item $\SEP(\langle B; \Gamma \rangle)$ to $\SEP(\langle A; \Gamma' \rangle)$, 
\item $(k, {\mathcal F})\text{-}\Robust(\langle B;\Gamma\rangle)$ to $(k, {\mathcal G})\text{-}\Robust(\langle A;\Gamma'\rangle)$, for some finite set ${\mathcal G}$ of pp-formul{\ae} in the language of $\Gamma'$. 
\end{enumerate}
\end{lem}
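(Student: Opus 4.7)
The plan is to mimic the standard substructure reduction from Bulatov--Jeavons--Krokhin~\cite{BJK}, suitably adapted to the partial-algebra setting and to the additional variants $\SEP$ and $(k, \mathcal{F})$-$\Robust$. Taking $\Gamma' := \Gamma \cup \{B\}$, where $B$ is now regarded as a unary relation on $A$, I first check that $\Gamma' \subseteq \Inv(F^{\mathbf{A}})$. Since $\dom(f^{\mathbf{B}}) = \dom(f^{\mathbf{A}}) \cap B^m$ and $f^{\mathbf{B}}$ agrees with $f^{\mathbf{A}}$ on this set, any application of $f^{\mathbf{A}}$ to a tuple from $B$ either is undefined or lies in $B$, giving invariance of the unary relation $B$. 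For the same reason every $r \in \Gamma$ (which lies in $B^n$) is also invariant under $F^{\mathbf{A}}$: applying $f^{\mathbf{A}}$ coordinate-wise to rows $a_1, \dots, a_m \in r$ reduces on each column to applying $f^{\mathbf{B}}$, which preserves $r$.

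Given an instance $I = (V; B; \mathcal{C})$ over $\Gamma$, I would define $I' = (V; A; \mathcal{C} \cup \{\langle (v), B \rangle \mid v \in V\})$ over $\Gamma'$. The unary constraints force any solution of $I'$ to map $V$ into $B$, and the remaining constraints in $\mathcal{C}$ are then satisfied in $I'$ precisely when they are satisfied in $I$. Hence the solutions of $I'$ are exactly those of $I$, which immediately yields the polynomial-time reductions in items (1) and (2), as separating solutions transfer verbatim.

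For (3), I would define $\mathcal{G}$ by first including the unary pp-formula $B(x)$ (to force $\mathcal{G}$-compatible assignments to take values in $B$), and then, for each $\rho(x_1, \dots, x_k) = \exists w_1, \dots, w_\ell\, \bigwedge_i \alpha_i$ in $\mathcal{F}$, including its modification $\rho'(x_1, \dots, x_k) := \bigwedge_i B(x_i) \wedge \exists w_1, \dots, w_\ell\, [\bigwedge_j B(w_j) \wedge \bigwedge_i \alpha_i]$. Two facts need checking: $\rho'(s_1, \dots, s_k)$ is true in $I'$ exactly when $\rho(s_1, \dots, s_k)$ is true in $I$ (because every variable of $I'$ carries the constraint $\langle (v), B \rangle$), and $\rho'(a_1, \dots, a_k)$ is true in $\langle A; \Gamma' \rangle$ exactly when each $a_i \in B$ and $\rho(a_1, \dots, a_k)$ is true in $\langle B; \Gamma \rangle$. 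Together these give a bijection between the $\mathcal{F}$-compatible $k$-assignments of $I$ and the $\mathcal{G}$-compatible $k$-assignments of $I'$, and combined with the solution correspondence above, this yields the $(k, \mathcal{F})$-$\Robust$ reduction.

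The main technical care needed is in (3): one must ensure both that the formulae in $\mathcal{G}$ constrain the existentially quantified witnesses to lie in $B$ (otherwise $\mathcal{F}$-compatibility over $\mathbf{B}$ could genuinely differ from the naive analogue over $\mathbf{A}$, since witnesses in $I'$ could be chosen outside $B$), and that the derivability of these modified formulae from $\mathcal{C}'$ in $I'$ still matches derivability of the original formulae from $\mathcal{C}$ in $I$. Both issues are handled cleanly once $B$ is included as a unary relation and the unary constraints $\langle (v), B \rangle$ are added to every variable in $I$.
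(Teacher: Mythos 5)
Your proposal is correct and follows essentially the same route as the paper: the same $\Gamma'=\Gamma\cup\{B\}$ (justified by showing $\Inv(F^{\mathbf B})\subseteq\Inv(F^{\mathbf A})$ and that $B$ is invariant), the same instance construction adding $\langle (v),B\rangle$ for every variable, and the same observation that solutions of $I$ and $I'$ coincide. The only divergence is in part (3), where the paper simply takes ${\mathcal G}={\mathcal F}\cup\{(x)\in B\}$ while you relativise each $\rho\in{\mathcal F}$ to $B$ (including the existential witnesses); your extra care is harmless and addresses a subtlety the paper leaves implicit, namely that since every relation in $\Gamma$ is contained in a power of $B$, witnesses can always be taken inside $B$, so the two choices of ${\mathcal G}$ yield the same compatible assignments.
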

\begin{proof}
We first show that $\Inv(F^{\mathbf B})\subseteq \Inv(F^{\mathbf A})$.  Let $r$ be a $k$-ary relation on the set $B$ and assume that $r$ is invariant under $F^{\mathbf B}$.  We will show that $r$ is invariant under $F^{\bf A}$.  Let $f^{\mathbf A}\in F^{\mathbf A}$ be $n$-ary, and let $a_1=(a_{11}, \dots, a_{1n}), a_2=(a_{21}, \dots, a_{2n}), \dots, a_k=(a_{k1}, \dots, a_{kn})$ be tuples in $\dom(f^{\mathbf A})\subseteq A^{n}$.  Now assume that for all $i\in \{1, \dots n\}$, we have $(a_{1i}, a_{2i}, \dots a_{ki})\in r$.  Since $r$ is a relation on $B$, we have $\{a_1, \dots, a_k \}\subseteq\dom(f^{\mathbf B})=\dom(f^{\mathbf A}) \cap B^{n}$ and since $f^{\mathbf A}=f^{\mathbf B}$ on this set, it follows that 
\[
(f^{\mathbf A}(a_1), f^{\mathbf A}(a_2), \dots, f^{\mathbf A}(a_k))=(f^{\mathbf B}(a_1), f^{\mathbf B}(a_2), \dots, f^{\mathbf B}(a_k))\in r,
\]
and hence $r\in \Inv(F^{\mathbf A})$.
Now since ${\bf B}$ is a subalgebra of ${\bf A}$, the set $B$, viewed as a unary relation on $A$, belongs to $\Inv(F^{\mathbf A})$. So let ${\Gamma}'={\Gamma}\cup\{B\}\subseteq \Inv(F^{\bf A})$. 

The following construction is borrowed from \cite[Theorem~$5.2$]{BJK} and gives a reduction from $\CSP(\langle B; \Gamma \rangle)$ to $\CSP(\langle A; \Gamma' \rangle)$, proving ($1$).  Given an instance $I=(V;B;{\mathcal C})$ of $\CSP(\langle B;\Gamma\rangle)$, we construct an instance of $I'=(V;A;{\mathcal C}')$ of $\CSP(\langle A;\Gamma'\rangle)$ in the following way.  For each $v\in V$, add the constraint $\langle (v), B\rangle $ to ${\mathcal C}$. So $\mathcal C'=\mathcal C\cup \{(v)\in B\ |\ v\in V\}$.  Note that the solutions of $I=(V;B;{\mathcal C})$ are identical to those of $I'=(V;A;{\mathcal C}')$.  It follows that the separating solutions of $I$ are precisely the separating solutions of $I'$.  Hence $I$ is a YES instance of $\SEP(\langle B; \Gamma \rangle)$ if and only if $I'$ is a YES instance of  $\SEP(\langle A; \Gamma' \rangle)$.  This proves ($2$). 

For ($3$), assume that $I$ is a YES instance of $(k, {\mathcal F})$-$\Robust(\langle B; \Gamma \rangle)$ and let ${\mathcal G}={\mathcal F}\cup \{(x)\in B\}$.  Every ${\mathcal G}$-compatible assignment on $k$ variables of $I'$ preserves ${\mathcal F}$-constraints into $B$ as $\langle (v), B\rangle $ is a constraint of $I'$, for all $v\in V$. Thus, every ${\mathcal G}$-compatible assignment on $k$ variables for $I'$ is an ${\mathcal F}$-compatible assignment for~$I$, and conversely.  The reduction from $(k, {\mathcal F})\text{-}\Robust(\Gamma)$ to $(k, {\mathcal G})\text{-}\Robust(\Gamma')$ then follows from the fact that the solutions of $I=(V;B;{\mathcal C})$ are exactly the solutions of $I'=(V;A;{\mathcal C}')$. 
\end{proof}
We are now ready to prove the first main result of this section.  For any partial algebra ${\mathbf A}$ we let $\mathsf{HS}(\bf{A})$ be the smallest class of partial algebras in the same signature closed under the formation of homomorphic images ($\mathsf{H}$) and subalgebras ($\mathsf{S}$). 
\begin{thm}[$\mathsf{HS}$ Gap Theorem]\label{thm:HSGap}
Let ${\mathbb A}=\langle A; R\rangle$ and ${\mathbb B}= \langle B; S\rangle$ be templates, let ${\mathcal F}$ be a finite set of pp-formul{\ae} in the language of $R$ and let $k\in {\mathbb N}$. If $S$ has gap property $\GAP(\N_{\CSP}, \Y_{{\SEP}\cap(k, {\mathcal F})})$ and there exist partial algebras ${\mathbf A}=\langle A; F^{\mathbf A} \rangle$ and ${\mathbf B}=\langle B; F^{\mathbf B} \rangle$ such that 
\begin{enumerate}[\quad \rm(1)]
\item $F^{\mathbf B} \subseteq \pPol(S)$,
\item ${\mathbf B}\in \mathsf{HS}(\mathbf A)$, and 
\item $\pPol(R)\subseteq F^{\mathbb A}$,
\end{enumerate} 
then $R$ has gap property $\GAP(\N_{\CSP}, \Y_{{\SEP}\cap(k, {\mathcal G})})$.
\end{thm}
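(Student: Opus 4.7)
The plan is to imitate the standard algebraic pipeline for $\CSP$ complexity transfer, but with the three-way reduction machinery (for $\CSP$, $\SEP$, and $(k,\mathcal{F})$-Robust simultaneously) that Lemmas~\ref{lem:homredn} and~\ref{lem:ParSubRed} and Theorem~\ref{thm:weaksysGAP} already supply. Since $\mathbf{B}\in\mathsf{HS}(\mathbf{A})$, there is an intermediate subalgebra $\mathbf{C}=\langle C; F^{\mathbf{C}}\rangle$ of $\mathbf{A}$ and a surjective homomorphism $\varphi\colon \mathbf{C}\twoheadrightarrow\mathbf{B}$. So I would chain together three applications: first pull $S$ back across $\varphi$ to obtain a finite $\Gamma'\subseteq\Inv(F^{\mathbf{C}})$ carrying the gap, then embed $C$ into $A$ to obtain a finite $\Gamma''\subseteq\Inv(F^{\mathbf{A}})$ carrying the gap, and finally use condition~(3) to express $\Gamma''$ inside $\langle R\rangle_{\not\exists}$ and invoke Theorem~\ref{thm:weaksysGAP}.

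In detail: by condition~(1), $S\subseteq\Inv(F^{\mathbf{B}})$, so Lemma~\ref{lem:homredn} applied to $\varphi\colon\mathbf{C}\twoheadrightarrow\mathbf{B}$ and $\Gamma=S$ produces a finite $\Gamma'\subseteq\Inv(F^{\mathbf{C}})$ together with a single polynomial-time Karp reduction that simultaneously sends $\N_{\CSP}(S)$ into $\N_{\CSP}(\Gamma')$, $\Y_{\SEP}(S)$ into $\Y_{\SEP}(\Gamma')$, and is a reduction of $(k,\mathcal{F})$-$\Robust$ to $(k,\mathcal{G}_1)$-$\Robust$ for some finite $\mathcal{G}_1$ of pp-formul\ae\ in $\Gamma'$. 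By Lemma~\ref{lem:Gap}, $\Gamma'$ inherits $\GAP(\N_{\CSP},\Y_{\SEP\cap(k,\mathcal{G}_1)})$. Next, since $\mathbf{C}$ is a subalgebra of $\mathbf{A}$, Lemma~\ref{lem:ParSubRed} produces a finite $\Gamma''\subseteq\Inv(F^{\mathbf{A}})$ (namely $\Gamma'\cup\{C\}$) with the same threefold reduction property, so another application of Lemma~\ref{lem:Gap} gives $\GAP(\N_{\CSP},\Y_{\SEP\cap(k,\mathcal{G}_2)})$ for $\Gamma''$ and some finite $\mathcal{G}_2$. Finally, condition~(3) yields $\Inv(F^{\mathbf{A}})\subseteq\Inv(\pPol(R))=\langle R\rangle_{\not\exists}$, so $\Gamma''\subseteq\langle R\rangle_{\not\exists}$. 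Theorem~\ref{thm:weaksysGAP} then transfers the gap from $\Gamma''$ to $R$, giving the desired $\GAP(\N_{\CSP},\Y_{\SEP\cap(k,\mathcal{G})})$ for some finite $\mathcal{G}$ of pp-formul\ae\ in $R$.

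The main subtlety, rather than outright obstacle, is bookkeeping at the level of local compatibility: at each of the three reduction steps the finite set $\mathcal{F}$ is replaced by a syntactically different finite set of pp-formul\ae\ (obtained respectively by substituting $\ker(\varphi)$ for equality and $\varphi^{\leftarrow}(r)$ for each $r\in\Gamma$, by adjoining the unary constraint $(x)\in C$, and by unfolding equality-free conjunct-atomic definitions using Theorem~\ref{thm:weaksysGAP}). One must verify at each stage that this replacement lands in the language of the new constraint set and that a solution/separating solution/$\mathcal{F}$-compatible partial assignment downstairs lifts to (or descends from) one upstairs, so that Lemma~\ref{lem:Gap} is legitimately applicable. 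Both Lemma~\ref{lem:homredn} and Lemma~\ref{lem:ParSubRed} were deliberately stated in a form that packages all three reductions into one uniform construction, so the composition goes through without extra effort provided one is careful to track $\mathcal{F}\mapsto\mathcal{G}_1\mapsto\mathcal{G}_2\mapsto\mathcal{G}$ at each step.
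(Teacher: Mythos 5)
Your proposal is correct and follows essentially the same route as the paper's proof: decompose $\mathbf{B}\in\mathsf{HS}(\mathbf{A})$ as a homomorphic image of a subalgebra $\mathbf{C}\leq\mathbf{A}$, chain Lemma~\ref{lem:homredn}, Lemma~\ref{lem:ParSubRed} and Theorem~\ref{thm:weaksysGAP} in that order, and use Lemma~\ref{lem:Gap} to carry the gap through each reduction while tracking the successive replacements of $\mathcal{F}$. The bookkeeping concerns you flag are exactly the ones the paper's argument handles, so no further work is needed.
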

%Changed \mathcal{F} to \mathcal{G}
\begin{proof}
The result is established by carrying the gap property through items ($1$), ($2$) and ($3$). This requires showing there is a series of polynomial-time reductions preserving NO instances of $\CSP$ and instances that are simultaneously YES instances of $\SEP$ and $(k,{\mathcal F})$-$\Robust$.  

From ($1$), we have $\Inv(\pPol(S)) \subseteq \Inv(F^{\mathbf B})$ and so $S\subseteq \langle S \rangle_{\not\exists} \subseteq \Inv(F^{\mathbf B})$. From~($2$), there is a surjective homomorphism $\varphi\colon {\mathbf C} \twoheadrightarrow {\mathbf B}$, where ${\mathbf C}=\langle C; F^{\bf C}\rangle$ is a subalgebra of ${\mathbf A}$.  Thus, by Lemma~\ref{lem:homredn}, there is a finite subset $T$ of $\Inv(F^{\bf C})$ and a polynomial-time reduction taking instances in $\N_{\CSP}(S)$ to $\N_{\CSP}(T)$, and instances in $\Y_{{\SEP}\cap(k,{\mathcal F})}(S)$ to $\Y_{{\SEP}\cap(k,{\mathcal F}')}(T)$, for some finite set ${\mathcal F}'$ of pp-formul{\ae} in the language of $T$. Hence $T$ satisfies $\GAP(\N_{\CSP}, \Y_{{\SEP}\cap(k,{\mathcal F'})})$.  Now since ${\mathbf C}$ is a subalgebra of ${\mathbf A}$, by Lemma~\ref{lem:ParSubRed}, there is a finite set $U$ of $\Inv(F^{{\mathbf A}})$ and a polynomial-time reduction taking instances in $\N_{\CSP}(T)$ to $\N_{\CSP}(U)$, and instances in $\Y_{{\SEP}\cap(k,{\mathcal F'})}(T)$ to $\Y_{{\SEP}\cap(k,{\mathcal G}')}(U)$, for some finite set ${\mathcal G}'$ of pp-formul{\ae} in the language of $U$. Hence $U$ satisfies $\GAP(\N_{\CSP}, \Y_{{\SEP}\cap(k,{\mathcal G'})})$. From ($3$), it follows that $\Inv(F^{\mathbf A})\subseteq \Inv(\pPol(R^{\mathbb A}))=\langle R \rangle_{\not\exists}$. Thus, $U$ is a finite subset of $\langle R\rangle_{\not\exists}$.  By Theorem~\ref{thm:weaksysGAP}, it then follows that $R$ has gap property $\GAP(\N_{\CSP}, \Y_{{\SEP}\cap(k,{\mathcal G})})$, for some finite set ${\mathcal G}$ of pp-formul{\ae} in the language of $R$. 
\end{proof}

\begin{remark}
When ${\bf A}\neq{\bf B}$, Theorem~\ref{thm:HSGap} lifts the complete classification given by the Gap Trichotomy Theorem on Boolean domains to many problems on templates with non-Boolean domains. With further effort, direct products ($\mathsf{P}$) can be incorporated into condition ($2$) of Theorem~\ref{thm:HSGap}, however the argument is beyond the scope of the present article, and will appear in subsequent work.  For now we establish a weaker version. 
\end{remark}

We first introduce the following equivalence relations.
\begin{defn}
Let $A$ be a non-empty set and $\ell\in\mathbb{N}$. For any pair of elements $a_1=(a_{1,0}, \dots, a_{1,{\ell-1}}), a_2=(a_{2,0},\dots,  a_{2,{\ell-1}})\}$ in $A^{\ell}$ and for every $p,q\in\{0,\dots,{\ell-1}\}$, we define
\[
a_1\equiv_{pq} a_2\ \text{if}\ a_{1,p}=a_{2,q},
\]
%and for every $p,q\in \{0,\dots, {\ell-1}\}$ with $p\not=q$, we use $\Delta_{pq}$ to denote the subset of $A^l$ given by 
%\[
%\{(a_0, \dots ,a_{\ell-1})\ |\ a_p=a_q\}.
%\]
\end{defn}
For any $\ell \in {\mathbb N}$, there are $\ell^2$ equivalence relations of the form $\equiv_{pq}$, for $p,q\in \{0,\dots,{\ell-1}\}$.
%, and $\ell \choose {2}$ subsets of $A^{\ell}$ of the form $\Delta_{pq}$, for $p,q\in \{0,\dots,{\ell-1}\}$ and $p\not=q$. 
%
%$\{a_1=(a_{10}, a_{11}), a_2=(a_{20}, a_{21})\}$ in $A^2$ and for each $i,j\in\{0,1\}$, we define
%\[
%a_1\equiv_{ij} a_2\ \text{if}\ a_{1i}=a_{2j}.
%\]
%\end{defn}

%\begin{align*}
%&a_1\equiv_{01} a_2\ \text{if}\ a_{10}=a_{21},\\
%&a_1\equiv_{10} a_2\ \text{if}\ a_{11}=a_{20},\\
%&a_1\equiv_{00} a_2\ \text{if}\ a_{10}=a_{20},\\
%&a_1\equiv_{11} a_2\ \text{if}\ a_{11}=a_{21}.
%\end{align*} 

The basis for the following construction was inspired by \cite[Lemma~$2.4$]{LTess}, however the reduction requires significant modification with the main addition being a pre-processing ``local reflection'' procedure.
\begin{lem}\label{lem:PowerRed}
Let $F$ be a set of finitary partial operation symbols, let ${\mathbf A}=\langle A; F^{\mathbf A}\rangle$ be a finite partial algebra and consider ${\mathbf A}^{\ell}=\langle A^{\ell}; F^{{\mathbf A}^{\ell}}\rangle$. Let $\Gamma$ be a finite subset of $\Inv(F^{{\mathbf A}^{\ell}})$ and let ${\mathcal F}$ be a finite set of pp-formul{\ae} in the language of $\Gamma$.  Then for some finite set $\Gamma'\subseteq \Inv(F^{\mathbf A})$ there is a polynomial-time function taking
\begin{enumerate}[\quad \rm(1)]
\item $\N_{\CSP}(\Gamma)$ to $\N_{\CSP}(\Gamma')$, 
\item and $\Y_{(2, {\mathcal F})}(\Gamma)$ to $\Y_{\SEP}(\Gamma')$.
\end{enumerate}
\end{lem}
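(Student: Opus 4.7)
The plan is to let $\Gamma'$ be the coordinate-wise unrolling of $\Gamma$ and to transform each instance $I$ in two stages: a pre-processing ``local reflection'' step that identifies certain positions of variables, followed by an unrolling of constraints. For each $k$-ary $r\in\Gamma$, let $\tilde r$ denote the $k\ell$-ary relation on $A$ whose tuples are the concatenations of the coordinates of the $A^\ell$-tuples of $r$. Because $\mathbf{A}^\ell$ is the coordinate-wise power of $\mathbf{A}$, a direct check gives $\tilde r\in\Inv(F^{\mathbf A})$, so set $\Gamma':=\{\tilde r\mid r\in\Gamma\}$.

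For the local reflection, given $I=(V;A^\ell;\mathcal C)$ I would define a binary relation $\sim_1$ on $V\times\{0,\dots,\ell-1\}$ by $(u,p)\sim_1(v,q)$ iff every $\mathcal F$-compatible assignment $\alpha\colon\{u,v\}\to A^\ell$ satisfies $\alpha(u)_p=\alpha(v)_q$; equivalently, using the relations $\equiv_{pq}$ introduced above, the set of $\mathcal F$-compatible assignments on $\{u,v\}$ lies in $\equiv_{pq}$ as a subset of $(A^\ell)^2$. Let $\sim$ be the equivalence closure of $\sim_1$. Because $|A|$ and $\ell$ are constants and $\mathcal C_{\mathcal F}$ is polynomial-time computable from $I$, so is $\sim$. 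The unrolled instance is $I'=(V';A;\mathcal C')$ with $V':=(V\times\{0,\dots,\ell-1\})/{\sim}$, and each $\langle(v_1,\dots,v_k),r\rangle\in\mathcal C$ is replaced by $\langle([v_1,0],\dots,[v_k,\ell-1]),\tilde r\rangle\in\mathcal C'$.

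Part~(1) then follows from a solution bijection: any $\phi'$ for $I'$ lifts to $\phi(v):=(\phi'([v,0]),\dots,\phi'([v,\ell-1]))$, which solves $I$ by the definition of $\tilde r$, and conversely every solution of $I$ descends to $V'$ because its restriction to any pair of variables is $\mathcal F$-compatible and hence respects $\sim_1$, and so $\sim$ by transitivity. For part~(2), assume $I\in\Y_{(2,{\mathcal F})}$ and fix distinct classes $[u,p]\neq[v,q]$ in $V'$. Then $(u,p)\not\sim_1(v,q)$, so some $\mathcal F$-compatible $\alpha\colon\{u,v\}\to A^\ell$ has $\alpha(u)_p\neq\alpha(v)_q$; $(2,\mathcal F)$-robustness extends $\alpha$ to a solution $\phi$ of $I$ whose descent $\phi'$ separates $[u,p]$ from $[v,q]$ in $I'$.

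The main difficulty is designing the local reflection so that $\sim$ is simultaneously coarse enough and fine enough. It must be coarse enough that the descent of a solution of $I$ to $V'$ is well-defined, which is enforced by the $\mathcal F$-compatibility of the restrictions $\phi\rest{\{u,v\}}$ forcing the equality $\phi(u)_p=\phi(v)_q$ whenever $(u,p)\sim_1(v,q)$; and it must be fine enough that distinct classes in $V'$ always admit a witnessing $\mathcal F$-compatible pair assignment of inequality, which is built directly into the definition of $\sim_1$. The relations $\equiv_{pq}$ supply the exact bookkeeping needed to capture this cross-coordinate positional information between pairs of variables.
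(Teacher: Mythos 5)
Your construction is essentially the one in the paper (the paper adds explicit $=_A$ constraints and then quotients; you quotient directly by the equivalence closure of $\sim_1$ --- same thing), and your off-diagonal case and your treatment of part~(1) are fine. But there is a genuine gap in the diagonal case of part~(2). When $u=v$ and $p\neq q$, your $\sim_1$ is tested against $\mathcal F$-compatible assignments on the \emph{one-element} set $\{u\}$, so the witness you extract for $(u,p)\not\sim_1(u,q)$ is a single-variable assignment. $(2,\mathcal F)$-robust satisfiability only promises extensions of $\mathcal F$-compatible assignments on \emph{two} variables, and a $1$-variable compatible assignment need not extend to a $2$-variable compatible one, let alone to a solution. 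Concretely: take $A=\{0,1\}$, $\ell=2$, $V=\{u,w\}$, a single constraint $\langle(u,w),r\rangle$ with $r=\{((0,0),(0,0)),((1,1),(1,1))\}$, and $\mathcal F=\{r(x,y)\}$. Every $\mathcal F$-compatible assignment on $\{u,w\}$ is already a solution, so $I\in\Y_{(2,\mathcal F)}$; yet $u\mapsto(0,1)$ is vacuously $\mathcal F$-compatible on $\{u\}$ (no $\mathcal F$-constraint lies on the tuple $(u,u)$), so your $\sim_1$ keeps $[u,0]\neq[u,1]$, while every solution satisfies $\phi(u)_0=\phi(u)_1$. Hence your $I'$ is not in $\Y_{\SEP}(\Gamma')$ and claim~(2) fails for this definition of $\sim_1$.

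The fix is exactly the extra wrinkle in the paper's ``local reflection'': for the diagonal test at a variable $v_i$, choose an auxiliary variable $v_{i'}$ with $i'\neq i$ and identify $(v_i,p)$ with $(v_i,q)$ whenever every $\mathcal F$-compatible assignment on the two-element set $\{v_i,v_{i'}\}$ maps $(v_i,v_i)$ into $\equiv_{pq}$. Then a failure of identification yields a genuine two-variable $\mathcal F$-compatible witness, which $(2,\mathcal F)$-robustness does extend to a separating solution, and the well-definedness of the descent in part~(1) is unaffected (restrictions of solutions to two-element sets are $\mathcal F$-compatible). With that one change --- and disposing of the degenerate case $|V|=1$ --- your argument goes through and coincides with the paper's.
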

\begin{proof}
We begin by fixing the subset $\Gamma'$ of $\Inv(F^{\mathbf A})$.    A $k$-ary relation $r$ on $A^{\ell}$ can be considered as a ${\ell}k$-ary relation $r^{\flat}$ on $A$, by replacing each $k$-tuple of $\ell$-tuples 
\[
((a_{1,0}, \dots, a_{1,{\ell-1}}),\dots,(a_{k,0},\dots, a_{k,{\ell-1}}))
\]
by the ${\ell}k$-ary tuple
\[
(a_{1,0},\dots, a_{1,{\ell-1}},\dots ,a_{k,0}, \dots, a_{k,{\ell-1}}).
\]
In particular, if ${\bf A}=\langle A;F^{\bf A}\rangle$ is a partial algebra and $r$ is a subuniverse of the direct power $({\bf A}^{\ell})^k$, then $r^\flat$ is the corresponding subuniverse of $\mathbf{A}^{{\ell}k}$. Hence $r^\flat\in \Inv(F^{\mathbf A})$ and so let $\Gamma'=\{r^{\flat}\ |\ r\in\Gamma \}$. 

The reduction now proceeds in two steps. Given an instance $I=(V;A^{\ell};{\mathcal C})$ of $\CSP(\Gamma)$ where $V=\{v_i\ |\ i\leq|V|\}$, we begin by constructing an instance $I^{\flat}=(V^{\flat};A;{\mathcal C}^{\flat})$ of $\CSP(\Gamma'\cup\{=_{A}\})$ in the following way. 
\begin{enumerate}[\rm(i)]
\item Let $V^{\flat}=\{v_{i,p} \ |\ v_i\in V\ \text{and}\ p\in \{0,\dots,{\ell-1}\}\}$, and 
\item replace each constraint $\langle (v_1, \dots, v_k), r\rangle$ in ${\mathcal C}$ with 
\[
\langle (v_{1,0}, \dots, v_{1,{\ell-1}} \dots, v_{k,0},\dots ,v_{k,{\ell-1}}), r^{\flat}\rangle. 
\]
\end{enumerate}
At this point in the construction, we denote the instance by $I_0$.  To complete the construction of $I^{\flat}$, we apply the following ``local reflection'' procedure to $I$.  For every $i\leq |V|$, select some number $i'\neq i$ with $i'\leq |V|$.
For every $v_i \in V$ and $p,q\in \{0,\dots, {\ell-1}\}$ with $p\not=q$, 
\begin{itemize}
\item  if every ${\mathcal F}$-compatible assignment $\alpha\colon \{v_i,v_{i'}\}\to A^{\ell}$ maps 
maps $(v_i,v_i)$ into ${\equiv_{pq}}$, we add the constraint $\langle (v_{i,p},v_{i,q}), =_{A}\rangle$, 
%$v_i$ into $\Delta_{pq}$, 
\end{itemize}
and for every distinct pair $\{v_i, v_j\}$ in $V$ and $p,q\in \{0,\dots, {\ell-1}\}$ , 
\begin{itemize}
\item if every ${\mathcal F}$-compatible assignment $\alpha\colon \{v_i,v_j\}\to A^{\ell}$ maps $(v_i,v_j)$ into ${\equiv_{pq}}$, we add $\langle (v_{i,p},v_{j,q}), =_{A}\rangle$.
\end{itemize}
%\item if every $\alpha\colon \{v_i,v_j\}\to A^2\setminus{\equiv_{11}}$ is not ${\mathcal F}$-compatible, we add $\langle (v_{i1},v_{j1}), =_{A}~\rangle$, 
%\item  if every $\alpha\colon \{v_i, v_j\}\to A^2\setminus{\equiv_{10}}$ is not ${\mathcal F}$-compatible, we add $\langle (v_{i1},v_{j0}), =_{A}~\rangle$, 
%\item if every $\alpha\colon \{v_i,v_{j}\}\to A^2\setminus{\equiv_{01}}$ is not ${\mathcal F}$-compatible, we add $\langle (v_{i0},v_{j1}), =_{A}~\rangle$. 

We now construct an instance $I^{\dflat}=(V^{\dflat};A;{\mathcal C}^{\dflat})$ of $\CSP(\Gamma')$, by eliminating all constraints involving $=_{A}$ in the usual way: fix some arbitrary total order $<$ on $V^{\flat}$ and let $\equiv_A$ denote the reflexive transitive closure of the relation ${=}_A$ on $V^\flat$.  For any $v_{i,p}$, let $\sigma({v_{i,p}})$ denote the $<$-smallest element of the ${\equiv}_{A}$-class of $v_{i,p}$.  Then $I^{\dflat}$ is simply $I^\flat$, but with each appearance of $v_{i,p}\in V^{\flat}$ replaced by $\sigma({v_{i,p}})$ (and all ${=}_A$ constraints removed). 

For ($1$), we prove the contrapositive. If $\varphi$ is a solution of $I^{\dflat}$ in $\CSP(\Gamma')$, then $\varphi\circ\sigma\colon V^{\flat}\to A$ is a solution of $I^{\flat}$ and therefore is also a solution of $I_0$. The map $\psi\colon V\to A^{\ell}$ defined by $\psi(v_i)=(\varphi(\sigma(v_{i,0})),\dots, \varphi(\sigma(v_{i,{\ell-1}})))$, for all $v_i\in V$, is a solution of $I$ in $\CSP(\Gamma)$.
 
For ($2$), assume that $I$ is a YES instance of $(2, {\mathcal F})$-$\Robust(\Gamma)$. We will show that $I^{\dflat}$ is a YES instance of $\SEP(\Gamma')$.  Let $\{v_{i,p}, v_{j,q}\}$ be a distinct pair of variables in $V^{\dflat}$. We consider two cases. 

\emph{Case~$1$}. If $i=j$ and $p\not=q$. Then there was some ${\mathcal F}$-compatible assignment $\alpha\colon \{v_i,v_{i'}\}\to A^{\ell}$ mapping $(v_i,v_i)$ into $A^{\ell}\setminus \equiv_{pq}$.  Hence $\pi_{p}(\alpha(v_i))\not=\pi_{q}(\alpha(v_i))$. We can extend $\alpha$ to a full solution $\varphi$ of $I$ in $\CSP(\Gamma)$ as we are assuming that $I$ is $(2, {\mathcal F})$-robustly satisfiable.  Then the map $\varphi^{\dflat}\colon V^{\dflat}\to A$ defined by $\varphi^{\dflat}(v_{st})=\pi_{t}(\varphi(v_s))$, for all $s\leq |V|$ and $t\in \{0,\dots, {\ell-1}\}$, is a solution of $I^{\dflat}$ that separates the pair $\{v_{i,p}, v_{i,q}\}$. 

\emph{Case~$2$}. If $i\not=j$ and $p,q\in \{0,\dots, {\ell-1}\}$.  Since $v_{i,p}$ and  $v_{j,q}$ were not identified by the above construction, there must have been some ${\mathcal F}$-compatible assignment $\alpha\colon \{v_{i},v_{j}\}\to A^{\ell}$ mapping $(v_i,v_j)$ into $A^{\ell}\setminus{\equiv_{pq}}$. Hence $\pi_{p}(\alpha(v_i))\not=\pi_{q}(\alpha(v_j))$. We can extend $\alpha$ to a solution $\varphi$ of $I$ in $\CSP(\Gamma)$.  Then the map $\varphi^{\dflat}\colon V^{\dflat}\to A$ defined by $\varphi^{\dflat}(v_{st})=\pi_{t}(\varphi(v_s))$, for all $s\leq |V|$ and $t\in \{0,\dots, {\ell}\}$, is a solution of $I^{\dflat}$ that separates the pair $\{v_{i,p}, v_{j,q}\}$.  
It follows from Case~$1$ and $2$ that $I^{\dflat}$ is a YES instance of $\SEP(\Gamma')$.
\end{proof}

%\begin{lem}\label{lem:highpowerRed}
%Let $F$ be a set of finitary partial operation symbols, let ${\mathbf A}=\langle A; F^{\mathbf A}\rangle$ be a finite partial algebra and consider ${\mathbf A}^{\ell}=\langle A; F^{{\mathbf A}^{\ell}}\rangle$, where $l\in {\mathbb N}$. Let $\Gamma$ be a finite subset of $\Inv(F^{{\mathbf A}^{\ell}})$ and let ${\mathcal F}$ be a finite set of pp-formul{\ae} in the language of $\Gamma$.  Then for some finite set $\Gamma'\subseteq \Inv(F^{\mathbf A})$ there is a polynomial-time function taking
%\begin{itemize}
%\item $\N_{\CSP}(\Gamma)$ to $\N_{\CSP}(\Gamma')$, and 
%\item $\Y_{(2, {\mathcal F})}(\Gamma)$ to $\Y_{\SEP}(\Gamma')$, for some finite set ${\mathcal G}$ of pp-formul{\ae} in the language of $\Gamma'$. 
%\end{itemize}
%\end{lem}
%\begin{proof}
%
%\end{proof}
%\begin{proof}
%Find the smallest natural number $n$ for which $\ell \leq 2^n$.  We may consider ${\mathbf A}^{\ell}$ as a subalgebra of ${\mathbf A}^{2^n}$ by repeating the last coordinate of every tuple in $A^{\ell}$ a total of ${2^n}-l$ times.  By Lemma~\ref{lem:ParSubRed}, there is a finite subset $R$ of $\Inv(F^{{\mathbf A}^{2^n}})$ and a polynomial-time reduction taking $\N_{\CSP}(\Gamma)$ to $\N_{\CSP}(R)$, and $\Y_{(2, {\mathcal F})}(\Gamma)$ to $\Y_{\SEP}(R)$, for some finite set ${\mathcal G}$ of pp-formul{\ae} in the language of $R$.  Composing this reduction with repeated applications of Lemma~\ref{lem:PowerRed} then gives the final result. 
%\end{proof}

\begin{thm}[$\mathsf{HSP}$ Gap Theorem]\label{thm:HSPGap}
Let ${\mathbb A}=\langle A; R \rangle$ and ${\mathbb B}= \langle B; S \rangle$ be templates and let ${\mathcal F}$ be a finite set of pp-formul{\ae} in the language of $R$. If $S$ has the gap property $\GAP(\N_{\CSP}, \Y_{{\SEP}\cap(2, {\mathcal F})})$ and there exist partial algebras ${\mathbf A}=\langle A; F^{\mathbf A} \rangle$ and ${\mathbf B}=\langle B; F^{\mathbf B} \rangle$ such that 
\begin{enumerate}[\quad \rm(1)]
\item $F^{\mathbf B} \subseteq \pPol(S)$,
\item ${\mathbf B}\in \mathsf{HSP}(\mathbf A)$, and 
\item $\pPol(R)\subseteq F^{\mathbb A}$,
\end{enumerate} 
then $R$ satisfies $\GAP(\N_{\CSP}, \Y_{\SEP})$.
\end{thm}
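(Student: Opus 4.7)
The plan is to mirror the proof of the $\mathsf{HS}$ Gap Theorem (Theorem~\ref{thm:HSGap}), inserting the new power reduction (Lemma~\ref{lem:PowerRed}) at the appropriate place to absorb the product operator $\mathsf{P}$. Since ${\mathbf B}\in\mathsf{HSP}({\mathbf A})$, there exist $\ell\in{\mathbb N}$, a subalgebra ${\mathbf C}=\langle C; F^{\mathbf C}\rangle$ of ${\mathbf A}^{\ell}$, and a surjective homomorphism $\varphi\colon {\mathbf C}\twoheadrightarrow {\mathbf B}$. As in the $\mathsf{HS}$ case, item~(1) yields $S\subseteq\langle S\rangle_{\not\exists}\subseteq \Inv(F^{\mathbf B})$, and item~(3) yields $\Inv(F^{\mathbf A})\subseteq \Inv(\pPol(R))=\langle R\rangle_{\not\exists}$.

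I would then carry the gap property through the following chain of polynomial-time reductions. By Lemma~\ref{lem:homredn}, applied to $\varphi$, there is a finite set $T\subseteq \Inv(F^{\mathbf C})$ and a reduction of $S$ to $T$ which takes $\N_{\CSP}(S)$ to $\N_{\CSP}(T)$ and preserves simultaneous membership in $\Y_{\SEP}$ and $\Y_{(2,{\mathcal F}')}$ for an appropriate ${\mathcal F}'$; hence $T$ has $\GAP(\N_{\CSP}, \Y_{{\SEP}\cap(2,{\mathcal F}')})$. By Lemma~\ref{lem:ParSubRed}, applied to the inclusion ${\mathbf C}\hookrightarrow{\mathbf A}^{\ell}$, there is a finite set $U\subseteq\Inv(F^{{\mathbf A}^{\ell}})$ and a reduction of $T$ to $U$ which again preserves both $\N_{\CSP}$ and $\Y_{{\SEP}\cap(2,{\mathcal F}'')}$, so $U$ has $\GAP(\N_{\CSP}, \Y_{{\SEP}\cap(2,{\mathcal F}'')})$. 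Now I invoke Lemma~\ref{lem:PowerRed}: it furnishes a finite set $U'\subseteq \Inv(F^{\mathbf A})$ together with a polynomial-time function that sends $\N_{\CSP}(U)$ to $\N_{\CSP}(U')$ and, crucially, sends $\Y_{(2,{\mathcal F}'')}(U)$ to $\Y_{\SEP}(U')$. Combining this with the previous step shows $U'$ has $\GAP(\N_{\CSP},\Y_{\SEP})$. Since $U'\subseteq \Inv(F^{\mathbf A})\subseteq \langle R\rangle_{\not\exists}$, Theorem~\ref{thm:weaksysGAP} then transfers this gap from $U'$ to $R$, yielding $\GAP(\N_{\CSP},\Y_{\SEP})$ as required.

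The crucial point, and the reason the conclusion is weaker than in the $\mathsf{HS}$ case, is the behaviour of Lemma~\ref{lem:PowerRed}: it consumes the $(2,\mathcal{F})$-robustness hypothesis (using it in the local-reflection step to exhibit separating solutions) and outputs only a $\SEP$-guarantee. Consequently any $(k,{\mathcal G})$-robustness is lost once we pass from ${\mathbf A}^{\ell}$ down to ${\mathbf A}$, and this is why the theorem records only $\GAP(\N_{\CSP},\Y_{\SEP})$. A secondary point I would verify is that Theorem~\ref{thm:weaksysGAP} genuinely transports the pure $\GAP(\N_{\CSP},\Y_{\SEP})$ property, not just the $\GAP(\N_{\CSP},\Y_{{\SEP}\cap(k,{\mathcal F})})$ statement made explicit there; but this is immediate from the two-case construction in its proof, since in Case~1 the output is a fixed $\CSP$-NO instance $J$ (also $\SEP$-NO), and in Case~2 the solutions of the input and output instances coincide, so $\Y_{\SEP}$ is preserved without reference to any robustness parameter. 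With this observation the reductions compose and the theorem follows.
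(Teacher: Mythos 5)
Your proposal is correct and follows essentially the same route as the paper: run the $\mathsf{HS}$ argument (Lemmas~\ref{lem:homredn} and~\ref{lem:ParSubRed}) with ${\mathbf C}\leq{\mathbf A}^{\ell}$, insert Lemma~\ref{lem:PowerRed} to descend from ${\mathbf A}^{\ell}$ to ${\mathbf A}$ (which is exactly where the robustness guarantee is traded for a bare $\SEP$ guarantee), and finish via condition~(3) and the two-case argument in the \emph{proof} of Theorem~\ref{thm:weaksysGAP} rather than its statement. Your ``secondary point'' is precisely the subtlety the paper flags with the parenthetical ``(in contrast to Theorem~\ref{thm:weaksysGAP} directly)'', and your resolution of it is the intended one.
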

\begin{proof} 
The argument follows the proof of Theorem~\ref{thm:HSGap} up to and including the application of Lemma~\ref{lem:ParSubRed}, except now ${\mathbf C}$ is a subalgebra of ${\mathbf A}^{\ell}$, for some $\ell\in {\mathbb N}$.  We then apply Lemma~\ref{lem:PowerRed} before continuing on to the application of condition ($3$). The conclusion that $R$ has gap property $\GAP(\N_{\CSP}, \Y_{\SEP})$ is then established from the proof of Theorem~\ref{thm:weaksysGAP} (in contrast to Theorem~\ref{thm:weaksysGAP} directly).  
\end{proof}

We now observe that $\mathsf{HS}$ theorems can be obtained for other variants of the constraint satisfaction problem. The following \emph{equivalence problem} $\EQUIV$ and \emph{implication problem} $\IMPL$ have been studied in B{\"{o}}hler, Hemaspaandra, Reith and Vollmer~\cite{Bohl} and \cite{Schnoor2} respectively.  
%We now observe that $\mathsf{HS}$ theorems can be obtained for other variants of the constraint satisfaction problem.  The following \emph{equivalence problem} has been studied in  B{\"{o}}hler, Hemaspaandra, Reith and Vollmer~\cite{Bohl} and \cite{Schnoor2}, and the \emph{implication problem} has been studied in \cite{Schnoor2}. 
\medskip

\noindent\fbox{
\parbox{0.95\textwidth}{\emph{Equivalence problem $\EQUIV({\mathbb A})$ over template ${\mathbb A}=\langle A; \Gamma^{\mathbb A}\rangle$.}\\
Instance: a pair of $\Gamma^{\mathbb A}$-instances ($I_1=(V;A;{\mathcal C}_1), I_2=(V;A;{\mathcal C}_2)$).\\
Question: for every assignment $\varphi$, is it true that $\varphi$ is a solution of $I_1$ if and only if $\varphi$ is a solution of $I_2$?
}
}
\\[1ex]
\noindent\fbox{
\parbox{0.95\textwidth}{\emph{Implication problem $\IMPL({\mathbb A})$ over template ${\mathbb A}=\langle A; \Gamma^{\mathbb A}\rangle$.}\\
Instance: a pair of $\Gamma^{\mathbb A}$-instances ($I_1=(V;A;{\mathcal C}_1), I_2=(V;A;{\mathcal C}_2)$).\\
Question: for every assignment $\varphi$, is it true that if $\varphi$ is a solution of $I_1$ then $\varphi$ is a solution of $I_2$?
}
}
\medskip

As observed in \cite[\S$6.2$]{Schnoor2}, the two problems are closely related, in fact they are equivalent under polynomial-time Turing reductions: a pair $(I_1,I_2)$ is a YES instance of $\IMPL({\mathbb A})$ if and only if $(I_1,(V;A;{\mathcal C}_1\cup \mathcal{C}_2))$ is a YES instance of $\EQUIV({\mathbb A})$, while $(I_1,I_2)$ is a YES instance of $\EQUIV({\mathbb A})$ if and only if both $(I_1,I_2)$ and $(I_2,I_1)$ are YES instances of $\IMPL({\mathbb A})$.
%there is a finite set $\S^{\mathbb A}\subseteq \Inv(F^{\mathbf A})\rangle$ and polynomial-time reduction taking $\N_{\CSP}(\Gamma^{{\mathbb A}^{\ell}})$ to $\N_{\CSP}(S^{\mathbb A})$, and  $\Y_{(2, {\mathcal G'})}(\Gamma^{{\mathbb A}^{\ell}})$ to $\Y_{\SEP}(S^{\mathbb A})$, for some finite set ${\mathcal H}$ of pp-formul{\ae} in the language of $S$. 

In the case of Boolean domains, both $\EQUIV$ and $\IMPL$ have been shown to experience $\mathbf{P}$ versus Co$\mathbf{NP}$ dichotomies with identical boundaries to those established for $\SEP$ and $(2, {\mathcal F})$-$\Robust$; see \cite{Bohl} for the original proof and \cite{Schnoor2} for a simplified proof using clone-theoretic techniques. The $\mathsf{HS}$ theorem for $\EQUIV$ and $\IMPL$ (Theorem~\ref{thm:HSEQUIV} below) lifts the complete classifications given by these dichotomies on Boolean domains to templates with non-Boolean domains.  This is particularly useful for $\EQUIV$ and $\IMPL$ as it is unknown if the operator $\langle -\rangle$ can be applied in the case of non-Boolean domains.  Even the operator $\langle -\rangle_{\not\exists}$ is unknown to be compatible with $\EQUIV$ in the non-Boolean case. The weaker operator $\langle -\rangle_{\not\exists,\not=}$ is known to be applicable (see Lemma~\ref{lem:impred} below), but this imposes the additional assumption of irredundancy. 

The proof of the $\mathsf{HS}$ theorem for $\EQUIV$ and $\IMPL$ relies on the following two results. It may be useful to recall the notion of irrendundancy given in Definition~\ref{defn:irrednt}.  We note that the use of gap properties is no longer required carry through the reductions.

\begin{lem}\cite[Proposition~3.11]{ISch}\label{lem:irred}
Let $R$ be a constraint language over a non-empty finite set $A$ and let $S$ be a finite subset of  $\langle R\rangle_{\not\exists}$. If $S$ is irredundant, then $S\subseteq \langle R\rangle_{\not\exists, \not=}$. 
\end{lem}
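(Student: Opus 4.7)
My plan is to work relation by relation: fix an arbitrary $s \in S$ of some arity $k$, and exhibit an equality-free conjunct-atomic definition of $s$ over $R$.  Since $s \in \langle R\rangle_{\not\exists}$, by definition there is a conjunct-atomic formula
\[
\phi(x_1,\dots,x_k) \;=\; \bigwedge_{j=1}^{m} \alpha_j
\]
in the language of $R$ defining $s$, where each atom $\alpha_j$ is either an equality $x_p = x_q$ or an $R$-constraint.  The whole task reduces to eliminating the equality atoms from $\phi$.

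First, any trivial equality atom $x_p = x_p$ is a tautology and can be deleted from the conjunction without altering the relation defined.  The crux is then to show that no non-trivial equality atom $x_p = x_q$ with $p \neq q$ can appear in $\phi$.  I would argue by contradiction: if such an atom is a conjunct of $\phi$, then every tuple $(a_1,\dots,a_k) \in s$ satisfies $a_p = a_q$, so the $p$-th and $q$-th columns of the matrix representation of $s$ are identical.  This violates the $=$-irredundancy of $s$ (hence of $S$), a contradiction.  Thus no non-trivial equality atom is present, and after deleting trivial ones we are left with a conjunction of $R$-constraints, i.e.\ an equality-free conjunct-atomic formula defining $s$.  This gives $s \in \langle R\rangle_{\not\exists,\not=}$, and since $s \in S$ was arbitrary, the desired inclusion $S \subseteq \langle R\rangle_{\not\exists,\not=}$ follows.

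There is essentially no obstacle beyond the column-equality observation; the argument is a one-line syntactic reduction of the defining formula.  Note that only the $=$-irredundancy part of Definition~\ref{defn:irrednt} is invoked.  The $\top$-irredundancy half of the hypothesis ensures that after the above simplification every free variable still occurs in some $R$-constraint rather than sitting as a dummy (so that the formula really witnesses $s$ as built from $R$ rather than from a lower-arity relation padded with $A$), but it is not needed to produce the equality-free formula itself.
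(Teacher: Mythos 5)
The paper does not actually prove this lemma---it is imported without proof from Schnoor's thesis \cite[Proposition~3.11]{ISch}---so there is no in-text argument to compare against; your proof is correct and is the standard argument one would expect behind the citation. The one point worth tightening is your closing remark that $\top$-irredundancy ``is not needed to produce the equality-free formula itself'': it is in fact needed for the degenerate case, since if every conjunct of $\phi$ were a (necessarily trivial) equality, deleting them would leave the empty conjunction, which is not an equality-free conjunct-atomic formula under the paper's definition (it requires $m\ge 1$ conjuncts); $\top$-irredundancy excludes this because it would force $s=A^k$, whose columns are exactly those of $A^{k-1}\times A$.
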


\begin{lem}\cite[Proposition~$2.2$]{Schnoor2}\label{lem:impred}
Let $R$ be a constraint language over a non-empty finite set $A$ and let $S\subseteq \langle R\rangle_{\not\exists, \not=}$.  Then there is polynomial-time reduction from $\EQUIV(S)$ to $\EQUIV(R)$ and from $\IMPL(S)$ to $\IMPL(R)$.
\end{lem}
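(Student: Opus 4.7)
The plan is to adapt the construction used in Lemma~\ref{lem:CAEquivSolns}, applied in parallel to both components of the input pair. Specifically, given a pair $(I_1, I_2)$ of $S$-instances on a common variable set $V$, I would produce $R$-instances $(I_1', I_2')$ by replacing each $S$-constraint with the conjunction of $R$-constraints prescribed by its equality-free conjunct-atomic definition over $R$. Since the defining formulas are quantifier-free, this construction runs in polynomial time; since they are equality-free, no fresh variables and no variable identifications are required, so $I_1'$ and $I_2'$ remain on the same variable set $V$ as the originals.

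The central observation, established inside the proof of Lemma~\ref{lem:CAEquivSolns}, is that the solutions of an instance are unchanged when each of its constraints is replaced by the conjunction of the atomic constraints from an equality-free quantifier-free defining formula: a map $\varphi\colon V\to A$ satisfies the $S$-constraint $\langle(v_1,\dots,v_\ell), r^{\mathbb A}\rangle$ precisely when it satisfies every $R$-constraint in the defining conjunction. Applied to each of $I_1$ and $I_2$ separately (using the same defining formula for each occurrence of a given relation symbol from $S$), this gives that the solution set of $I_i$ in $\CSP(\langle A;S\rangle)$ coincides with the solution set of $I_i'$ in $\CSP(\langle A;R\rangle)$ for $i \in \{1,2\}$.

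The two reductions then drop out immediately. For $\EQUIV$: the pair $(I_1,I_2)$ is a YES instance of $\EQUIV(S)$ iff $I_1$ and $I_2$ have the same set of solutions, iff (by the previous paragraph) $I_1'$ and $I_2'$ have the same set of solutions, iff $(I_1',I_2')$ is a YES instance of $\EQUIV(R)$. The argument for $\IMPL$ is identical with ``equal'' replaced by ``contained in''.

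I do not expect a serious obstacle: the hypothesis $S\subseteq \langle R\rangle_{\not\exists,\not=}$ is precisely designed to eliminate the two phenomena that would break such a solution-preserving rewriting. Existential quantifiers would require introducing fresh witness variables, which would have to be coordinated across the pair $(I_1,I_2)$ and would change the assignments being quantified over in the $\EQUIV$/$\IMPL$ question; equality conjuncts would force variable identifications that are in general incompatible between the two instances. Ruling both out is exactly what lets the solution sets, and hence equivalence and implication, transfer verbatim.
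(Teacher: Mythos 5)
Your argument is correct, and it is essentially the intended one: the paper does not prove this lemma itself but cites it from \cite{Schnoor2}, and your proof is exactly the natural argument, namely applying the solution-preserving rewriting of Lemma~\ref{lem:CAEquivSolns} to each component of the pair and observing that equality-freeness and quantifier-freeness keep the variable set, and hence the solution sets, unchanged. No gaps.
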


\begin{thm}[$\mathsf{HS}$ Theorem for $\EQUIV$ and $\IMPL$]\label{thm:HSEQUIV}
Let ${\mathbb A}=\langle A; R\rangle$ and ${\mathbb B}= \langle B; S\rangle$ be templates. If $S$ is irredundant and there exist partial algebras ${\mathbf A}=\langle A; F^{\mathbf A} \rangle$ and ${\mathbf B}=\langle B; F^{\mathbf B} \rangle$ such that 
\begin{enumerate}[\quad \rm(1)]
\item $F^{\mathbf B} \subseteq \pPol(S)$,
\item ${\mathbf B}\in \mathsf{HS}(\mathbf A)$, and 
\item $\pPol(R)\subseteq F^{\mathbb A}$,
\end{enumerate} 
then there is a polynomial-time reduction from $\EQUIV(S)$ to $\EQUIV(R)$ and from $\IMPL(S)$ to $\IMPL(R)$.
\end{thm}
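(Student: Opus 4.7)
The proof will follow the template of Theorem~\ref{thm:HSGap}: chain polynomial-time reductions through the three hypotheses, and then move from $\langle R\rangle_{\not\exists}$ into $\langle R\rangle_{\not\exists,\not=}$. Two features simplify matters. Since $\EQUIV$ and $\IMPL$ are universally quantified over \emph{all} assignments, each intermediate reduction only needs to preserve the solution set of an instance up to an appropriate set-theoretic correspondence; no gap property has to be carried through the chain. Moreover, once we have transported relations from the domain $B$ of ${\mathbf B}$ to the domain $A$ of ${\mathbf A}$ and landed inside $\langle R\rangle_{\not\exists}$ via condition~(3), the irredundancy hypothesis on $S$ will let us promote this to $\langle R\rangle_{\not\exists,\not=}$ in one stroke via Lemmas~\ref{lem:irred} and~\ref{lem:impred}.

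Concretely, by~(2) fix a subalgebra ${\mathbf C}=\langle C;F^{\mathbf C}\rangle$ of ${\mathbf A}$ and a surjective homomorphism $\varphi\colon{\mathbf C}\onto{\mathbf B}$. In the style of Lemma~\ref{lem:homredn}, set $T:=\{\varphi^{\leftarrow}(r)\mid r\in S\}\subseteq\Inv(F^{\mathbf C})$ and reduce an $\EQUIV(S)$- or $\IMPL(S)$-instance $(I_1,I_2)$ to $(I_1',I_2')$ by replacing every constraint $\langle \vec{s},r\rangle$ with $\langle \vec{s},\varphi^{\leftarrow}(r)\rangle$. The surjectivity of $\varphi$ makes the induced map $\varphi_*\colon A^V\to B^V$ surjective and gives $\mathrm{Sol}(I_i')=\varphi_*^{-1}(\mathrm{Sol}(I_i))$, so $\mathrm{Sol}(I_1)\odot\mathrm{Sol}(I_2)\iff\mathrm{Sol}(I_1')\odot\mathrm{Sol}(I_2')$ for $\odot\in\{=,\subseteq\}$, handling both problems simultaneously. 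Next let $U:=T\cup\{C\}$, and note that $U\subseteq\Inv(F^{\mathbf A})$ via the inclusion argument of Lemma~\ref{lem:ParSubRed}. A further reduction adjoins the constraint $\langle(v),C\rangle$ for every variable $v$; this pins solutions down to $C^V$ and transfers $\EQUIV(T)$ and $\IMPL(T)$ to $\EQUIV(U)$ and $\IMPL(U)$.

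By condition~(3), $U\subseteq\Inv(F^{\mathbf A})\subseteq\Inv(\pPol(R))=\langle R\rangle_{\not\exists}$. The main technical step, and the only place the irredundancy hypothesis is actually put to work, is to check that $U$ is irredundant so that Lemma~\ref{lem:irred} delivers $U\subseteq\langle R\rangle_{\not\exists,\not=}$, after which Lemma~\ref{lem:impred} produces the desired reductions $\EQUIV(U)\to\EQUIV(R)$ and $\IMPL(U)\to\IMPL(R)$. We may assume $C\subsetneq A$ (the case $C=A$ making the subalgebra step redundant), so the unary relation $C$ is irredundant. Each $\varphi^{\leftarrow}(r)$, viewed over $A$, has every column contained in $C\subsetneq A$ and so is automatically $\top$-irredundant; its $=$-irredundancy is independent of the ambient set and follows from the $=$-irredundancy of $r$ together with the surjectivity of $\varphi$, since any witness tuple in $r$ with distinct $i$th and $j$th entries lifts to a tuple in $\varphi^{\leftarrow}(r)$ with the same property. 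Composing the three polynomial-time reductions then yields the theorem; the irredundancy bookkeeping is the only subtle ingredient.
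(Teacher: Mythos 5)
Your proposal is correct and follows essentially the same route as the paper: the same chain $S \leadsto T=\{\varphi^{\leftarrow}(r)\}\leadsto U=T\cup\{C\}\leadsto \langle R\rangle_{\not\exists}$, followed by irredundancy, Lemma~\ref{lem:irred} and Lemma~\ref{lem:impred}; the paper writes out only the $\IMPL$ case and asserts irredundancy preservation, whereas you treat both problems at once via $\mathrm{Sol}(I_i')=\varphi_*^{-1}(\mathrm{Sol}(I_i))$ and actually verify the irredundancy. The only loose end is that your $\top$-irredundancy argument for $\varphi^{\leftarrow}(r)$ uses $C\subsetneq A$, so the degenerate case $C=A$ needs the separate (easy) observation that a $\top$-redundant column of $\varphi^{\leftarrow}(r)$ would push forward along the surjection $\varphi$ to a $\top$-redundant column of $r$.
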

\begin{proof}
We prove the result only for $\IMPL$ as the proof for $\EQUIV$ is similar.  

Condition ($1$) implies that $\Inv(\pPol(S)) \subseteq \Inv(F^{\mathbf B})$ and so $S\subseteq \langle S \rangle_{\not\exists} \subseteq \Inv(F^{\mathbf B})$. Condition~($2$) gives a surjective homomorphism $\varphi\colon {\mathbf C} \twoheadrightarrow {\mathbf B}$, where ${\mathbf C}=\langle C; F^{\bf C}\rangle$ is a subalgebra of ${\mathbf A}$.  

We first show that for some $T\subseteq \Inv (F^{\mathbf C})$, the problem $\IMPL(\langle B; S \rangle)$ is reducible to $\IMPL(\langle C; T \rangle)$ in polynomial-time.  We use the construction given in the proof of Lemma~\ref{lem:homredn}, except now there is no need for the inclusion of the relation $\ker(\varphi)$.  Let $T=\{\varphi^{\leftarrow}(r)\ |\ r \in S\}\subseteq \Inv (F^{\mathbf C})$.  Given an instance 
\[
I=(I_1=(V;B;{\mathcal C}_1), I_2=(V;B;{\mathcal C}_1))
\]
of $\IMPL(\langle B; S \rangle)$, we construct an instance 
 \[
 I'=(I'_1=(V;C;{\mathcal C}'_1), I'_2=(V;C;{\mathcal C}'_2))
 \]
of $\IMPL(\langle C; T\rangle)$ by letting ${\mathcal C}'_i=\{\langle s; \varphi^{\leftarrow}(r)\rangle\ |\ \langle s; r\rangle \in {\mathcal C}\}$, for each $i\in \{1,2\}$. Now assume that $I$ is a YES instance of $\IMPL(\langle B; \Gamma \rangle)$.  Consider a solution $\psi\colon V\to C$ of $I'_1$. Then $\varphi \circ \psi\colon V\to B$ is a solution of $I_1$ and is therefore a solution of $I_2$.  Then any function $\mu\colon V\to C$ satisfying $\varphi \circ \mu =\varphi \circ \psi$ is a solution of $I'_2$. In particular, $\psi$ is a solution of $I'_2$.  Conversely, assume that $I'$ is a YES instance of $\IMPL(\langle C; \Gamma'\rangle)$ and consider a solution $\phi\colon V\to B$ of $I_1$. Let $\nu\colon V\to C$ be any function satisfying $\varphi \circ \nu =\phi$.  Then $\nu$ is a solution of $I'_1$ and therefore also a solution of $I'_2$. It then follows that $\varphi \circ \nu=\phi$ is a solution of $I_2$, as required. 

We now show that for some $U\subseteq \Inv ( F^{\mathbf A})$, there is a polynomial-time reduction from $\IMPL(\langle C; T \rangle)$ to $\IMPL(\langle A; U\rangle)$. In the degenerate case that ${\mathbf C}={\mathbf A}$, we let $U=T$.  Otherwise we use the construction given in the proof of Lemma~\ref{lem:ParSubRed}. Let $U=T\cup \{C\}$. Given an instance $I=(I_1, I_2)$ of $\IMPL(\langle C; T \rangle)$, we construct an instance $I'=(I'_1, I'_2)$ of $\IMPL(\langle A; U\rangle)$ in the following way: for each $i\in \{1,2\}$, we let $\mathcal C'_i=\mathcal C\cup \{(v)\in B\ |\ v\in V\}$. The reduction immediately follows from the fact that the solutions of $I_i$ are identical to those of $I'_i$, for each $i\in \{1,2\}$. 

Now from ($3$), we have that $\Inv(F^{\mathbf A})\subseteq \Inv(\pPol(R^{\mathbb A}))=\langle R \rangle_{\not\exists}$. Thus, $U$ is a finite subset of $\langle R\rangle_{\not\exists}$. Observe that the constructions applied to the constraint languages $S$ and $T$ above preserve irredundancy, and so $U$ is also irredundant.  By Lemma~\ref{lem:irred}, it follows that $U$ is a finite subset of $\langle R\rangle_{\not\exists, \not=}$.  Then Lemma~\ref{lem:impred} gives the final reduction from $\IMPL(\langle A; U\rangle)$ to $\IMPL(\langle A; R\rangle)$. 
\end{proof}

Another fundamental aspect of the algebraic approach to $\CSP$'s is the ability to restrict to so-called idempotent polymorphisms; see \cite[Theorem~$4.7$]{BJK}.  A partial polymorphism $f\colon\dom(f)\to\mathbb{A}$ is \emph{idempotent} if $f(a,\dots,a)=a$ for every $a\in A$ for which  $(a,\dots,a)\in \dom(f)$. As a finial result, we show that when analysing the complexity of $\SEP$ problems over non-Boolean domains we may also restrict to idempotent partial polymorphisms, further reducing the number of constraint languages required for classification.   An analogous result for the $(2, {\mathcal F})$-$\Robust$ problem can be obtained but is beyond the scope of the present article. 

%Since we do not have a full classification of the complexity of the $\SEP$ problem in the case for non-Boolean domains at this stage, our result is useful as it provides a further reduction in the number of constraint languages required for classification. 

We first fix some useful notation.  For a template ${\mathbb A}=\langle A; \Gamma\rangle$, let $V_A$ be a copy of $A$ given by $\{v_a\ |\ a\in A\}$ (to be treated as variables), and let $\diag({\mathbb A})$ denote the equality-free positive atomic diagram of ${\mathbf A}$:
\[
\diag({\mathbb A})=\{ \langle(v_{a_1}, \dots, v_{a_n}), r\rangle\ |\ (a_1, \dots, a_n)\in r,\ \text{where $r$ is an $n$-ary relation in $\Gamma$}\}.
\]

Let $\Gamma_{\Con}$ denote the union of $\Gamma$ with the set of all singleton unary relations $\{(a)\}$ for  $a\in A$.
The proof of the following theorem is based on examples described by Jackson \cite{jack2}. 
\begin{thm}\label{thm:unaryred}
Let $\Gamma$ be a constraint language over a finite non-empty set $A$.  If ${\mathbb A}=\langle A; \Gamma\rangle$ is a core relational structure, then $\SEP(\Gamma)$ is equivalent to $\SEP(\Gamma_{\Con})$, with respect to a polynomial-time Turing reduction. 
\end{thm}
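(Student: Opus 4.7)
The direction $\SEP(\Gamma)$ to $\SEP(\Gamma_{\Con})$ is immediate: since $\Gamma\subseteq\Gamma_{\Con}$, the identity on instances is a valid Karp reduction. The content is the other direction. My plan is to Turing-reduce $\SEP(\Gamma_{\Con})$ to $\SEP(\Gamma)$ using the diagram $\diag({\mathbb A})$ together with the variable-substitution trick familiar from the $\CSP$ version of the statement, and then to clean up the residual separation requirements using additional oracle calls.

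Given an instance $I=(V;A;{\mathcal C})$ of $\SEP(\Gamma_{\Con})$, the first step is preprocessing: split ${\mathcal C}$ into singleton constraints ${\mathcal C}_1$ and non-singleton constraints ${\mathcal C}_0$, and let $S\subseteq V$ be the set of variables appearing in ${\mathcal C}_1$. If some variable has inconsistent singleton constraints, or if two distinct variables $u,w\in S$ are forced to the same value (so no solution can separate them), declare $I$ a NO instance. Otherwise, for each $v\in S$ let $a_v$ denote its unique forced value, and construct the auxiliary $\Gamma$-instance $I^{*}=(V^{*};A;{\mathcal C}^{*})$ where $V^{*}=(V\setminus S)\cup V_A$ with $V_A=\{v_a\mid a\in A\}$ a fresh copy of $A$, and where ${\mathcal C}^{*}$ consists of $\diag({\mathbb A})$ on $V_A$ together with every constraint of ${\mathcal C}_0$ with each $v\in S$ replaced by $v_{a_v}$.

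The core hypothesis is then invoked exactly as in \cite[Theorem~4.7]{BJK}: any solution $\phi^{*}$ of $I^{*}$ induces a homomorphism ${\mathbb A}\to{\mathbb A}$ via $a\mapsto\phi^{*}(v_a)$, which must be an automorphism $\sigma$ of ${\mathbb A}$. Composing with $\sigma^{-1}$ on $V\setminus S$ and setting $\phi(v)=a_v$ on $S$ yields a solution of $I$; conversely any solution $\phi$ of $I$ combined with any automorphism $\tau$ lifts to a solution of $I^{*}$. In particular the pairs of $V^{*}$ split into three kinds: pairs within $V\setminus S$ correspond bijectively to pairs of $V$ not both in $S$; pairs within $V_A$ are automatically separable by every solution (since $\sigma$ is a bijection); and mixed pairs $(v_a,w)$ with $w\in V\setminus S$ require a solution $\phi$ of $I$ with $\phi(w)\neq a$.

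The main obstacle is the last kind: the separation obligations of $\SEP(\Gamma)$ on $I^{*}$ are strictly stronger than those of $\SEP(\Gamma_{\Con})$ on $I$, because variables in $V\setminus S$ that are \emph{implicitly} forced to some constant $a$ by the joint effect of the constraints in ${\mathcal C}_0$ cannot be separated from $v_a$ in $I^{*}$, even when $I$ is a YES instance of $\SEP(\Gamma_{\Con})$. This is exactly what prevents a single Karp reduction and forces the use of Turing reducibility. To resolve it, for each pair $(v_a,w)$ of the third kind I would issue a supplementary $\SEP(\Gamma)$ query on a tailored modification $J_{a,w}$ of $I^{*}$, designed so that the separability of $w$ from $v_a$ (and only that) governs the oracle's answer; one then declares $I$ to be YES precisely when the original $\SEP(\Gamma)$ query on $I^{*}$ succeeds, or, failing that, all third-kind failures arise from values $a$ not attained as some $a_v$ with $v\in S$ (hence are irrelevant to the original $\SEP(\Gamma_{\Con})$ question). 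The delicate point, and the one where Jackson's worked examples in \cite{jack2} should provide the template, is the explicit construction of the gadgets $J_{a,w}$ that localise a single separation obligation within the inherently global $\SEP$ framework.
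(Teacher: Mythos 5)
Your setup and the easy direction are fine, and you have correctly isolated the real obstacle: after adjoining $\diag({\mathbb A})$, a variable $w$ that is implicitly forced to a value $a$ by the joint effect of the $\Gamma$-constraints cannot be separated from $v_a$, so a single instance $I^{*}$ does not faithfully encode the $\SEP(\Gamma_{\Con})$ question. But at exactly this point the proof stops: the gadgets $J_{a,w}$ that are supposed to ``localise a single separation obligation'' are never constructed, and you yourself flag their construction as the delicate step. This is a genuine gap, and not a routine one: a $\SEP(\Gamma)$ oracle answers a conjunction over \emph{all} pairs of variables, so to make its answer governed by the single pair $\{v_a,w\}$ you would have to render every other pair separable without disturbing the solutions relevant to $\{v_a,w\}$; nothing in the proposal indicates how to do this, and it is not clear that it can be done in general.

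The paper resolves the obstacle by a different mechanism that avoids localisation altogether. Let $U$ be the set of domain elements not occurring in any singleton constraint and let $V_U$ be the variables not constrained to a singleton. Instead of testing, pair by pair, whether $w$ is implicitly forced to $a$, one \emph{guesses} the entire pattern of such forcings: for every subset $S\subseteq U$ and every injection $\iota$ of $S$ into $V_U$, form the instance obtained from the diagram construction by identifying $\iota(a)$ with $v_a$ for each $a\in S$ (and then eliminating the resulting equality constraints). This family ${\cat F}$ has at most $O(|V|^{|A|+1})$ members, hence polynomially many oracle calls, and one declares $I$ a YES instance iff some member of ${\cat F}$ is a YES instance of $\SEP(\Gamma)$. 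Soundness holds because distinct variables of $V$ remain distinct under any of these identifications, so a separating solution of a member of ${\cat F}$ pulls back, through the inverse of the automorphism it induces on $V_A$, to a separating solution of $I$ (here the core hypothesis is used exactly as you describe). Completeness holds because when $I$ is a YES instance the true forcing pattern is necessarily injective (two variables forced to the same value could never be separated), so it occurs among the guesses, and the corresponding member of ${\cat F}$ is a YES instance of $\SEP(\Gamma)$. To complete your argument you should either replace the unspecified gadgets $J_{a,w}$ with this enumeration, or supply an explicit construction of $J_{a,w}$ together with a proof that its $\SEP$ status is controlled by the one intended pair; as written, the reduction from $\SEP(\Gamma_{\Con})$ to $\SEP(\Gamma)$ has not been established.
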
 

\begin{proof}
Every instance of $\SEP(\Gamma)$ is an instance of $\SEP(\Gamma_{\Con})$ as $\Gamma\subseteq \Gamma_{\Con}$, and is equivalently a YES instance (or NO instance) of both  problems.  Hence $\SEP(\Gamma)$ reduces to $\SEP(\Gamma_{\Con})$ in constant-time. 

We now show there is a polynomial-time Turing reduction from $\SEP(\Gamma_{\Con})$ to $\SEP(\Gamma)$. For a given instance $I=(V;A;{\mathcal C})$ of $\SEP(\Gamma_{\Con})$, we begin by constructing a new instance $I_A=(V';A;{\mathcal C}')$ of $\SEP(\Gamma_{\Con})$ in the following way: first let $V'$ be the disjoint union $V\cup V_A$ and construct ${\mathcal C}'$ by replacing every constraint $\langle v, \{(a)\}\rangle$ in ${\mathcal C}$ with $\langle (v,v_a), {=_{A}}\rangle$, and then taking the disjoint union with $\diag({\mathbb A})$. 

We will show there is a polynomial-time computable family ${\cat F}$ of $\Gamma$-instances satisfying the property that $I$ is a YES instance of $\SEP(\Gamma_{\Con})$ if and only if there exists $I'\in {\cat F}$ such that $I'$ a YES instance $\SEP(\Gamma)$. 

\emph{Case~$1$}.  If there exists $a\in A$ and constraints $\{\langle (v_1,v_a), {=_{A}}\rangle, \langle (v_2,v_a), {=_{A}}\rangle\}\subseteq {\mathcal C}'$ with $v_1\not=v_2$, then $I$ is automatically a NO instance of $\SEP(\Gamma_{\Con})$ as any satisfying solution of $I$ must identify $v_1$ and $v_2$.  In this case, we let ${\cat F}$ be the set containing a single fixed NO instance of $\SEP(\Gamma)$. 

\emph{Case~$2$}.  For every $a\in A$ at most one variable in $V$ is constrained to $\{(a)\}$. In this case there may be some $a\in A$ for which $\{(a)\}$ does not appear in any constraint in ${\mathcal C}$. Let $U$ be the set of all such ``unused" elements in $A$ and let $V_U$ be the set of all variables in $V$ that are not constrained to any singleton unary relation.  We first examine the number of ways to extend $\mathcal{C}'$ to include singleton constraints from $U$ without falling back into Case~$1$.   Each member of ${\cat F}$ will be constructed from such an extension.  For every subset $S\subseteq U$ and injective function $\iota$ from $U$ into $V_U$, let $I^{\iota}=(V';A;{\mathcal C}'\cup\{\langle (\iota(a),v_a), {=_{A}} \rangle\mid a\in S\})$.  Let $m:=|V_U|$ and let $n_i$ be the number of $i$-element subsets of $U$, for each $i\in \{0, \dots, |U|\}$.  Let $k=\max\{n_i\mid i=0,\dots,|U|\} \leq 2^{|A|}$.  The number of injective functions from an $i$ element set to $V_U$ is the $i$-permutation ${}^mP_i\leq m^{|A|}$.  Then the total number of instances included in ${\cat F}$ is 
\[
\sum_{i\in \{0, \dots, |U|\}} {}^{m}\!P_i \times n_{i}\leq (m+1)\times m^{|A|}\times 2^{|A|}.
\]
As $|A|$ is bounded by a constant and $m\leq|V|$, there is a uniform polynomial bound on the number of instances included in ${\cat F}$ (in fact, the total number of instances in ${\cat F}$ is in $O(|V|^{|A|+1}$)).

% many possible ways of extending $I$ by constraining variables in $V$ to some of the relations in $\{(u)\ |\ u\in U\}$ (in such a way that at most one variable in $V$ is constrained to any given $\{(u)\}$).  
 
 Consider an extension $I^{\iota}=(V', A, {\mathcal C}'\cup\{\langle (\iota(a),v_a), {=_{A}} \rangle\mid a\in S\})$ of $I$. Apply the usual procedure to $I^{\iota}$ for eliminating all constraints involving the equality relation; the resulting variables will be denoted $V''$. When $\iota(a)=v$, it is notationally convenient to use both $v$ and $v_a$ to denote the same point of $V''$ after this identification: this is justified by the observation that in this construction there are no sequences of equalities of length more than one, so that distinct elements of $V$, as well as distinct elements of $V_A$, remain distinct after the identification.  In particular, this abuse of notation allows us to consider both $V_A$ and $V$ as (overlapping) subsets of~$V'$.  Let $\cat F$ be the set of these constructed instances. 

Assume there is some $I'\in \cat F$ such that $I'$ is a YES instance of $\SEP(\Gamma)$. So for every pair $\{v_1, v_2\}$ of distinct variables in $V''$ there is a separating solution $\varphi$ for that pair.  Since $\varphi$ preserves $\diag({\mathbb A})$, the restriction $\varphi\rest{V_A}$ determines an automorphism~$\alpha$ of ${\mathbb A}$, by way of $\alpha\colon a\mapsto b$ if $\varphi(v_a)=b$. Then the map $\alpha^{-1}\circ\varphi\rest{V}$ is a solution of $I$ separating the pair $\{v_1, v_2\}$. Hence $I$ is a YES instance of $\SEP(\Gamma_{\Con})$. 

Now assume that $I$ is a YES instance of $\SEP(\Gamma_{\Con})$.  We will show there exists some $I'$ in ${\cat F}$ such that $I'$ is a YES instance of $\SEP(\Gamma)$.  We first apply the following reflection procedure to $I_A$. For each $v\in V_U$, if every homomorphism from $I$ to $\langle A; \Gamma_{\Con}\rangle$ maps $v\in V_U$ to some $u\in U$, then we add the constraint $\langle (v,v_u), {=_{A}}\rangle$ to ${\mathcal C}'$.  Note that for distinct $v,v'\in V$ we cannot have both $\langle (v,v_u), {=_{A}}\rangle$ and $\langle (v',v_u), {=_{A}}\rangle$ added, as we are assuming there is a solution separating $v$ from $v'$.  Thus the instance constructed in this way coincides with one of the extensions from Case $2$. Find the $I'\in {\cat F}$ corresponding to this extension.  For every pair of distinct variables $\{v_1,v_2\}$ in $V$ there is a solution $\psi$ of $I$ separating $v_1$ and $v_2$.  Extend the separating solution $\psi$ for $\{v_1,v_2\}$ to a separating solution of $I'$ by sending $v_a$ to $a$.  Hence $I'$ is a YES instance of $\SEP(\Gamma)$. 
\end{proof}

\begin{defn}
Let $F$ be a set of finitary partial operation symbols and let ${\bf A}=\langle A ;F^{\bf A}\rangle$ be a finite partial algebra. We define the partial algebra
\[
{\bf A}_0:=\langle A; \id({[F^{\bf A}]}_p)\rangle,
\]
where $\id([F^{\bf A}]_p)$ consists of all idempotent operations from the strong partial clone generated by $F^{\bf A}$.  We refer to ${\bf A}_0$ as the \emph{full idempotent reduct of ${\bf A}$}. 
\end{defn}
Notice that $\Inv(\id([F^{\bf A}]_p)={\Inv(F^{\bf A})}_{\Con}$. The next result is an immediate consequence of Theorem~\ref{thm:unaryred}
\begin{cor}
Let ${\bf A}=\langle A; F^{\bf A}\rangle$ be a finite partial algebra and consider the full idempotent reduct ${\bf A}_0$ of ${\bf A}$. If ${\mathbb A}=\langle A; \Inv(F^{\bf A})\rangle$ is a core relational structure, then $\SEP(\Inv(F^{\bf A}))$ is equivalent to ${\SEP(\Inv(F^{\bf A})}_{\Con})$, with respect to a polynomial-time Turing reduction.
\end{cor}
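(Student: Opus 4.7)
The plan is to apply Theorem~\ref{thm:unaryred} directly, specialising the constraint language to $\Gamma := \Inv(F^{\bf A})$. The hypothesis of the corollary says exactly that ${\mathbb A} = \langle A; \Gamma\rangle$ is a core relational structure, so Theorem~\ref{thm:unaryred} delivers a polynomial-time Turing equivalence between $\SEP(\Gamma)$ and $\SEP(\Gamma_{\Con})$. This is already the content of the corollary once the notation is matched up.

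To reconcile the notation, observe that the corollary's target problem $\SEP(\Inv(F^{\bf A})_{\Con})$ is literally $\SEP(\Gamma_{\Con})$. The full idempotent reduct ${\bf A}_0$ is then invoked only through the preceding observation that $\Inv(\id([F^{\bf A}]_p)) = \Inv(F^{\bf A})_{\Con}$, so that the constraint language on the right-hand side is indeed the invariants of ${\bf A}_0$ and coincides with $\Gamma_{\Con}$. Thus passing through ${\bf A}_0$ returns the same constraint problem as the one produced by Theorem~\ref{thm:unaryred}.

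There is essentially no obstacle, since the corollary is a restatement of Theorem~\ref{thm:unaryred} for the invariant relations of a partial algebra. If any step deserves a sentence of justification, it is the identity $\Inv(\id([F^{\bf A}]_p)) = \Inv(F^{\bf A})_{\Con}$: a partial operation in $[F^{\bf A}]_p$ is idempotent precisely when it preserves every singleton unary relation $\{(a)\}$ on $A$, and the Galois connection of Theorem~\ref{thm:strgweak} converts this into the stated equality of invariant relation sets. With this identification in hand, the corollary follows at once from Theorem~\ref{thm:unaryred}.
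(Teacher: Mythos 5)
Your proposal is correct and matches the paper's treatment: the paper also derives the corollary as an immediate consequence of Theorem~\ref{thm:unaryred} applied with $\Gamma=\Inv(F^{\bf A})$, having already recorded the identity $\Inv(\id([F^{\bf A}]_p))=\Inv(F^{\bf A})_{\Con}$ just before the statement. Your extra sentence justifying that identity via preservation of singleton unary relations is a harmless (and welcome) elaboration of what the paper leaves implicit.
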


\section{conclusion}
The results obtained demonstrate that gap properties can prove very powerful in the study of the complexity of constraint problems. In particular, the Gap Trichotomy Theorem provides dichotomies for an entire family of constraint problems on Boolean domains.  
Gap properties also facilitated an algebraic approach to the study of constraint problems amenable to partial polymorphism analysis: validity of most of the reductions relied intrinsically on the assumption of a gap property.  A possible future direction for work might be to extend this analysis to a tame-congruence theoretic approach. This would require a reworking of tame-congruence theory to adjust for partial algebras.  Could such an analysis lead to algebraic dichotomy conjectures for new variants of the constraint satisfaction problem? 

In future work, our focus will extend beyond ${\mathbf P}$ versus $\mathbf{NP}$ dichotomies to a finer-grained analysis of the complexity of the reductions given above.  While we did not develop these directions in the present article, arguments in the style of Larose and Tesson \cite{LTess} can be used to perform all of the reductions in log-space, and most using first order reductions.  Details for tightening the reductions will given in a subsequent paper, where we examine natural extensions of the $\SEP$ and ($2, {\mathcal F}$)-$\Robust$ conditions and their degree of compatibility with the operator~$\langle-\rangle$.  A gap dichotomy of the style given in the present article is believed to be true for these problems, however more technical machinery is required. 

\subsection*{Acknowledgements.}
The author is indebted to Marcel Jackson for many useful suggestions and discussions, as well as her PhD supervisors Brian Davey and Tomasz Kowalski, for their guidance and feedback. 

%%%%%%%%%%%%%%%%%%%%%%%%%%%%%%%%%%%%%%%%%%%%%%%%%%%%%%%%%%%%%%%%%%
%%%%%%%%%%%%%%%%%%%%%%%%%%%%%%%%%%%%%%%%%%%%%%%%%%%%%%%%%%%%%%%%%%
\bibliographystyle{plain}

%%%%%%%%%%%%%%%%%%%%%%%%%%%%%%%%%%%%%%%%%%%%%%%%%%%%%%%%%%%%%%%%%%
%%%%%%%%%%%%%%%%%%%%%%%%%%%%%%%%%%%%%%%%%%%%%%%%%%%%%%%%%%%%%%%%%%
\end{document}